\documentclass[a4paper,11pt]{amsart}
\usepackage{amsmath,amsthm,amssymb,amsfonts,enumerate,color,esint,bm}
\usepackage[pdftex]{graphicx}
\usepackage{float}
\usepackage{tikz}
\usepackage{soul}
\usepackage{tabu}
\usetikzlibrary{decorations.pathreplacing}
\usepackage{subcaption}
\usepackage{placeins}

\usepackage[colorlinks=true, allcolors=blue]{hyperref}
\usepackage{amsrefs}

\newcommand{\R}{\mathbb{R}}
\renewcommand{\H}{\mathcal{H}}

\def\({\left(}
\def\){\right)}

\renewcommand{\u}{\bm{u}}
\newcommand{\f}{\bm{f}}
\newcommand{\x}{\bm{x}}
\renewcommand{\b}{\bm{b}}

\newtheorem{thm}{Theorem}[section]
\newtheorem*{thm*}{Theorem}

\newtheorem{lem}[thm]{Lemma}

\theoremstyle{definition}

\newtheorem{rem}[thm]{Remark}

\usepackage[normalem]{ulem}
\usepackage{soul}
\definecolor{nicegreen}{rgb}{0.0, 0.5, 0.0}

\numberwithin{equation}{section}

\allowdisplaybreaks

\author[P. Seleson]{\href{https://pabloseleson.ornl.gov/}{Pablo Seleson}\textsuperscript{*}}

\address[P. Seleson]{Computer Science and Mathematics Division\\
Oak Ridge National Laboratory\\
P.O. Box 2008, MS-6013\\
Oak Ridge, TN 37831\\ USA}
\email{selesonpd@ornl.gov}

\author[P. R. Stinga]{\href{https://pabloraulstinga.github.io/}{Pablo Ra\'ul Stinga}\textsuperscript{*}}

\address[P. R. Stinga]{Department of Mathematics\\
Iowa State University\\
396 Carver Hall\\
 Ames, IA 50011\\ USA}
\email{stinga@iastate.edu}

\author[M. Vaughan]{\href{https://maryvaughan.github.io/}{Mary Vaughan}\textsuperscript{*}}

\address[M. Vaughan]{Department of Mathematics\\
Texas State University\\
MCS 470\\ 
San Marcos, TX 78666 \\ USA}
\email{vaughan@txstate.edu}

\thanks{{*}Authors are listed alphabetically.\\
This manuscript has been authored in part by UT-Battelle, LLC, under contract DE-AC05-00OR22725 with the US Department of Energy (DOE). The US government retains and the publisher, by accepting the article for publication, acknowledges that the US government retains a nonexclusive, paid-up, irrevocable, worldwide license to publish or reproduce the published form of this manuscript, or allow others to do so, for US government purposes. DOE will provide public access to these results of federally sponsored research in accordance with the DOE Public Access Plan ( https://www.energy.gov/doe-public-access-plan ).
}

\keywords{
Peridynamics, bond failure, bond breaking, crack propagation, fracture}

\subjclass[2020]{Primary: 
74A70, 
74R10, 
74A45. 
Secondary: 
45K05, 
65R20. 
}

\begin{document}

 \title[Bond failure in peridynamics]{Bond failure in peridynamics: Nonequivalence of critical stretch and critical energy density criteria}

\begin{abstract}
This paper rigorously analyzes bond failure in the peridynamic theory of solid mechanics, which is a fundamental component of fracture modeling. 
We compare analytically and numerically two common bond-failure criteria:~{\em critical stretch} and~{\em critical energy density}. In the former, bonds fail when they stretch to a critical value, 
 whereas in the latter, bonds fail  
when the bond energy density exceeds a threshold. By focusing the analysis on bond-based models, 
we prove mathematically that the critical stretch criterion and the critical energy density criterion are not equivalent in general and result in different bond-breaking and fracture phenomena. 
Numerical examples showcase the striking differences between the effect of the two criteria on crack dynamics, including the crack tip evolution, crack propagation, and crack branching.  
\end{abstract}

\maketitle

\section{Introduction}

Modeling material failure and damage is an ongoing challenge in computational science and engineering. 
Computational solid mechanics has been remarkably successful in the modeling of material behavior, especially based on the finite element method~\cite{Liu-et-al-2022}. However, the continuity assumptions of the basic equations of classical continuum mechanics require special techniques to model material discontinuities, such as propagating cracks.

Peridynamics~\cite{Silling, Sillingetal2007} is a nonlocal formulation of classical continuum mechanics, where the mathematical equations are reformulated to accommodate 
material  
discontinuities. The nonlocality of peridynamics is expressed in   
that material points interact with surrounding material points within a finite neighborhood. The connections between them are referred to as {\em bonds}. Peridynamic governing equations replace spatial differential operators 
with 
integral operators that do not require 
differentiability assumptions, 
 thus allowing the representation of material separation. 
Connections between peridynamics and classical theory  have been established. 
For elastic materials, peridynamics was shown to converge to classical elasticity under suitable differentiability assumptions as nonlocality vanishes~\cite{Silling-Lehoucq-2008}, and a link between peridynamics and classical linear elastic fracture mechanics was presented in~\cite{Lipton-2014}. 

To model material separation, such as fracture, peridynamics introduces the concept of {\em bond failure}, which means that a bond is no longer able to transfer forces between points. This occurs when the bond reaches or exceeds the threshold defined by the failure criterion, which results in bond breaking. 
It is common to assume that bonds break irreversibly:~once a bond is broken, it remains broken for the remainder of the simulation.  
The incorporation of bond breaking into peridynamic computations enables the modeling of crack initiation and propagation without the need for an external crack law. Therefore, the choice of bond-failure criterion is essential for properly modeling fracture 
in peridynamics. 

The most commonly used bond-failure criteria 
 in peridynamics are the~{\em critical stretch}~\cite{SillingAskari} and the~{\em critical energy density}~\cite{FosterSillingChen}. 
The critical stretch criterion  
establishes that bonds fail 
once they stretch to a critical value.\footnote{In~\cite{SillingAskari}, bonds remain unbroken as long as their stretch is below the critical stretch, implying that a bond breaks once it reaches this limit. 
However, in later works, the critical stretch criterion often assumes that bonds break once their stretch exceeds this limit.
}
The critical energy density criterion, in contrast,  
states that bonds fail 
once their energy density exceeds a threshold.  
The critical value for bond failure  
for each criterion can be determined 
by calculating the total energy per unit fracture area of bonds crossing a fracture surface (assuming all those bonds are in their critical failure state) and comparing this quantity to the fracture energy; see Sections \ref{sec:criteria3D} and \ref{sec:criteria2D} for calculations in three and two dimensions, respectively. 
Because one criterion relies on bond stretch and 
the other relies on bond energy density, 
the question of which criterion should be implemented into a model naturally arises. 

In this paper, we meticulously investigate the analytical differences between the critical stretch and critical energy density criteria and present numerical comparisons. 
In particular, we prove that the two criteria are not equivalent in general. 
To the best of the authors' knowledge, 
this is the first mathematically rigorous comparison of the two criteria to appear in the literature. 
A numerical comparison between the critical stretch criterion and the critical energy density criterion was presented in~\cite{Dipasqualeetal2017} for two-dimensional mixed-mode I-II fracture cases, but 
neither analytical investigations nor deep numerical comparisons between these criteria were studied. 

We focus our work on bond-based peridynamics~\cite{Silling} and employ a model for brittle elastic materials based on a generalization of the commonly used prototype microelastic brittle (PMB) material model~\cite{SillingAskari}.    
The model utilizes a material-dependent influence function $\omega(r)$ that depends only on the bond length ${r>0}$ and helps determine how much each bond contributes to the peridynamic internal force density. 
Moreover, we consider both two- and three-dimensional peridynamic formulations. Details on the model are presented in Section~\ref{sec:model}. 

We recast the critical energy density criterion as a critical stretch criterion with a bond-dependent critical stretch; see Lemmas \ref{lem:energy-to-stretch} and \ref{lem:energy-to-stretch2D} for three- and two-dimensional formulations,  respectively. 
This enables direct comparison between the two criteria 
via the 
 corresponding critical stretch expressions and leads to our main results, Theorems \ref{thm:main3D} and~\ref{thm:main2D}. 
 Here is an informal version of those results. 

\begin{thm}\label{thm:intro}
In the generalized PMB model, 
the critical stretch criterion and the critical energy density criterion coincide if and only if the influence function is $\omega(r) = \beta r^{-1}$ for some $\beta>0$.
\end{thm}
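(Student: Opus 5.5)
We set up the generalized PMB model and compute each criterion's critical value from the fracture energy, then compare them bond by bond. We work in three dimensions; the two-dimensional case is identical with different constants.

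\textbf{Setup.} In the generalized PMB model the pairwise force has magnitude $c\,\omega(r)\,s$, where $s$ is the bond stretch. The micropotential (bond energy density) is therefore
\[
w(r,s)=\tfrac12 c\,\omega(r)\,r\,s^2 .
\]
The micromodulus $c$ is fixed by matching the bulk modulus, i.e.\ by normalizing against $\int_0^\delta \omega(r)\,r^4\,dr$.

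\textbf{Critical stretch.} Under the critical stretch criterion every bond breaks at the same value $s_0$. Equating the energy per unit crack area of the bonds crossing a plane to the fracture energy $G_0$ gives
\[
G_0=\int_0^\delta\int_z^\delta\int_0^{\cos^{-1}(z/\xi)} w(\xi,s_0)\,\xi^2\sin\phi\,d\phi\,d\xi\,dz .
\]
This determines $s_0$ as an explicit functional of $\omega$.

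\textbf{Critical energy density.} Here every bond fails when $w$ reaches a common threshold $w_0$, and the same area calculation with $w\equiv w_0$ gives $G_0 = \pi w_0\delta^4/4$, independent of $\omega$. By Lemma~\ref{lem:energy-to-stretch} this is equivalent to a bond-dependent critical stretch
\[
s_c(r)=\sqrt{\frac{2w_0}{c\,\omega(r)\,r}} .
\]

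\textbf{Comparison.} Since failure events are governed only by the stretch thresholds, the two criteria coincide exactly when $s_c(r)=s_0$ for all $r\in(0,\delta]$.

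\emph{Only if.} Suppose $s_c(r)=s_0$ for all $r$. Then $\omega(r)\,r$ equals the constant $2w_0/(c\,s_0^2)$, so $\omega(r)=\beta r^{-1}$ with $\beta>0$.

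\emph{If.} Let $\omega=\beta r^{-1}$. Then $w(r,s)=\tfrac12 c\beta s^2$ no longer depends on $r$, so "$w$ reaches $w_0$" is the same event as "$s$ reaches $s_0$", with $w_0=\tfrac12 c\beta s_0^2$. It remains to check that the fracture-energy calibrations are consistent, so that $w_0$ computed from $G_0$ equals $\tfrac12 c\beta s_0^2$ with $s_0$ computed from $G_0$. When $w$ is independent of $\xi$, the calibration integral for $s_0$ reduces to the same geometric integral, $\pi\delta^4/4$, as in the energy-density calibration. The two calibrations therefore agree automatically.

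\textbf{Main obstacle.} The delicate point is the converse direction: one must rule out a non-constant $\omega(r)\,r$ giving the same breaking behavior. This needs a precise statement of "coincide" as identical sets of broken bonds for all deformations. Bonds of every length $r\in(0,\delta]$ can be stretched independently, for instance by homogeneous deformations combined with localized ones, so pointwise equality of the thresholds is necessary. We also need $\omega$ to be positive where it is defined, and we note that $\omega$ is only determined up to a set of measure zero.
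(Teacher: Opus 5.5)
Your argument is correct and is essentially the paper's proof: Lemma~\ref{lem:energy-to-stretch} reduces coincidence to $s_c(r)=s_0$ for all $r\in(0,\delta]$, which forces $\omega(r)r$ to be constant, while for $\omega=\beta r^{-1}$ the critical bond energy is length-independent, so both fracture-energy calibrations integrate a constant over the same region (exactly what the paper's ratio formula \eqref{eq:scs0} encodes). Two slips do not affect the logic, since $c$ cancels and the same geometric integral appears in both calibrations: in 3D the micromodulus \eqref{eq:c-3D} is normalized by $\int_0^\delta\omega(r)r^3\,dr$, not $\int_0^\delta\omega(r)r^4\,dr$ (the latter enters the $s_0$ calibration), and your displayed $G_0$ integral is missing the factor $\int_0^{2\pi}d\theta$ needed to produce $\pi\delta^4/4$.
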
 

In particular, we prove that the two criteria are \emph{not} equivalent in general.
This includes the original PMB model for which $\omega(r) \equiv 1$ \cite{SillingAskari}. 
Indeed, the \emph{only} case in which they are equivalent is when the influence function is exactly $\omega(r) = \beta r^{-1}$ for some $\beta>0$. 

Remarkably, when they are not equivalent, the two criteria yield  different bond-breaking and fracture behaviors. 
As an illustrative example, we consider influence functions given by
\begin{equation}\label{eq:omega}
\omega(r) = r^{-\alpha} \quad \hbox{for some}~\alpha < d+1,
\end{equation} 
where $d = 2,3$ is the spatial dimension. 
We show analytically that the choices of bond-failure criterion and of influence function dictate if either shorter or longer bonds will break first; see 
Theorem~\ref{thm:3Dbehavior} for the three-dimensional case and Theorem~\ref{thm:2Dbehavior} for the two-dimensional case. 
We informally write those results here. 

\begin{thm}\label{thm:intro2}
In the generalized PMB model, let $\omega(r) = r^{-\alpha}$ for $\alpha \not=1$.
\begin{enumerate}
\item Assume that $\alpha <1$. Shorter bonds that break under the critical stretch criterion may not break under the critical energy density criterion. Longer bonds that break under the critical energy density criterion may not break under the critical stretch criterion. 
\item Assume that $1< \alpha < d+1$. Longer bonds that break under the critical stretch criterion may not break under the critical energy density criterion. Shorter bonds that break under the critical energy density criterion may not break under the critical stretch criterion. 
\end{enumerate}
\end{thm}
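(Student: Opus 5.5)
Before the formal proofs of Theorems~\ref{thm:3Dbehavior} and~\ref{thm:2Dbehavior} appear in later sections, here is how I would establish Theorem~\ref{thm:intro2}. The plan is to reduce both criteria to a comparison of two critical stretches as functions of the bond length $r=|\xi|$.

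In the generalized PMB model the pairwise force is $c\,\omega(r)\,s$ along the bond. The bond energy density is therefore
\[
w(\xi)=\tfrac12 c\,\omega(r)\,s^2 r .
\]
The critical energy density criterion ($w\ge w_c$) is thus equivalent, by Lemmas~\ref{lem:energy-to-stretch} and~\ref{lem:energy-to-stretch2D}, to a stretch criterion $s\ge s_e(r)$ with
\[
s_e(r)=\sqrt{\frac{2w_c}{c\,\omega(r)\,r}} .
\]
For $\omega(r)=r^{-\alpha}$ this gives $s_e(r)=K\,r^{(\alpha-1)/2}$, where $K>0$ depends on $c$, $w_c$, the horizon $\delta$ and $d$.

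The critical stretch criterion uses a constant $s_c$, obtained by matching the fracture energy $G_0$ as in Sections~\ref{sec:criteria3D} and~\ref{sec:criteria2D}. The same calibration for $w_c$ fixes $K$. The integrability condition $\alpha<d+1$ is exactly what makes these calibrating integrals over the horizon finite.

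The key step is a crossing argument. Both criteria are calibrated to the same $G_0$, so the integral identities defining $s_c$ and $w_c$ force $s_e(r)-s_c$ to change sign on $(0,\delta)$. If $s_e$ stayed above (or below) $s_c$ everywhere, the energy per unit crack area computed with $s_e$ would strictly exceed (or fall below) that computed with $s_c$, contradicting that both equal $G_0$. This uses strict monotonicity of the integrand in the stretch. Since $s_e$ is a strictly monotone power of $r$ when $\alpha\neq1$, there is a unique $r^*\in(0,\delta)$ with $s_e(r^*)=s_c$.

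The two cases then follow from the sign of the exponent $(\alpha-1)/2$.
\begin{itemize}
\item If $\alpha<1$, then $s_e$ is decreasing. For $r<r^*$ we have $s_e(r)>s_c$, so any stretch in $[s_c,s_e(r))$ breaks a short bond under the critical stretch criterion but not under the energy criterion. For $r>r^*$ we have $s_e(r)<s_c$, so stretches in $[s_e(r),s_c)$ break long bonds only under the energy criterion.
\item If $1<\alpha<d+1$, then $s_e$ is increasing and the roles reverse, which gives item (2).
\end{itemize}
To show "may not break" is genuinely attained, I would exhibit a deformation, for instance a homogeneous stretch of a bond of the appropriate length, realizing a stretch in the relevant interval.

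The main obstacle is the crossing step. I would first write out the two calibration integrals explicitly, with power-law weights in $r$, and verify the sign-change claim directly. Concretely, I would compare $\int r^{a}\,s_c^2$ with $\int r^{a}\,s_e(r)^2$ for the same geometric weight $r^{a}$ arising from counting bonds crossing a unit area; this is a standard computation giving $d$-dependent constants. The strict monotonicity of $s_e$ for $\alpha\neq1$ then finishes the argument, consistent with Theorem~\ref{thm:intro}, where $\alpha=1$ gives $s_e\equiv s_c$.
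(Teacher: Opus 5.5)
Your argument is correct, but its key step takes a different route from the paper's. The paper does not argue by contradiction. It substitutes $\omega(r)=r^{-\alpha}$ into \eqref{eq:scs0} (resp.\ \eqref{eq:scs0-2D}) to get the closed form $s_c^2(\|\bm\xi\|)=\frac{d+1}{d+2-\alpha}(\delta/\|\bm\xi\|)^{1-\alpha}s_0^2$. It then reads the crossover radius $r^*=\bigl(\frac{d+1}{d+2-\alpha}\bigr)^{1/(1-\alpha)}\delta$ directly off \eqref{eq:long-short} and \eqref{eq:long-short-2D}. Note that your notation is swapped relative to the paper's: your constant $s_c$ is the paper's $s_0$, your $s_e(r)$ is the paper's $s_c(\|\bm\xi\|)$, and your constant $K$ is not the bulk modulus.

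Your crossing argument uses instead that both $G_0$ calibrations integrate against the same positive radial weight, proportional to $r^d$ ($\pi r^3$ in 3D, $2hr^2$ in 2D). In your notation this gives
\[
\int_0^\delta r^{d}\,\omega(r)\,r\,\bigl(s_e(r)^2-s_c^2\bigr)\,dr=0,
\]
which forces a sign change, and strict monotonicity of $r\omega(r)$ makes the crossing unique and interior. This is valid, and it buys some generality. It works for any influence function with $r\omega(r)$ strictly monotone, not only power laws. It also shows without computation that $r^*\in(0,\delta)$, which the paper leaves to the reader.

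What it does not give is the value of $r^*$. Theorems~\ref{thm:3Dbehavior} and~\ref{thm:2Dbehavior} state $r^*$ explicitly, and the numerical discussion uses it: the thresholds $9\Delta x/4$ for $\alpha=0$ and $2\Delta x$ for $\alpha=2$. Your fallback of computing the calibration integrals explicitly is exactly the paper's route and would recover it.
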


Recasting the critical energy density criterion as a bond-dependent critical stretch criterion also allows for straightforward numerical implementation. We simply modify the expression of the critical stretch in the bond-breaking condition in peridynamics codes that already implement the critical stretch criterion. 

We confirm our mathematical results through several numerical experiments on two-dimensional problems, comparing the critical stretch and critical energy density criteria, using influence functions of the form \eqref{eq:omega}. 
First, we consider a simple isotropic extension, which showcases the strikingly different bond-breaking patterns for various influence functions, i.e.,~for different values of $\alpha$ (see Section~\ref{sec:isotropicextension}).
Then, we illustrate the notably distinct bond-breaking behavior exhibited by the two bond-failure criteria during the evolution of a crack tip for given influence functions (see Section~\ref{sec: Crack tip evolution}).  
Both of these examples substantiate Theorem \ref{thm:intro2}. 
Finally, we study crack propagation and branching, using an example of a pre-notched plate subjected to traction loading as an illustration.  
Our results confirm that there are undeniable differences when implementing the critical stretch criterion versus the critical energy density criterion. 
Indeed, taking $\alpha =2$, for the critical stretch criterion, the crack evolves as a single horizontal line, but for the critical energy density criterion, the crack branches. After increasing the traction, both criteria result in crack branching, but the crack paths are significantly different (see Section~\ref{sec:example3}). 
Taking instead $\alpha=0$, the different behaviors are apparent but more subtle. For instance, we find that the crack tip propagates slower under the critical stretch criterion than under the critical energy density criterion. 

Our analytical results and numerical simulations indicate major implications in modeling that need to be understood. 
It is not clear a priori which bond-failure criteria should be utilized and thus deeper mathematical and numerical investigations are required. 

\subsection{Literature review}

The original peridynamic formulation was based on pairwise interactions and is referred to as {\em bond-based peridynamics}~\cite{Silling}. This formulation was later extended to multibody interactions under {\em state-based peridynamics}~\cite{Sillingetal2007}, which eliminates modeling restrictions imposed by the bond-based formulation such as a fixed Poisson's ratio (see Remark~\ref{rem:poisson}). As indicated above, we focus on comparing the critical stretch and critical energy density criteria in bond-based peridynamics, a formulation that models the response of a bond independently of the state of other bonds, thus enabling the establishment of single-bond rules for bond breaking for both criteria. 
 
Variations of the critical stretch and critical energy density criteria exist in the literature.   
The commonly used critical stretch criterion assumes that the critical stretch of a given bond is independent of the state of 
other bonds. 
A generalization was presented in~\cite{SillingAskari}, in which the critical stretch depends on the minimum stretch among all bonds connected to a given point. 
Moreover, while the critical stretch criterion was initially proposed for a bond-based model, derivations of a critical stretch for a state-based model were presented in~\cite{Madenci-Oterkus-2014, Zhang-Qiao-2018, Aguiar-Patriota-2023}, and various approaches for incorporating critical stretch-based bond failure 
within a state-based model were discussed in~\cite{Karpenko-et-al-2020}. 
To account for material anisotropy, an orientation-dependent critical stretch criterion 
was presented in~\cite{Ghajari-et-al-2014} for the modeling of orthotropic media,  
 with applications to unidirectional fiber-reinforced composites, cortical bone, and polycrystalline microstructures. 
This criterion was also employed in~\cite{Ren-et-al-2018} for the modeling of fiber-reinforced composite laminates, and a similar approach was discussed in~\cite{Ren-et-al-2022}. 
Furthermore, a critical energy density criterion, similar to the one proposed in~\cite{FosterSillingChen} but based on the $J$-integral,  
 was presented in~\cite{Madenci-Oterkus-2016} with applications involving plasticity. 

For mathematical analysis, continuous damage formulations were proposed that enable the weakening of bond forces without resulting in immediate bond breaking. 
A modified critical stretch criterion that incorporates partial bond failure through linear bond degradation was proposed for a bond-based model in~\cite{Emmrich-Puhst-2016}, 
and a similar approach was presented in~\cite{Du-et-al-2018}. A bond-based model with softening behavior, but without irreversible bond breaking, was presented in~\cite{Lipton-2016}. This softening approach was extended to a state-based model that enables irreversible damage in~\cite{Lipton-et-al-2018}. State-based models extending those from~\cite{Du-et-al-2018} and~\cite{Lipton-et-al-2018} were presented in~\cite{Aguiar-Patriota-2023} with a focus on applications rather than on mathematical analysis; these models are related to one of the approaches discussed in~\cite{Karpenko-et-al-2020}.

Other bond-failure criteria have also been proposed. For example, several works featured 
 the critical shear angle criterion, which is based on bond rotation. 
This criterion was used for Mode~II (in-plane shear) fracture analysis in~\cite{Zhang-Qiao-2019} (where it is referred to as the critical skew criterion) and  for Mode~III (anti-plane shear) and torsional fracture analysis in~\cite{Oterkus-Madenci-2015}. 
To model fiber-reinforced composites, \cite{Oterkus-Madenci-2012} combined the critical stretch criterion for normal interlayer bonds with the critical shear angle criterion for shear bonds to simulate the interaction between neighboring plies in a laminate, thereby enabling Mode~I (tensile) and Mode~II failure, respectively.
To distinguish tensile cracks from shear cracks in mixed-mode~I-II fracture analysis of rock-like materials, the critical stretch and critical shear angle criteria were augmented, respectively, with dilatational and deviatoric deformation information in~\cite{Wang-et-al-2023}. 
In~\cite{Wang-et-al-2018}, a conjugated bond-pair-based peridynamic model was presented that uses the critical stretch criterion for Mode~I failure and the critical bond shear energy density criterion for Mode~II and~III failure. 
A combined bond-failure criterion was presented in~\cite{Madenci-et-al-2021}, under which a bond fails if it satisfies either the critical stretch criterion or the critical shear angle criterion. 
A similar concept, but based on the bond strain energy density, was presented in~\cite{Li-et-al-2025}:~a bond fails if the strain energy density associated with its normal or tangential deformations exceeds a threshold. 
An approach that compensates the critical energy density criterion with the critical shear angle criterion, to account for shear bonds that might otherwise be  overlooked, was presented in~\cite{Yang-et-al-2024}.  
Bond failure based on a critical angle between bond pairs was used in~\cite{O'Grady-Foster-2014} for a peridynamic Euler--Bernoulli beam model.  
 In the context of micropolar peridynamics~\cite{Gerstle-et-al-2007}, 
  bond-failure 
 criteria based on (tensile and compressive) critical stretch, 
critical shear angle, 
 and (tensile and compressive) frictional sliding were discussed in~\cite{Diana-Casolo-2019}. 
 To account for both tensile and shear damage, the critical deviatoric bond strain criterion was introduced in~\cite{Ren-et-al-shear}.
 
 Bond-failure criteria that incorporate classical concepts of stress or strain have also been proposed. 
 Criteria based on strain invariants---the equivalent strain (which measures the shearing strain) and the average volumetric strain---were discussed in~\cite{Warren-et-al-2009}. 
Stress-based criteria appeared in several works, as follows. 
The maximum  principal stress criterion for tensile failure and the Mohr-Coulomb failure criterion for shear failure were presented in~\cite{Zhou-Wang-2016}. The maximum principal stress criterion for tensile failure and the twin shear strength criterion for shear failure were employed in~\cite{Shou-et-al-2019}. The maximum and minimum principal stress criteria for tensile and compressive failure, respectively, were applied in~\cite{Dipasquale-et-al-2022}. Other proposed stress-based criteria include the Tsai-Hill criterion~\cite{Hattori-et-al-2018}, the incubation time criterion~\cite{Ignatiev-et-al-2020, Ignatiev-et-al-2021, Ignatiev-et-al-2023}, the mean stress criterion~\cite{Ignatiev-et-al-2023}, and the remote stress criterion~\cite{Ignatev-Oterkus-2025}. 
 An approach combining a stress-based failure criterion with a bond-failure criterion was proposed in~\cite{Song-et-al-2025}: the model checks whether the stress-based condition is met at a point before assessing the bond-failure condition for bonds connected to that point. 
Some of the approaches described in this paragraph were presented within the peridynamic correspondence modeling framework; this framework enables the direct incorporation of classical continuum mechanics constitutive models in peridynamics~\cite{Sillingetal2007}. 
An approach to integrate classical continuum damage laws within the peridynamic correspondence modeling framework was presented in~\cite{Tupek-etal-2013} 
by modifying the influence function based on the state of accumulated damage, demonstrated using the Johnson–Cook damage model for ductile damage in metals.   

Despite the wide variety of bond-failure criteria put forward in the literature, the critical stretch and critical energy density criteria remain the most broadly used approaches in peridynamic fracture simulations and thus constitute the focus of our analysis. 
 
\subsection{Organization of the paper}
 
The organization of this paper is as follows. 
In Section~\ref{sec:model}, we review the bond-based peridynamic formulation and a brittle elastic material model for two- and three-dimensional problems. 
In Sections \ref{sec:3D} and \ref{sec:2D}, for three- and two-dimensional problems respectively, we discuss the critical stretch and critical energy density criteria and present our main results: Theorems \ref{thm:main3D} and \ref{thm:main2D}. 
In Section~\ref{sec: numerical examples}, we present a numerical comparison between the two bond-failure criteria with various examples. A summarizing discussion is presented in Section~\ref{sec: discussion}.

\section{Bond-based peridynamic model for generalized PMB materials}\label{sec:model}

We consider two- and three-dimensional peridynamics problems. 
For the three-dimensional formulation, we let $\Omega \subset \mathbb{R}^3$ be a bounded domain. 
For the two-dimensional formulation, we still work in three spatial dimensions and consider a plane stress or plane strain body of thickness $h$ in the out-of-plane direction, represented by a two-dimensional domain $\Omega \subset \mathbb{R}^2$. 
We will refer to these as 3D and 2D problems, respectively.

Let $\u(\x,t)$ denote the displacement of a point $\x \in \Omega$ at time $t$. 
The bond-based peridynamic equation of motion introduced in \cite{Silling} is 
\begin{equation}\label{eq:pd}
\rho \ddot{\u}(\x,t) = \int_{\H_{\x}} \f(\u(\x',t) - \u(\x,t), \x',\x,t)\, dV_{\x'} + \b(\x,t),
\end{equation}
for the three-dimensional case, and 
\begin{equation}\label{eq:pd2D}
\rho \ddot{\u}(\x,t) = h\int_{\H_{\x}} \f(\u(\x',t) - \u(\x,t), \x',\x,t)\, dA_{\x'} + \b(\x,t),
\end{equation}
for the two-dimensional case (see~\cite{SelesonEtAl}), where $\rho$ is the mass density, $\ddot{\u}$ is the second derivative of ${\u}$ with respect to time, i.e., the acceleration, 
$\H_{\x}$ is a neighborhood of $\x$, 
$\f$ is the pairwise force function (with units of force per volume squared), and
$\b$ is a prescribed body force density.
We assume that
\[
\H_{\x} = \{\x' \in \Omega: \| \x' - \x \| \leq \delta \},
\]
where $\delta>0$ is the peridynamic horizon and $\|\cdot\|$ denotes the Euclidean norm.
We deal only with homogeneous bodies, but it is necessary to indicate separate dependence on $\x$, $\x'$, and~$t$ in the pairwise force function in \eqref{eq:pd} and \eqref{eq:pd2D} to account for bond failure. 

As is customary, we denote the relative position vector in the reference configuration and the relative displacement vector, respectively, by 
\[
\bm{\xi} := \x' - \x \quad \hbox{and} \quad \bm{\eta} := \u(\x',t) - \u(\x,t).
\] 
The vector $\bm{\xi}$ is also known as the \emph{bond}. 
With this, we can write $\f =\f(\bm{\eta}, \x',\x,t)$ in the new variables. 
The following additional restrictions on $\f$ are based on linear and angular momentum \cite{Silling}:
\[
\f(-\bm{\eta}, \x,\x',t) = - \f(\bm{\eta}, \x',\x,t)
\quad \hbox{and} \quad
(\bm{\xi}+ \bm{\eta}) \times \f(\bm{\eta}, \x',\x,t) = \bm{0}
\]
for all $\bm{\eta}, \x',\x \in \R^d$ and $t\geq0$.
Noting that $\bm{\xi} + \bm{\eta}$ is the relative position vector in the deformed configuration, the \emph{bond stretch} is a measure of strain defined as
\begin{equation}\label{eq:stretch}
s(\bm{\eta},\bm{\xi})
	:= \frac{\|\bm{\xi} + \bm{\eta}\| - \|\bm{\xi}\|}{\|\bm{\xi}\|}.
\end{equation}

\begin{rem}[Isotropic extension]\label{rem:isotropic}
Several times, we will consider a static (i.e., time-independent) problem involving a body under an isotropic extension given by $\u(\x) =  \bar{s} \x$ for some $\bar{s}>0$ constant. 
This results in the relation $\bm{\eta} = \bar{s} \bm{\xi}$ between the bond and its relative displacement as well as a constant stretch, $s(\bm{\eta},\bm{\xi}) = \bar{s}$, for all bonds~$\bm{\xi}\in \H\setminus\{\bm{0}\}$, where $\H:=\H_{\bm{0}}$.
\end{rem}

A bond-based peridynamic material is microelastic if there is a scalar-valued function $w = w(\bm{\eta},\x',\x,t)$, called the pairwise potential or micropotential, such that
\[
\f(\bm{\eta}, \x',\x,t) = \frac{\partial w}{\partial \bm{\eta}}(\bm{\eta}, \x',\x,t).
\] 
The micropotential $w(\bm{\eta},\x',\x,t)$ is the energy density (energy per unit volume squared) stored in the bond at time $t$. 
It can be shown that 
$w$ only depends on $\bm{\eta}$ through $\|\bm{\eta} + \bm{\xi}\|$ and
 there is a scalar-valued function $f$ such that
\[
\f(\bm{\eta},\x',\x,t) 
	= f(\bm{\eta}, \x',\x,t) \frac{\bm{\xi} + \bm{\eta}}{\|\bm{\xi} + \bm{\eta}\|},
\]
see~\cite{SelesonEtAl, Silling}.
The macroelastic energy density   
is given by
\begin{equation}\label{eq:W0}
W = W(\x,t) = \frac{1}{2} \int_{\H_{\x}} w(\bm{\eta}, \x',\x,t) \, dV_{\x'},  
\end{equation}
for the three-dimensional case, and 
\begin{equation}\label{eq:W02D}
W = W(\x,t) = \frac{1}{2} h \int_{\H_{\x}} w(\bm{\eta}, \x',\x,t) \, dA_{\x'},  
\end{equation}
for the two-dimensional case (see~\cite{SelesonEtAl}).

We will only be concerned with brittle elastic materials described by the generalized prototype microelastic brittle (GPMB) constitutive model introduced in \cite{SelesonParks,Seleson} (a generalization of the PMB model in \cite{SillingAskari})  given by 
\begin{equation}\label{eq:f-PMB}
f(\bm{\eta}, \x',\x,t) = 
\mu(\x',\x,t)\, c\, \omega(\|\bm{\xi}\|) s(\bm{\eta},\bm{\xi}),
\end{equation}
where $c$ is the micromodulus constant (or bond elastic constant), $\omega = \omega(r):(0,\delta] \to [0,\infty)$  is the influence function,
$s$ is the bond stretch given in \eqref{eq:stretch}, and $\mu \in \{0,1\}$ is a history-dependent, Boolean-valued bond-breaking function. 
In particular, $\mu(\x',\x,t) = 1$ if the bond is intact for all $0  < \tilde{t} \leq t$ and $\mu(\x',\x,t) = 0$ if the bond broke at some time $\tilde{t} \leq t$ (see \eqref{eq:mu-stretch} and \eqref{eq:mu-energy} below).
The corresponding micropotential is given by
\begin{equation}\label{eq:micropotential}
w(\bm{\eta}, \x',\x,t) = 
\frac{1}{2} \mu(\x',\x,t) \,c\,\omega(\|\bm{\xi}\|) s^2(\bm{\eta},\bm{\xi}) \|\bm{\xi}\|.
\end{equation}
We assume  $\omega(r)>0$ for $r \in (0,\delta]$. 
Note that $\omega \equiv 1$ corresponds to the PMB model in \cite{SillingAskari}.  

We will now determine the micromodulus constant $c$ in terms of material properties. For this, we separately consider two- and three-dimensional problems.

\begin{rem}\label{rem:poisson}
Bond-based peridynamics is restricted to modeling of materials with a Poisson's ratio of $ \nu=\frac14$ in three dimensions (see \cite{Silling}). In two-dimensional problems, bond-based peridynamic models are limited to a Poisson's ratio of $ \nu=\frac14$ for plain strain and $\nu=\frac13$ for plane stress (see \cite{Gerstle,Trageser-Seleson-2020}).
\end{rem}

\subsection{Three-dimensional problems}

For more details on the following discussion, see \cite{SillingAskari}. 

Consider a three-dimensional problem of a body under an isotropic extension, so recalling Remark~\ref{rem:isotropic}, 
$s(\bm{\eta},\bm{\xi}) = \bar{s}$ is constant for some $\bar{s}>0$. 
Assume also that no bonds are broken at time $t$, so that $\mu \equiv 1$ in \eqref{eq:micropotential}. 
With the explicit form of $w$ in \eqref{eq:micropotential}, we use spherical coordinates in \eqref{eq:W0} to find 
\begin{align*}
W 
	&= \frac{c \bar{s}^2}{4} \int_{\H}  \omega(\|\bm{\xi}\|)\|\bm{\xi}\|  \, dV_{\bm{\xi}} 
	=c \bar{s}^2 \pi \int_0^\delta  \omega(r)r^3 \, dr,
\end{align*}
for any point in the bulk of the body (i.e., farther than $\delta$ from the domain boundary), where $\H$ is defined in Remark~\ref{rem:isotropic}.  
On the other hand, the strain energy density in classical linear elasticity for the same isotropic extension is $W^{C} = 9K\bar{s}^2/2$, where $K$ is the bulk modulus.  
Assuming $W= W^{C}$, we get 
\begin{equation}\label{eq:c-3D}
c = \frac{9K}{\displaystyle 2 \pi \int_0^\delta  \omega(r)r^3 \, dr} \qquad \hbox{for a 3D problem}. 
\end{equation}

\subsection{Two-dimensional problems}\label{sec:2Dc}

For more details on the following discussion, see \cite{SelesonEtAl}. 

Consider a two-dimensional problem of a body in a state of plane strain or plane stress under an isotropic extension, so recalling Remark \ref{rem:isotropic}, 
$s(\bm{\eta},\bm{\xi}) = \bar{s}$ is constant for some $\bar{s}>0$. 
As above, but now using polar coordinates in \eqref{eq:W02D}, we get
\begin{align*}
W 
	&=  \frac{h c \bar{s}^2}{4} \int_{\H}  \omega(\|\bm{\xi}\|)\|\bm{\xi}\|  \, dA_{\bm{\xi}} 
	=\frac{h c \bar{s}^2 \pi}{2} \int_0^\delta  \omega(r)r^2 \, dr,
\end{align*}
for any point in the bulk of the body.

\subsubsection*{Plane strain} 
For a body in a state of plane strain and under an isotropic extension given by $\u(\x) =  \bar{s} \x$  for some $\bar{s}>0$, it is known from classical linear elasticity that the strain energy density 
for materials with a Poisson's ratio of $\nu=\frac14$ (recall Remark~\ref{rem:poisson})
is $W^{C\varepsilon} = 8E \bar{s}^2/5$, where $E$ is the Young's modulus.  
Assuming $W= W^{C \varepsilon}$, we find that
\begin{equation}\label{eq:c-planestrain}
c =\frac{16E}{5 \pi h \displaystyle \int_0^\delta  \omega(r)r^2 \, dr }
\qquad \hbox{for a 2D problem in a state of plane strain}.
\end{equation}

\subsubsection*{Plane stress} 
For a body in a state of plane stress and under an isotropic extension given by $\u(\x) =  \bar{s} \x$  for some $\bar{s}>0$, it is known from classical linear elasticity that the strain energy density
for materials with a Poisson's ratio of $\nu=\frac13$ (recall Remark~\ref{rem:poisson}) 
 is $W^{C\sigma} = 3E\bar{s}^2/2$.  
Assuming $W= W^{C \sigma}$, we find that
\begin{equation}\label{eq:c-planestress}
c=\frac{3E}{\pi h \displaystyle  \int_0^\delta  \omega(r)r^2 \, dr}
\qquad \hbox{for a 2D problem in a state of plane stress}.
\end{equation}

\begin{rem}\label{rem: h independence of model}
As noted in~\cite{SelesonEtAl}, although the expressions for the micromodulus constant~$c$ in~\eqref{eq:c-planestrain} and~\eqref{eq:c-planestress} depend on $h$,  in practice, $h$ does not appear in~\eqref{eq:pd2D} and~\eqref{eq:W02D} because, after substituting the expression of the GPMB pairwise force function (see~\eqref{eq:f-PMB}) and micropotential (see~\eqref{eq:micropotential}) into these equations, respectively, the product $hc$ emerges, which is $h$-independent.
\end{rem}

\section{Bond-failure criteria for three-dimensional problems}\label{sec:3D}

In this section, we analyze the two bond-failure criteria---critical stretch and critical energy density---in three dimensions.  
For consistency, for both criteria, we assume that bonds break once they reach a critical threshold. 

\subsection{Bond-failure criteria}\label{sec:criteria3D}

The bond-failure criteria are distinguished by their choice of Boolean-valued function $\mu$ in \eqref{eq:f-PMB} and \eqref{eq:micropotential}. 
In the critical stretch criterion \cite{SillingAskari}, there is a constant $s_0>0$ (the critical stretch) such that $\mu$ is given by
\begin{equation}\label{eq:mu-stretch}
\mu(\x',\x,t) = \mu_0(\x',\x,t):= \begin{cases}
1 & \hbox{if}~s(\widetilde{\bm{\eta}},\bm{\xi})   < s_0~\hbox{for all}~0 \leq \tilde{t} \leq t, \\
0 & \hbox{otherwise}, 
\end{cases}
\end{equation} 
where $\widetilde{\bm{\eta}} := \u(\x',\tilde{t}) - \u(\x,\tilde{t})$.  
On the other hand, in the critical energy density criterion \cite{FosterSillingChen}, there is a constant $w_c>0$ (the critical energy density) such that $\mu$ is given by 
\begin{equation}\label{eq:mu-energy}
\mu(\x',\x,t) = \mu_c(\x',\x,t): = \begin{cases}
1 & \hbox{if}~w(\widetilde{\bm{\eta}}, \x',\x,\tilde{t})  < w_c~\hbox{for all}~0 \leq \tilde{t} \leq t, \\
0 & \hbox{otherwise}. 
\end{cases}
\end{equation}
In both cases, we determine the critical values $s_0$ and $w_c$ in terms of material constants. 
Towards this end, we first describe the work required to break all bonds across a fracture surface.

\begin{figure}[htb]
\begin{tikzpicture}[scale=3.5]
\draw[fill=gray!20, line width=1.5pt] plot[domain=20:160] ({cos(\x)}, 0) -- plot[domain=20:160] ({cos(\x)}, {sin(\x)-sin(20)}) -- cycle; 
\draw[thick] (-1.15,0) -- (1.15,0); 
\draw (1.6,.01) node {\small Fracture surface}; 
\draw ({0}, {-sin(20)}) node[label={-5:$\x$}, circle, fill, scale=.3] {}; 
\draw[-latex, thick] ({0}, {-sin(20)}) -- ({cos(20)},0); 
	\draw (.7,-.15) node {$\delta$}; 
\draw[-latex, thick] ({0}, {-sin(20)}) -- ({.65*cos(30)}, {.65*sin(30)-sin(20)}); 
	\draw (.425,-.15) node {$r$}; 
\draw[thick, densely dotted] ({0}, {0}) -- (0,{-1}); 
	\draw [thick ,decoration={brace,amplitude=6pt}, decorate,] (.02,-.02)--(.02,{-sin(20)+.01});  
	\draw (.12,{-.5*sin(20)}) node {$z$};
\draw [thick ,decoration={brace,mirror,amplitude=6pt}, decorate,] (-.02,-.02)--(-.02,{-.99});  
	\draw (-.12,-.5) node {$\delta$};
\draw[thick,dashed] (0,0) -- (0,1); 
\draw [-latex, thick] (.155,.8) arc [start angle=0, end angle=340, x radius=4.5pt, y radius=1.25pt]; 
	\draw (.1,.9) node {$\theta$}; 
\draw[-latex, thick] plot[domain=90:32] ({.65*cos(\x)}, {.65*sin(\x)-sin(20)}); 
\draw (.45,.32) node {$\cos^{-1}\left(\frac{z}{r}\right)$}; 
\draw[-latex, thick] ({0}, {-sin(20)}) --  ({.75*cos(130)}, {.75*sin(130)-sin(20)}); 
	\draw ({.75*cos(130)}, {.75*sin(130)-sin(20)})  node[label={180:$\x'$}, circle, fill, scale=.3] {}; 
	\draw (-.25,-.15) node {$\bm{\xi}$}; 
\draw (-.2,.4) node {$\Omega_z$}; 
\end{tikzpicture}
\caption{Description of all bonds $\bm{\xi}$ connecting a point $\x$ located at a distance $z \in (0,\delta)$ below the fracture surface to points  $\x'$ in the spherical cap $\Omega_z$ above the fracture surface.}
\label{fig:integration}
\end{figure}
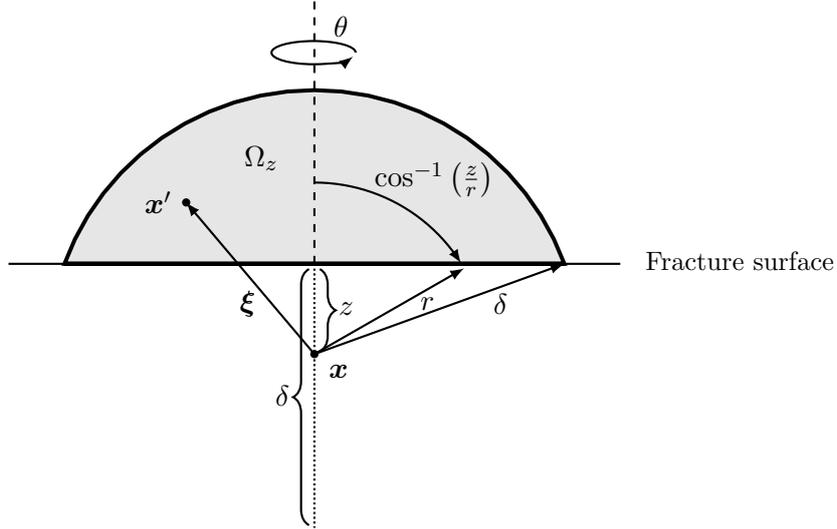

Split the body into two halves across a flat~fracture surface (see Figure \ref{fig:integration}). 
To physically separate the halves, we require that all bonds across the fracture surface break (i.e.,~reach the critical stretch $s_0$ or the critical energy density $w_c$). 
Consider a line segment of length $\delta$ below the fracture surface that is perpendicular to the fracture surface. 
Fix $z \in (0,\delta)$. 
Let $\x = \x(z)$ denote the unique point on the line at a distance $z$ from the fracture surface, and let $\Omega_z$ denote the spherical cap of radius $\delta$ centered at $\x$ above the fracture surface (see Figure \ref{fig:integration}). 
For fracture to occur, all bonds connecting $\x= \x(z)$ to points $\x'$ in $ \Omega_z$ must break for every $z \in (0,\delta)$. 
The corresponding energy per unit fracture area required to break all those bonds, or the energy release rate, is given by 
\begin{equation}\label{eq:G0}
G_0 
= \int_{0}^\delta \int_{\Omega_z} w_{\text{critical}}(\x',\x(z))  \, d V_{\bm{\x}'} \, d z,
\end{equation}
where $w_{\text{critical}} = w_{\text{critical}}(\x',\x)$ is the energy density stored in the bond $\bm{\xi} = \x'-\x$ when it reaches the critical stretch or the critical energy density according to $\mu$ in \eqref{eq:mu-stretch} or \eqref{eq:mu-energy}, respectively.

\subsubsection{Critical stretch criterion.}
Consider first the critical stretch criterion. 
If a bond is at the critical stretch $s_0$, then from \eqref{eq:micropotential}, the energy density $w_{\text{critical}}$ stored in the bond   (assuming the bond has not broken) is
\begin{equation}\label{eq:w0-critical}
w_{\text{critical}}(\x',\x) = w_0(\|\bm{\xi}\|) := \frac{1}{2} \,c\, \omega(\|\bm{\xi}\|)\,s_0^2\, \|\bm{\xi}\|. 
\end{equation}
Using \eqref{eq:G0} together with  \eqref{eq:w0-critical}, we conduct a standard computation using spherical coordinates  to find (see~\cite{SillingAskari}):
\begin{align*}
G_0
	&= \int_{0}^\delta \int_0^{2\pi} \int_z^\delta \int_0^{\cos^{-1}(z/r)} w_0(r) r^2 \sin (\phi )\, d \phi \, d r \, d \theta \, dz
	= \frac{\pi cs_0^2}{2} \int_{0}^\delta \omega(r)r^4 \, dr.
\end{align*}
Recalling \eqref{eq:c-3D}, we conclude that $s_0$ can be written in terms of material constants as
\begin{equation}\label{eq:s0}
s_0^2 = \frac{4G_0}{9K}
 \frac{\displaystyle  \int_0^\delta  \omega(r)r^3 \, dr}{ \displaystyle \int_{0}^\delta \omega(r)r^4 \, dr} \quad \hbox{for a 3D problem.}
\end{equation}
For $\omega(r)\equiv 1$, this  recovers the expression in \cite[Equation (27)]{SillingAskari}. 

\subsubsection{Critical energy density criterion.}
Next, consider the critical energy density  criterion. 
In this case, $w_{\text{critical}}\equiv w_c$ is constant.
Using $w_c$ in \eqref{eq:G0}, we apply a standard computation using spherical coordinates  to find (see~\cite{FosterSillingChen}):
\begin{align*}
G_0
	&=\int_{0}^\delta \int_0^{2\pi} \int_z^\delta \int_0^{\cos^{-1}(z/r)} w_c\, r^2 \sin (\phi) \, d \phi \, d r \, d \theta \, dz
	= \frac{\pi w_c \delta^4}{4}.
\end{align*}
Therefore, $w_c$ can be written in terms of material constants as
\begin{equation}\label{eq:wc}
w_c = \frac{4G_0}{\pi \delta^4} \quad \hbox{for a 3D problem.} 
\end{equation}

In the next lemma, we write the critical energy density criterion equivalently as a bond-dependent critical stretch criterion in which the critical stretch depends only on the  bond length.

\begin{lem}[3D critical energy density criterion as a critical stretch criterion]\label{lem:energy-to-stretch}
Given the pairwise force function \eqref{eq:f-PMB}, 
a bond $\bm{\xi} = \x'-\x \in \H \setminus \{\bm0\}$, and a time $t>0$, 
it holds that 
\begin{equation}\label{eq:energy-to-stretch}
w(\bm{\eta}, \x' ,\x,t)< w_c \quad \hbox{if and only if} \quad
s^2(\bm{\eta}, \bm{\xi}) < s_c^2(\|\bm{\xi}\|),
\end{equation}
 where 
\begin{equation}\label{eq:sc}
s_c^2(\|\bm{\xi}\|) 
	= \frac{16G_0 }{9K \delta^4 \omega(\|\bm{\xi}\|)\|\bm{\xi}\|} \int_0^\delta \omega (r) r^3 \, dr. 
\end{equation}
Consequently, $\mu_c$ in \eqref{eq:mu-energy} can be written as
\begin{equation}\label{eq:mu-energy2}
\mu_c(\x' ,\x,t) = \begin{cases}
1 & \hbox{if}~s^2(\widetilde{\bm{\eta}}, \bm{\xi}) < s_c^2(\|\bm{\xi}\|)~\hbox{for all}~0 \leq \tilde{t} \leq t, \\
0 & \hbox{otherwise}. 
\end{cases}
\end{equation}
\end{lem}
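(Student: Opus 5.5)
The plan is a direct algebraic manipulation of the micropotential \eqref{eq:micropotential}, using the explicit values of $c$ from \eqref{eq:c-3D} and $w_c$ from \eqref{eq:wc}. There is one subtlety: the bond-breaking function $\mu$ also appears in \eqref{eq:micropotential}. The equivalence \eqref{eq:energy-to-stretch} should be read for an intact bond, $\mu(\x',\x,t)=1$, because that is the only situation in which the criterion \eqref{eq:mu-energy} is tested. Indeed, $\mu_c$ is defined through $w$ evaluated at times $\tilde t$ before breakage, and at such times $\mu=1$. I will state this convention explicitly at the start: $w$ in \eqref{eq:energy-to-stretch} denotes the energy density the bond would store if unbroken, exactly as was done for $w_{\text{critical}}$ in \eqref{eq:w0-critical}.

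Under this convention, $w(\bm\eta,\x',\x,t)=\frac12 c\,\omega(\|\bm\xi\|)s^2(\bm\eta,\bm\xi)\|\bm\xi\|$. The assumptions give $c>0$ (since $K>0$ and $\omega>0$, so $\int_0^\delta\omega r^3\,dr>0$), $\omega(\|\bm\xi\|)>0$, and $\|\bm\xi\|>0$ because $\bm\xi\neq\bm0$. Hence the inequality $w<w_c$ is equivalent to
\[
s^2(\bm\eta,\bm\xi)<\frac{2w_c}{c\,\omega(\|\bm\xi\|)\|\bm\xi\|}.
\]
Dividing by a strictly positive quantity preserves the strict inequality in both directions, which gives the "if and only if." Substituting $w_c=4G_0/(\pi\delta^4)$ and $1/c=2\pi\int_0^\delta\omega(r)r^3\,dr/(9K)$, the right-hand side becomes
\[
\frac{2\cdot 4G_0\cdot 2\pi\int_0^\delta\omega(r)r^3\,dr}{\pi\delta^4\cdot 9K\,\omega(\|\bm\xi\|)\|\bm\xi\|},
\]
which simplifies to \eqref{eq:sc}.

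For \eqref{eq:mu-energy2}, I apply \eqref{eq:energy-to-stretch} at each time $\tilde t\in[0,t]$ with $\widetilde{\bm\eta}$ in place of $\bm\eta$. The quantity $s_c(\|\bm\xi\|)$ depends only on the reference bond and not on time, so the condition "$w<w_c$ for all $\tilde t$" is equivalent to "$s^2<s_c^2$ for all $\tilde t$." Comparing with \eqref{eq:mu-energy} then gives \eqref{eq:mu-energy2}.

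The only real obstacle is handling the dependence of $w$ on $\mu$ carefully, so that the argument is not circular. One way to do this is to argue by the first time of breakage: up to and including that time the bond is intact, so $\mu=1$ and the computation above applies at every relevant $\tilde t$. The rest of the argument is routine algebra.
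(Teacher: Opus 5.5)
Your proposal is correct and follows essentially the same route as the paper: take $\mu=1$ for an intact bond, substitute \eqref{eq:c-3D} and \eqref{eq:wc} into \eqref{eq:micropotential}, and divide by the positive quantity $c\,\omega(\|\bm\xi\|)\|\bm\xi\|$ to obtain \eqref{eq:sc}. Your explicit handling of the $\mu$-dependence of $w$, and your use of the time-independence of $s_c$ to deduce \eqref{eq:mu-energy2}, spell out details that the paper leaves implicit.
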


\begin{proof}
Fix a bond $\bm{\xi} \in \H \setminus \{\bm0\}$ and a time $t>0$, and
recall that $\omega(\|\bm{\xi}\|) >0$. Consider the condition 
\begin{equation}\label{eq:w-wc-inequality}
w(\bm{\eta}, \x',\x,t) < w_c.
\end{equation}
Recalling  \eqref{eq:micropotential}, \eqref{eq:c-3D}, \eqref{eq:wc}, and that $\mu=1$ for an intact bond, we have that the inequality \eqref{eq:w-wc-inequality} holds if and only if
\begin{align*}
\frac{1}{2}  \frac{9K}{\displaystyle 2 \pi \int_0^\delta  \omega(r)r^3 \, dr} \omega(\|\bm{\xi}\|) s^2(\bm{\eta},\bm{\xi})  \|\bm{\xi}\| < \frac{4G_0}{\pi \delta^4}. 
\end{align*}
Equivalently,
\[
s^2(\bm{\eta},\bm{\xi}) <  \frac{16 G_0}{9K \delta^4 \omega(\|\bm{\xi}\|)  \|\bm{\xi}\|}  \int_0^\delta  \omega(r)r^3 \, dr  = s_c^2(\|\bm{\xi}\|). 
\]
\end{proof}

\begin{rem}
From \eqref{eq:sc} and \eqref{eq:s0}, the quantities $s_c$ and $s_0$ are related in the following way:
\begin{equation}\label{eq:scs0}
s_c^2(\|\bm{\xi}\|)
	= \frac{4}{\delta^4  \omega(\|\bm{\xi}\|)\|\bm{\xi}\|} \left( \frac{4G_0 }{9K} \int_0^\delta \omega (r) r^3 \, dr\right)
	=\left( \frac{4}{\delta^4  \omega(\|\bm{\xi}\|)\|\bm{\xi}\|} \displaystyle \int_{0}^\delta \omega(r)r^4 \, dr \right) s_0^2.
\end{equation}
\end{rem}

\medskip

\subsection{Critical stretch vs.~critical energy density criteria}\label{sec:3Dcomparison}

We now present our main result. 

\begin{thm}[Nonequivalence of bond-failure criteria for 3D problems]\label{thm:main3D} 
 Given the pairwise force function \eqref{eq:f-PMB}, the bond-breaking functions $\mu_0$ and $\mu_c$ coincide if and only if 
there exists a constant $\beta>0$ such that $\omega(\|\bm\xi\|) = \beta \|\bm{\xi}\|^{-1}$ for all $\bm{\xi} \in \H \setminus \{\bm{0}\}$. 
In this case, 
the bond-breaking function $\mu$ can be written as
\[
\mu(\x',\x,t) = \mu_0(\x',\x,t) = \mu_c(\x',\x,t)
	= \begin{cases}
	1 & \hbox{if}~\displaystyle s^2(\widetilde{\bm{\eta}},\bm{\xi}) < s_0^2~\hbox{for all}~0 \leq \tilde{t} \leq t, \\[.5em]
	0 
	& \hbox{otherwise},
	\end{cases}
\]
where
\[
s_0^2 = s_c^2(\|\bm\xi\|) = \frac{16G_0 }{27K \delta}.
\]
\end{thm}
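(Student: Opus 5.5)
By Lemma~\ref{lem:energy-to-stretch}, $\mu_c$ is the critical stretch criterion with the bond-dependent threshold $s_c^2(\|\bm{\xi}\|)$. The plan is to reduce the statement to the identity $s_c^2(r)=s_0^2$ for all $r\in(0,\delta]$, and then show that this identity forces $\omega(r)=\beta r^{-1}$. Because $s(\bm{\eta},\bm{\xi})\ge -1$ and, as in the original criterion, only tensile bonds with $s\ge 0$ are relevant, the condition $s<s_0$ can be rewritten as $s^2<s_0^2$. With this convention, \eqref{eq:mu-stretch} and \eqref{eq:mu-energy2} have the same form, with thresholds $s_0^2$ and $s_c^2(\|\bm{\xi}\|)$.

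\emph{Sufficiency.} Suppose $s_c^2(r)=s_0^2$ for every $r$. Then $\mu_0=\mu_c$ by inspection. \emph{Necessity.} Suppose instead $s_c^2(r_*)\neq s_0^2$ for some $r_*\in(0,\delta]$. We construct a deformation on which the two functions differ. Take the isotropic extension of Remark~\ref{rem:isotropic}, with $\bar s$ chosen strictly between $s_0$ and $s_c(r_*)$, and apply it for all times. A bond of length $r_*$ then has constant stretch $\bar s$, so exactly one of the two criteria breaks it. Hence $\mu_0\neq\mu_c$ for that bond. So the two bond-breaking functions coincide if and only if $s_c^2(\|\bm{\xi}\|)\equiv s_0^2$ on $\H\setminus\{\bm 0\}$.

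Next, use the relation \eqref{eq:scs0}. The identity $s_c^2\equiv s_0^2$ is equivalent to
\[
\omega(r)\,r=\frac{4}{\delta^4}\int_0^\delta \omega(\rho)\rho^4\,d\rho \qquad\text{for all } r\in(0,\delta].
\]
The right-hand side is a positive constant $\beta$, so this says exactly that $\omega(r)=\beta r^{-1}$.

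Conversely, suppose $\omega(r)=\beta r^{-1}$ for some $\beta>0$. Then
\[
\frac{4}{\delta^4}\int_0^\delta \beta\rho^3\,d\rho=\beta,
\]
so the displayed identity holds automatically. This gives equivalence. In this case \eqref{eq:s0} becomes
\[
s_0^2=\frac{4G_0}{9K}\cdot\frac{\beta\delta^3/3}{\beta\delta^4/4}=\frac{16G_0}{27K\delta},
\]
and $s_c^2$ takes the same value. This yields the stated formula for $\mu$.

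The main obstacle is the necessity direction: we must show that $\mu_0=\mu_c$, read as an identity between functions of bonds and deformation histories, forces equality of the thresholds pointwise in $r$. This is handled by the explicit isotropic-extension counterexample, which requires every bond length $r\in(0,\delta]$ to be realized. That holds for points in the bulk of $\Omega$.
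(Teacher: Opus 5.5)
Your proposal is correct and follows the paper's proof: you reduce via Lemma~\ref{lem:energy-to-stretch} to $s_c^2(\|\bm\xi\|)\equiv s_0^2$, use \eqref{eq:scs0} to see that this is equivalent to $\omega(r)\,r$ being constant, and evaluate \eqref{eq:s0} to obtain $16G_0/(27K\delta)$. Your isotropic-extension witness for necessity and your explicit tensile-bond convention for passing from $s<s_0$ to $s^2<s_0^2$ fill in details that the paper takes for granted when it attributes the first step directly to Lemma~\ref{lem:energy-to-stretch}.
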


\begin{proof}
By Lemma \ref{lem:energy-to-stretch}, the critical stretch and critical energy density criteria coincide if and only if $s_0^2 = s_c^2(\|\bm{\xi}\|)$ for all $\bm{\xi} \in \H \setminus \{\bm0\}$. 
Thus, we will prove 
\[
s_0^2 = s_c^2(\|\bm{\xi}\|)~\hbox{for all}~\bm{\xi} \in \H \setminus \{\bm0\} \quad \hbox{if and only if} \quad \omega(\|\bm{\xi}\|) =  \beta \|\bm{\xi}\|^{-1}~\hbox{for all}~\bm{\xi} \in \H \setminus \{\bm0\}
\]
for some constant $\beta>0$.
Assume first that there exists some constant $\beta>0$ such that $\omega(\|\bm\xi\|) = \beta \|\bm{\xi}\|^{-1}$ for all $\bm{\xi}\in \H \setminus \{\bm{0}\}$.
From \eqref{eq:scs0}, we have
\[
s_c^2(\|\bm{\xi}\|)
	= \frac{4s_0^2}{\delta^4  \beta} \displaystyle \int_{0}^\delta \beta r^3 \, dr
	=  s_0^2   \quad \hbox{for all}~\bm{\xi} \in \H \setminus \{\bm{0}\},
\]
as desired. 
Conversely, assume that $s_0^2 = s_c^2(\|\bm{\xi}\|)$ for all $\bm{\xi} \in \H \setminus \{\bm0\}$. 
By \eqref{eq:scs0}, we have
\[
1= \frac{4}{\delta^4  \omega(\|\bm{\xi}\|)\|\bm{\xi}\|} \displaystyle \int_{0}^\delta \omega(r)r^4 \, dr \quad \hbox{for all}~\bm{\xi} \in \H \setminus \{\bm0\}.
\]
Rearranging, we write
\[
\omega(\|\bm{\xi}\|)\|\bm{\xi}\| = \frac{4}{\delta^4} \int_0^\delta \omega(r)  r^4 \, dr \quad \hbox{for all}~\bm{\xi} \in \H \setminus \{\bm0\}
\]
to find that $\omega(\|\bm{\xi}\|)\|\bm{\xi}\|$ is constant in $\H \setminus \{\bm{0}\}$, i.e.,~there exists some constant $\beta>0$ such that $\omega(\|\bm{\xi}\|) = \beta \|\bm{\xi}\|^{-1}$ for all $\bm{\xi} \in \H\setminus \{\bm{0}\}$. 

Lastly, note that if $\omega(\|\bm\xi\|) \|\bm{\xi}\| = \beta$, then from \eqref{eq:s0}, we have
\begin{align*}
s_c^2(\|\bm{\xi}\|) = s_0^2 
	&= \frac{4G_0}{9K}
 \frac{\displaystyle  \int_0^\delta \beta r^2 \, dr}{ \displaystyle \int_{0}^\delta \beta r^3 \, dr}
 	= \frac{16G_0 }{27K \delta}.
\end{align*}
\end{proof}

Theorem \ref{thm:main3D} establishes that the critical stretch and critical energy density criteria are \emph{not} equivalent in general.
For example, consider the influence function 
\begin{equation}\label{eq:alpha}
\omega(\|\bm{\xi}\|) = \|\bm{\xi}\|^{-\alpha} \quad \hbox{for  some}~\alpha \in (-\infty,1) \cup (1,4).
\end{equation}
By Theorem \ref{thm:main3D}, if $\alpha =1$, then the two criteria are equivalent, so we omit that case. 
On the other hand, by \eqref{eq:c-3D}, it must be that $\alpha < 4$. 
In this setting, one has, by \eqref{eq:s0}, 
\begin{equation}\label{eq:finite}
s_0^2 =  \frac{4G_0}{9K}
 \frac{ \displaystyle \int_0^\delta  r^{3-\alpha} \, dr}{ \displaystyle \int_{0}^\delta r^{4-\alpha} \, dr}
 	= \frac{4G_0(5-\alpha)}{9K(4-\alpha) \delta},
\end{equation}
and by \eqref{eq:scs0}, 
\begin{equation}\label{scs0-alpha}
s_c^2(\|\bm{\xi}\|)
	=\left( \frac{4}{\delta^4  \|\bm{\xi}\|^{1-\alpha}} \displaystyle \int_{0}^\delta r^{4-\alpha} \, dr \right) s_0^2
	= \frac{4 }{5-\alpha  } \left( \frac{\delta}{\|\bm{\xi}\|}\right)^{1-\alpha} s_0^2,
\end{equation}
demonstrating that $s_c^2$ and $s_0^2$ do not coincide, except for $\alpha = 1$.

Moreover,
\begin{equation}\label{eq:long-short}
s_c^2(\|\bm\xi\|) > s_0^2 \quad \hbox{if and only if} \quad
\frac{4\delta^{1-\alpha}}{5-\alpha} >\|\bm{\xi}\|^{1-\alpha}.
\end{equation}
Therefore, if $\alpha<1$, then, from  \eqref{eq:long-short}, we see that ``sufficiently short'' bonds break first under the critical stretch criterion and ``sufficiently long'' bonds break first under the critical energy density criterion. 
On the other hand, if $1 < \alpha < 4$, then we see from \eqref{eq:long-short} that ``sufficiently long'' bonds break first under the critical stretch criterion and ``sufficiently short'' bonds break first under the critical energy density criterion. 
With these ingredients, we make precise Theorem \ref{thm:intro2} for 3D problems: 

\begin{thm}[Bond-breaking behavior for 3D problems]\label{thm:3Dbehavior}
Consider the pairwise force function \eqref{eq:f-PMB} with influence function \eqref{eq:alpha}. 
\begin{enumerate}
\item Assume that $\alpha <1$. A  bond $\bm{\xi} = \x'-\x$ satisfying
\[
\|\bm\xi\|< \left(\frac{4}{5-\alpha}\right)^{\frac{1}{1-\alpha}}\delta
\]
and $s^2(\bm{\eta},\bm{\xi}) = s_0^2$ at time $t>0$ breaks at time $t$ under the critical stretch criterion but does not break under the critical energy density criterion.
On the other hand, 
a bond satisfying
\[
\left(\frac{4}{5-\alpha}\right)^{\frac{1}{1-\alpha}} \delta <\|\bm{\xi}\| \leq \delta
\]
and $s^2(\bm{\eta},\bm{\xi})=s_c^2(\|\bm\xi\|)$ at time $t>0$ breaks at time $t$ under the critical energy density criterion but does not break under the critical stretch criterion. 

\item Assume that $1 < \alpha < 4$. A bond $\bm{\xi} = \x'-\x$ satisfying
\[
\left(\frac{5-\alpha}{4}\right)^{\frac{1}{\alpha-1}} \delta <\|\bm{\xi}\|  \leq \delta
\]
and $s^2(\bm{\eta},\bm{\xi}) = s_0^2$ at time $t>0$ breaks at time $t$ under the critical stretch criterion but does not break under the critical energy density criterion.
On the other hand, 
a bond satisfying
\[
\|\bm{\xi}\| <\left(\frac{5-\alpha}{4}\right)^{\frac{1}{\alpha-1}} \delta
\]
and $s^2(\bm{\eta},\bm{\xi})=s_c^2(\|\bm\xi\|)$ at time $t>0$ breaks at time $t$ under the critical energy density criterion but does not break under the critical stretch criterion. 

\end{enumerate}
\end{thm}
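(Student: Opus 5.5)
The argument combines the explicit formula \eqref{scs0-alpha} with the reformulation of $\mu_c$ given in Lemma~\ref{lem:energy-to-stretch}. First, I would rewrite the comparison \eqref{eq:long-short} as a condition on $\|\bm\xi\|$ alone, splitting into cases according to the sign of $1-\alpha$.

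For $\alpha<1$, the exponent $1-\alpha$ is positive. Hence $r\mapsto r^{1-\alpha}$ is strictly increasing on $(0,\infty)$, and the condition $\|\bm\xi\|^{1-\alpha}<\frac{4}{5-\alpha}\delta^{1-\alpha}$ is equivalent to
\[
\|\bm\xi\|<\left(\tfrac{4}{5-\alpha}\right)^{\frac{1}{1-\alpha}}\delta .
\]
So $s_c^2(\|\bm\xi\|)>s_0^2$ exactly for such bonds, and $s_c^2(\|\bm\xi\|)<s_0^2$ for bonds with $\left(\frac{4}{5-\alpha}\right)^{\frac{1}{1-\alpha}}\delta<\|\bm\xi\|\le\delta$. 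Here $\frac{4}{5-\alpha}<1$, so this range is nonempty.

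For $1<\alpha<4$, the exponent $1-\alpha$ is negative. Taking reciprocals turns \eqref{eq:long-short} into $\|\bm\xi\|^{\alpha-1}>\frac{5-\alpha}{4}\delta^{\alpha-1}$, where $\frac{5-\alpha}{4}\in(\frac14,1)$. Since $r\mapsto r^{\alpha-1}$ is now increasing, this gives $s_c^2>s_0^2$ if and only if
\[
\|\bm\xi\|>\left(\tfrac{5-\alpha}{4}\right)^{\frac{1}{\alpha-1}}\delta ,
\]
and the reverse strict inequality for shorter bonds. Thus in both cases the sets of bond lengths singled out in the statement are precisely those where one of $s_0^2$, $s_c^2(\|\bm\xi\|)$ strictly exceeds the other.

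Next, I would read off the breaking behavior at time $t$ from the definitions \eqref{eq:mu-stretch} and \eqref{eq:mu-energy2}. Suppose $s^2(\bm\eta,\bm\xi)=s_0^2$ at time $t$. The condition $s<s_0$ fails at $\tilde t=t$, so $\mu_0=0$ and the bond breaks under the critical stretch criterion. If moreover $s_c^2(\|\bm\xi\|)>s_0^2$, then $s^2<s_c^2$ at time $t$, so the energy criterion is not triggered at $t$ and the bond does not break there. Symmetrically, suppose $s^2=s_c^2(\|\bm\xi\|)$ and $s_c^2<s_0^2$. Then $\mu_c=0$ by \eqref{eq:mu-energy2}, while $s^2<s_0^2$, so the stretch criterion is not triggered. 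One small point: the statement involves $s^2$, but \eqref{eq:mu-stretch} is phrased with $s<s_0$. I would handle this by restricting to nonnegative stretch, or by interpreting the criterion via $s^2<s_0^2$ as in Theorem~\ref{thm:main3D}.

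The main subtlety is the history dependence of $\mu$. The phrase ``does not break'' should be understood as ``is not broken by the criterion at time $t$''. If the bond had earlier exceeded its threshold, it would already be broken. I would therefore state the claim for bonds intact up to time $t$, or equivalently compare the instantaneous conditions at time $t$. Apart from this interpretive point, the proof is just the monotonicity bookkeeping above.
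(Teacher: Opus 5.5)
Your proof is correct and follows the paper's route: the paper reads the theorem off from \eqref{scs0-alpha} and \eqref{eq:long-short}, solves for $\|\bm\xi\|$ according to the sign of $1-\alpha$, and interprets the result through Lemma~\ref{lem:energy-to-stretch}, exactly as you do. Your remarks on the sign of $s$ and on history dependence make explicit assumptions the paper leaves implicit; in particular, the paper tacitly uses the $s^2<s_0^2$ form of $\mu_0$, as in Theorem~\ref{thm:main3D}.
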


\subsubsection{Isotropic extension}
  
Consider still the influence function \eqref{eq:alpha}, and now let us assume an isotropic extension. 
Recalling Remark \ref{rem:isotropic},  
$s \equiv \bar{s}$ for some constant $\bar{s}>0$.
Under the critical stretch criterion, all bonds $\bm{\xi} \in \H \setminus \{\bm0\}$ break as long as $\bar{s} \geq s_0$. 
On the other hand, under the critical energy density criterion,  a bond $\bm{\xi}$ breaks when
$
\bar{s}^2 \geq s_c^2(\|\bm{\xi}\|) = 4 s_0^2 \delta^{1-\alpha}/((5-\alpha)\|\bm{\xi}\|^{1-\alpha})
$
(see~\eqref{scs0-alpha})
or, equivalently,
\begin{equation}\label{eq:bond-bound}
\|\bm{\xi}\|^{1-\alpha} \geq \frac{4 \delta^{1-\alpha} s_0^2}{(5-\alpha) \bar{s}^2}.
\end{equation}
Note that for the longest bonds (i.e., bonds $\bm{\xi} \in \H \setminus \{\bm0\}$ such that $\|\bm{\xi}\|=\delta$) to break,  
$\bar{s}$ must satisfy (see \eqref{eq:finite})
\begin{equation}\label{eq:bars-lower}
\bar{s}^2 \geq \frac{4 s_0^2}{5-\alpha} = \frac{16G_0}{9K(4-\alpha) \delta}.
\end{equation}
Therefore, we conclude the following: 
\begin{enumerate}
\item Assume $\alpha <1$. 
As long as \eqref{eq:bars-lower} is satisfied, 
we use \eqref{eq:bond-bound} to conclude that all bonds $\bm{\xi}\in \H \setminus \{\bm0\}$ in the spherical shell
\[
\left\{ \bm{\xi} :  \left( \frac{4s_0^2}{(5-\alpha) \bar{s}^2}\right)^{ \frac{1}{1-\alpha}} \delta \leq \|\bm{\xi}\| \leq \delta \right\}
\]
break under the critical energy density criterion.  
That is, only sufficiently long bonds break. 
\item Assume $1 < \alpha < 4$. 
We use \eqref{eq:bond-bound} to conclude that all bonds $\bm{\xi}\in \H \setminus \{\bm0\}$ 
in the closed ball
\[
\left\{ \bm\xi : \|\bm{\xi}\| \leq \left(\frac{(5-\alpha) \bar{s}^2}{4  s_0^2}\right)^{\frac{1}{\alpha-1}} \delta\right\}
\]
break under the critical energy density criterion.  
That is, only sufficiently short bonds break. Moreover, for all bonds $\bm{\xi}\in \H \setminus \{\bm0\}$ to break, $\bar{s}$ must satisfy \eqref{eq:bars-lower}. 
\end{enumerate}
 
\section{Bond-failure criteria for two-dimensional problems}\label{sec:2D}

In this section, we analyze the bond-failure criteria as in Section \ref{sec:3D}, but for two-dimensional problems.

\subsection{Bond-failure criteria}\label{sec:criteria2D}

As in Section \ref{sec:criteria3D}, we will use the energy release rate to determine the critical parameters $s_0$ and $w_c$  in \eqref{eq:mu-stretch} and \eqref{eq:mu-energy}, respectively, in terms of material constants, but now for two-dimensional problems in a state of plane strain or plane stress. In analogy to~\eqref{eq:G0}, the energy per unit fracture area for a two-dimensional problem can be computed as
\begin{equation}\label{eq:G02D}
G_0 
= h \int_{0}^\delta \int_{\Omega_z} w_{\text{critical}}(\x',\x(z))  \, d A_{\bm{\x}'} \, d z,
\end{equation}
where $\Omega_z$ now represents a circular segment (see Figure~\ref{fig:integration}).

\subsubsection{Critical stretch criterion}

Assume first the critical stretch criterion, namely \eqref{eq:mu-stretch}. 
Using~\eqref{eq:G02D} together with~\eqref{eq:w0-critical}, we conduct a standard computation using polar coordinates (see \cite[Appendix C.2]{SelesonEtAl}) to get:
\begin{align*}
G_0
	&= h \int_{0}^\delta  \int_z^\delta \int_{-\cos^{-1}(z/r)}^{\cos^{-1}(z/r)} w_0(r) r \, d \theta \, d r \, dz
	= hcs_0^2  \int_{0}^\delta   \omega(r) r^3  \, d r.
\end{align*}
Recalling \eqref{eq:c-planestrain} and \eqref{eq:c-planestress}, we find
\begin{equation}\label{eq:s0-2D}
s_0^2 
= \frac{G_0}{hc \displaystyle \int_0^\delta \omega(r) r^3 \, dr}
= \begin{cases}
\displaystyle\frac{5 \pi G_0}{16E} \frac{ \displaystyle \int_0^\delta \omega(r) r^2 \, dr}{ \displaystyle \int_0^\delta  \omega(r)r^3 \, dr }
& \hbox{for a 2D problem in a state of plane strain,}\\[2.5em]
\displaystyle\frac{\pi G_0}{3E} \frac{ \displaystyle  \int_0^\delta  \omega(r)r^2 \, dr}{\displaystyle \int_0^\delta \omega(r) r^3 \, dr}
 & \hbox{for a 2D problem in a state of plane stress}.
\end{cases}
\end{equation}

\subsubsection{Critical energy density criterion}

Assume now the critical energy density criterion, namely \eqref{eq:mu-energy}. 
Using $w_{\text{critical}} \equiv w_c$ in~\eqref{eq:G02D}, we apply a standard computation using polar coordinates to find:
\begin{align*}
G_0
	&= h  \int_{0}^\delta  \int_z^\delta \int_{-\cos^{-1}(z/r)}^{\cos^{-1}(z/r)}  w_c \,r \, d \theta \, d r \, dz
	= \frac{2 h w_c \delta^3}{3}. 
\end{align*}
Therefore, we have 
\begin{equation}\label{eq:wc-2D}
w_c = \frac{3G_0}{2 h \delta^3} \quad \hbox{for a 2D problem.}
\end{equation}
Note that the critical energy density is the same for plane strain and plane stress. 

\begin{rem}\label{rem: h independence of critical stretch}
The critical energy density in~\eqref{eq:wc-2D} depends on $h$, in contrast to the critical stretch, which does not depend on $h$ (see~\eqref{eq:s0-2D}). Nevertheless, recasting the critical energy density criterion as a critical stretch criterion removes the dependence on~$h$ (see~Lemma~\ref{lem:energy-to-stretch2D}).
\end{rem}

The following result, analogous to Lemma \ref{lem:energy-to-stretch} for two-dimensional problems, holds. 

\begin{lem}[2D critical energy density criterion as a critical stretch criterion]
\label{lem:energy-to-stretch2D}
Consider a two-dimensional problem involving a body in a state of plane strain or plane stress.
Given the pairwise force function \eqref{eq:f-PMB},  
a bond $\bm{\xi} = \x'-\x \in \H \setminus \{\bm0\}$, and a time $t>0$, 
it holds that 
\[
w(\bm{\eta}, \x',\x,t) < w_c \quad \hbox{if and only if} \quad
s^2(\bm{\eta}, \bm{\xi}) < s_c^2(\|\bm{\xi}\|), 
\]
 where 
\begin{equation}\label{eq:sc-2D}
s_c^2(\|\bm{\xi}\|) 
= \begin{cases}
\displaystyle \frac{15 \pi G_0}{16E\delta^3 \omega(\|\bm{\xi}\|)  \|\bm{\xi}\| }   \int_0^\delta  \omega(r)r^2 \, dr 
& \hbox{for a 2D problem in a state of plane strain,}\\[1.5em]
\displaystyle\frac{\pi G_0}{E\delta^3 \omega(\|\bm{\xi}\|)  \|\bm{\xi}\| } \displaystyle  \int_0^\delta  \omega(r)r^2 \, dr
 & \hbox{for a 2D problem in a state of plane stress}.
\end{cases}
\end{equation}
\end{lem}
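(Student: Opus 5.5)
The proof follows that of Lemma~\ref{lem:energy-to-stretch}: substitute the explicit formulas and rearrange. Fix a bond $\bm{\xi}\in \H\setminus\{\bm0\}$ and a time $t>0$, and assume the bond is intact, so that $\mu=1$ in \eqref{eq:micropotential}. Then the condition $w(\bm{\eta},\x',\x,t)<w_c$ reads
\[
\frac12\, c\,\omega(\|\bm{\xi}\|)\, s^2(\bm{\eta},\bm{\xi})\,\|\bm{\xi}\| < \frac{3G_0}{2h\delta^3},
\]
using \eqref{eq:wc-2D} for the right-hand side. Since $c>0$, $\omega(\|\bm{\xi}\|)>0$ and $\|\bm{\xi}\|>0$, dividing by these factors preserves the strict inequality. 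This gives the equivalent condition
\[
s^2(\bm{\eta},\bm{\xi}) < \frac{3G_0}{hc\,\delta^3\,\omega(\|\bm{\xi}\|)\,\|\bm{\xi}\|}.
\]
Note that only the product $hc$ appears, consistent with Remarks~\ref{rem: h independence of model} and~\ref{rem: h independence of critical stretch}.

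Next, split into the two cases and insert the value of $hc$.
\begin{itemize}
\item For plane strain, \eqref{eq:c-planestrain} gives $hc = 16E/\bigl(5\pi\int_0^\delta\omega(r)r^2\,dr\bigr)$, so the right-hand side becomes
\[
\frac{3\cdot 5\pi G_0}{16E\delta^3\omega(\|\bm{\xi}\|)\|\bm{\xi}\|}\int_0^\delta\omega(r)r^2\,dr,
\]
which is the first line of \eqref{eq:sc-2D}.
\item For plane stress, \eqref{eq:c-planestress} gives $hc = 3E/\bigl(\pi\int_0^\delta\omega(r)r^2\,dr\bigr)$, and the factor $3$ cancels. This yields $\pi G_0\int_0^\delta\omega(r)r^2\,dr/\bigl(E\delta^3\omega(\|\bm{\xi}\|)\|\bm{\xi}\|\bigr)$, which is the second line.
\end{itemize}

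The argument has no real obstacle. The only point needing care is tracking the thickness $h$: it appears in $w_c$, and it appears in $c$ only through the product $hc$. With that tracked, the final expressions are independent of $h$.

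As in Lemma~\ref{lem:energy-to-stretch}, the equivalence holds at every time $\tilde t$. Hence $\mu_c$ in \eqref{eq:mu-energy} can be rewritten as a bond-dependent critical stretch criterion in the same form as \eqref{eq:mu-energy2}.
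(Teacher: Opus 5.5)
Your proof is correct and follows essentially the same route as the paper: with $\mu=1$, substitute \eqref{eq:micropotential} and \eqref{eq:wc-2D}, divide by the positive factors to obtain $s^2 < 3G_0/(hc\,\delta^3\omega(\|\bm{\xi}\|)\|\bm{\xi}\|)$, and then insert $hc$ from \eqref{eq:c-planestrain} and \eqref{eq:c-planestress}. Your arithmetic in both cases checks out.
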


\begin{proof}
Recalling the proof of Lemma \ref{lem:energy-to-stretch}, it is enough to note that, for a fixed bond $\bm{\xi} \in \H\setminus\{\bm0\}$ and a fixed time $t > 0$,  by \eqref{eq:micropotential} and \eqref{eq:wc-2D} (recall  that $\mu = 1$ for an intact bond and $\omega(\|\bm{\xi}\|)>0$),
\[
w(\bm{\eta}, \x',\x,t) < w_c 
\]
holds if and only if 
\[
\frac{1}{2} c \, \omega(\|\bm{\xi}\|)\, s^2(\bm{\eta}, \bm{\xi}) \|\bm{\xi}\| <  \frac{3G_0}{2h \delta^3}, 
\]
where $c$ is given in \eqref{eq:c-planestrain} for plane strain and \eqref{eq:c-planestress} for plane stress.
Equivalently,
\begin{equation}\label{eq:sc-in-proof}
s^2(\bm{\eta}, \bm{\xi})< \frac{3G_0}{hc\delta^3 \omega(\|\bm{\xi}\|)  \|\bm{\xi}\| } = s_c^2(\|\bm{\xi}\|).
\end{equation}
The expression \eqref{eq:sc-2D} follows from \eqref{eq:sc-in-proof} together with \eqref{eq:c-planestrain} and \eqref{eq:c-planestress}. 
\end{proof}

\begin{rem}
Note from \eqref{eq:sc-in-proof} and \eqref{eq:s0-2D} that 
\begin{equation}\label{eq:scs0-2D}
s_c^2(\|\bm{\xi}\|)
	=\frac{3}{\delta^3 \omega(\|\bm{\xi}\|)  \|\bm{\xi}\| } \frac{G_0}{hc}
	=\left(\frac{3}{\delta^3 \omega(\|\bm{\xi}\|)  \|\bm{\xi}\| }  \displaystyle \int_0^\delta \omega(r) r^3 \, dr\right)s_0^2.
\end{equation}
Comparing \eqref{eq:scs0} and \eqref{eq:scs0-2D}, we conclude:
\[
s_c^2(\|\bm{\xi}\|)
	=\left(\frac{d+1}{\delta^{d+1} \omega(\|\bm{\xi}\|)  \|\bm{\xi}\| }  \displaystyle \int_0^\delta \omega(r) r^{d+1} \, dr\right)s_0^2, \quad \hbox{for}~d=2,3.
\]
\end{rem}

\subsection{Critical stretch vs.~critical energy density criteria}\label{sec:2Dcomparision}

We have the following two-dimensional analogue of Theorem \ref{thm:main3D}.

\begin{thm}[Nonequivalence of bond-failure criteria for 2D problems]\label{thm:main2D} 
Consider a two-dimensional problem involving a body in a state of plane strain or plane stress. 
Given the pairwise force function \eqref{eq:f-PMB}, 
the bond-breaking functions $\mu_0$ and $\mu_c$ coincide
 if and only if 
there exists a constant $\beta>0$ such that $\omega(\|\bm\xi\|) = \beta \|\bm{\xi}\|^{-1}$ for all $\bm\xi \in \H \setminus \{\bm0\}$. 
In this case, 
the bond-breaking function $\mu$ can be written as
\[
\mu(\x',\x,t) = \mu_0(\x',\x,t) = \mu_c(\x',\x,t)
	= \begin{cases}
	1 & \hbox{if}~\displaystyle s^2(\widetilde{\bm{\eta}},\bm{\xi}) < s_0^2~\hbox{for all}~0 \leq \tilde{t} \leq t,\\[.5em]
	0 
	& \hbox{otherwise}, 
	\end{cases}
\]
where
\[
s_0^2= s_c^2(\|\bm\xi\|)  = \begin{cases}
\displaystyle \frac{15 \pi G_0}{32E\delta}
& \hbox{for a 2D problem in a state of plane strain,}\\[1.5em]
\displaystyle \frac{\pi G_0}{2E \delta}
 & \hbox{for a 2D problem in a state of plane stress}.
 \end{cases}
\]
\end{thm}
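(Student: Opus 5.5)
The plan mirrors the proof of Theorem~\ref{thm:main3D}, with Lemma~\ref{lem:energy-to-stretch2D} in place of Lemma~\ref{lem:energy-to-stretch}. By Lemma~\ref{lem:energy-to-stretch2D}, $\mu_c$ has the same form as $\mu_0$ in \eqref{eq:mu-stretch}, but with the bond-dependent threshold $s_c^2(\|\bm\xi\|)$ in place of $s_0^2$. Both criteria are written through $s^2$ (nonnegative stretch thresholds), so $\mu_0\equiv\mu_c$ for every deformation history exactly when $s_0^2=s_c^2(\|\bm\xi\|)$ for all $\bm\xi\in\H\setminus\{\bm0\}$.

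For the necessity direction, I would argue that if $s_c^2(\|\bm\xi\|)\neq s_0^2$ for some bond, one can take a history in which that bond's stretch lies between the two thresholds. The isotropic extension of Remark~\ref{rem:isotropic} with $\bar s^2$ strictly between them does this, and it makes $\mu_0\ne\mu_c$ for that bond. The question then reduces to when the equality of thresholds holds. For this I will use the unified identity \eqref{eq:scs0-2D}:
\[
s_c^2(\|\bm\xi\|)=\left(\frac{3}{\delta^3\omega(\|\bm\xi\|)\|\bm\xi\|}\int_0^\delta\omega(r)r^3\,dr\right)s_0^2 .
\]
This holds for both plane strain and plane stress, because it was derived using only $hc$, so the two cases need no separate treatment here.

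For the forward direction, take $\omega(r)=\beta r^{-1}$. Then $\int_0^\delta\beta r^2\,dr=\beta\delta^3/3$, so the prefactor equals $1$ and $s_c^2=s_0^2$. For the converse, setting the prefactor equal to $1$ gives
\[
\omega(\|\bm\xi\|)\|\bm\xi\|=\frac{3}{\delta^3}\int_0^\delta\omega(r)r^3\,dr,
\]
which is independent of $\bm\xi$. This constant $\beta$ is positive because $\omega>0$, and it yields $\omega(r)=\beta r^{-1}$ for $r\in(0,\delta]$.

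Finally, I substitute $\omega(r)=\beta r^{-1}$ into \eqref{eq:s0-2D}. The ratio is $\int_0^\delta r\,dr\big/\int_0^\delta r^2\,dr=\tfrac{3}{2\delta}$, which gives $s_0^2=\frac{5\pi G_0}{16E}\cdot\frac{3}{2\delta}=\frac{15\pi G_0}{32E\delta}$ for plane strain and $\frac{\pi G_0}{3E}\cdot\frac{3}{2\delta}=\frac{\pi G_0}{2E\delta}$ for plane stress. The only step that needs care is the initial reduction, namely justifying that equal bond-breaking functions force equal thresholds bond by bond. That is handled by the isotropic-extension argument above, and everything after it is direct computation.
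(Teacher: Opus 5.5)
Your proof is correct and follows the paper's argument. You reduce via Lemma~\ref{lem:energy-to-stretch2D} to equality of thresholds bond by bond, use the unified identity \eqref{eq:scs0-2D} to see that the prefactor equals $1$ exactly when $\omega(r)r$ is constant, and substitute $\omega(r)=\beta r^{-1}$ into \eqref{eq:s0-2D}; your isotropic-extension argument for why coinciding bond-breaking functions force $s_0^2=s_c^2(\|\bm\xi\|)$ for every bond is a sound justification of a step the paper simply asserts.
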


\begin{proof}
By Lemma \ref{lem:energy-to-stretch2D}, the critical stretch and critical energy density criteria coincide if and only if $s_0^2 = s_c^2(\|\bm{\xi}\|)$ for all $\bm{\xi} \in \H \setminus \{\bm0\}$.
Recalling the proof of Theorem \ref{thm:main3D}, 
it is enough to observe that, if there is a $\beta>0$ such that $\omega(\|\bm{\xi}\|) = \beta \|\bm{\xi}\|^{-1}$ for all $\bm{\xi} \in \H\setminus \{\bm0\}$, then from~\eqref{eq:scs0-2D}
\[
s_c^2(\|\bm{\xi}\|)
	=\left(\frac{3}{\delta^3 \beta}  \displaystyle \int_0^\delta \beta r^2 \, dr\right)s_0^2 = s_0^2 
\]
and from \eqref{eq:s0-2D}
\[
s_0^2 
=\begin{cases}
\displaystyle\frac{5 \pi G_0}{16E} \frac{ \displaystyle \int_0^\delta \beta r \, dr}{ \displaystyle \int_0^\delta \beta r^2 \, dr }
& \hbox{for plane strain,}\\[2.5em]
\displaystyle\frac{\pi G_0}{3E} \frac{ \displaystyle  \int_0^\delta \beta r \, dr}{\displaystyle \int_0^\delta \beta r^2 \, dr}
 & \hbox{for plane stress}
 \end{cases}
 =\begin{cases}
\displaystyle\frac{15 \pi G_0}{32E\delta} 
& \hbox{for plane strain,}\\[1.5em]
\displaystyle\frac{\pi G_0}{2E \delta}
 & \hbox{for plane stress}.
 \end{cases}
\]
\end{proof}

As observed in the previous section, Theorem \ref{thm:main2D} implies that the critical stretch and critical energy density criteria are \emph{not}  equivalent in general. 
Similar to Section~\ref{sec:3Dcomparison}, consider the influence function
\begin{equation}\label{eq:alpha-2D}
\omega(\|\bm{\xi}\|) = \|\bm{\xi}\|^{-\alpha} \quad \hbox{for  some}~\alpha \in (-\infty,1) \cup (1,3).
\end{equation}
By Theorem \ref{thm:main2D}, if $\alpha =1$, then the two criteria are equivalent, so we omit that case. On the other hard, by \eqref{eq:c-planestrain} and \eqref{eq:c-planestress}, it must be that $\alpha < 3$. 
In this setting, by \eqref{eq:s0-2D}, one can write
\begin{equation}\label{eq: s0_2D}
s_0^2 = \begin{cases}
\displaystyle\frac{5 \pi G_0(4-\alpha)}{16E(3-\alpha)\delta} 
& \hbox{for a 2D problem in a state of plane strain,}\\[1.5em]
\displaystyle\frac{\pi G_0(4-\alpha)}{3E(3-\alpha)\delta} 
 & \hbox{for a 2D problem in a state of plane stress},
\end{cases}
\end{equation}
and by \eqref{eq:scs0-2D}, 
\begin{equation}\label{eq:omega1-2D}
s_c^2(\|\bm{\xi}\|)
	= \left(\frac{3}{\delta^3 \|\bm{\xi}\|^{1-\alpha} }  \displaystyle \int_0^\delta  r^{3-\alpha} \, dr \right) s_0^2
	= \frac{3}{4-\alpha}  \left( \frac{\delta}{\|\bm{\xi}\|}\right)^{1-\alpha}s_0^2,
\end{equation}
demonstrating that, also in the two-dimensional case, $s_c^2$ and $s_0^2$ do not coincide, except for $\alpha =1$.

Moreover, 
\begin{equation}\label{eq:long-short-2D}
s_c^2(\|\bm\xi\|) > s_0^2 \quad \hbox{if and only if} \quad
\frac{3 \delta^{1-\alpha} }{4-\alpha  } > \|\bm{\xi}\|^{1-\alpha}.
\end{equation}
Recalling the discussion before Theorem \ref{thm:3Dbehavior} for 3D problems, we draw similar conclusions from \eqref{eq:long-short-2D} for 2D problems. In particular, we make precise Theorem \ref{thm:intro2} for 2D problems: 

\begin{thm}[Bond-breaking behavior for 2D problems]\label{thm:2Dbehavior}
Consider the pairwise force function \eqref{eq:f-PMB} with influence function \eqref{eq:alpha-2D} for a two-dimensional problem involving a body in a state of plane strain or plane stress. 
\begin{enumerate}
\item Assume that $\alpha <1$. A bond $\bm{\xi} \in \H\setminus \{\bm0\}$ satisfying
\[
\|\bm\xi\|< \left(\frac{3}{4-\alpha}\right)^{\frac{1}{1-\alpha}}\delta
\]
and $s^2(\bm{\eta},\bm{\xi}) = s_0^2$ at time $t>0$ breaks at time $t$ under the critical stretch criterion but does not break under the critical energy density criterion.
On the other hand, a bond $\bm{\xi} \in \H\setminus \{\bm0\}$ satisfying
\[
\left(\frac{3}{4-\alpha}\right)^{\frac{1}{1-\alpha}} \delta <\|\bm{\xi}\| \leq \delta
\]
and $s^2(\bm{\eta},\bm{\xi}) = s_c^2(\|\bm{\xi}\|)$ at time $t>0$ breaks at time $t$ under the critical  energy density criterion but does not break under the critical stretch criterion.
\item Assume $1 < \alpha < 3$. A bond $\bm{\xi} \in \H\setminus \{\bm0\}$ satisfying
\[
\left(\frac{4-\alpha}{3}\right)^{\frac{1}{\alpha-1}} \delta <\|\bm{\xi}\|  \leq \delta
\]
and $s^2(\bm{\eta},\bm{\xi}) = s_0^2$ at time $t>0$ breaks at time $t$ under the critical stretch criterion but does not break under the critical energy density criterion.
On the other hand, a bond $\bm{\xi} \in \H\setminus \{\bm0\}$ satisfying
\[
\|\bm{\xi}\| <\left(\frac{4-\alpha}{3}\right)^{\frac{1}{\alpha-1}} \delta
\]
and $s^2(\bm{\eta},\bm{\xi}) = s_c^2(\|\bm{\xi}\|)$ at time $t>0$ breaks at time $t$ under the critical  energy  density criterion but does not break under the critical stretch criterion
\end{enumerate}
\end{thm}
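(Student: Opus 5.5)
The plan is to reduce everything to a comparison of the two thresholds $s_0^2$ and $s_c^2(\|\bm\xi\|)$, using Lemma~\ref{lem:energy-to-stretch2D}. That lemma rewrites $\mu_c$ as a critical stretch criterion with bond-dependent threshold $s_c^2(\|\bm\xi\|)$, exactly as in \eqref{eq:mu-energy2}. The critical stretch criterion \eqref{eq:mu-stretch} breaks a bond as soon as $s \ge s_0$; since $s_0>0$, this is equivalent to the time-$t$ condition $s^2 \ge s_0^2$ whenever $s\geq 0$. Similarly, the energy criterion breaks a bond as soon as $s^2 \ge s_c^2(\|\bm\xi\|)$.

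So consider a bond with $s^2(\bm\eta,\bm\xi)=s_0^2$ at time $t$. It breaks at time $t$ under $\mu_0$. It does not break under $\mu_c$ precisely when $s_0^2 < s_c^2(\|\bm\xi\|)$. Symmetrically, a bond with $s^2 = s_c^2(\|\bm\xi\|)$ breaks under $\mu_c$ and fails to break under $\mu_0$ exactly when $s_c^2(\|\bm\xi\|) < s_0^2$. Here I take $s\ge0$ (tension), so that $s^2=s_c^2<s_0^2$ gives $s<s_0$.

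The remaining step is to translate \eqref{eq:long-short-2D} into the length ranges of the statement. This follows from \eqref{eq:omega1-2D}, which holds for both plane strain and plane stress, since the ratio $s_c^2/s_0^2$ in \eqref{eq:scs0-2D} is independent of the constitutive constants. We have $s_c^2>s_0^2$ if and only if $\|\bm\xi\|^{1-\alpha} < \tfrac{3}{4-\alpha}\delta^{1-\alpha}$.

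For $\alpha<1$, the map $r\mapsto r^{1-\alpha}$ is increasing on $(0,\delta]$. Taking the $1/(1-\alpha)$ power, the condition becomes $\|\bm\xi\|<(3/(4-\alpha))^{1/(1-\alpha)}\delta$, which gives the first claim of part (1). The reverse strict inequality gives $s_c^2<s_0^2$, which is the second claim. Note that $3/(4-\alpha)<1$ here, so this range is a nonempty subinterval of $(0,\delta]$.

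For $1<\alpha<3$, rewrite the condition as $\|\bm\xi\|^{\alpha-1} > \tfrac{4-\alpha}{3}\delta^{\alpha-1}$. Then $s_c^2>s_0^2$ if and only if $\|\bm\xi\|>((4-\alpha)/3)^{1/(\alpha-1)}\delta$, and $(4-\alpha)/3<1$ again. This yields part (2) with the roles of long and short bonds swapped.

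There is no real obstacle here. The only care needed is in the monotonicity and sign bookkeeping when inverting powers: $1-\alpha$ changes sign between the two cases, which flips the inequality. One should also remark that boundary equality cases are excluded by the strict inequalities in the statement.
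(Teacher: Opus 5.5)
Your proposal is correct and follows the paper's route: the paper derives Theorem~\ref{thm:2Dbehavior} from Lemma~\ref{lem:energy-to-stretch2D}, the ratio \eqref{eq:omega1-2D}, and the equivalence \eqref{eq:long-short-2D}, with the powers inverted separately for $\alpha<1$ and $1<\alpha<3$, as you do. Your tension assumption $s\ge 0$ makes explicit a point the paper leaves implicit; it is needed only for the first claim in each part, because \eqref{eq:mu-stretch} is phrased in terms of $s$ rather than $s^2$.
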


\subsubsection{Isotropic extension}\label{sec:2Diso}

Consider still the influence function \eqref{eq:alpha-2D}, and now let us assume
an isotropic extension.
Recalling Remark \ref{rem:isotropic},  
$s \equiv \bar{s}$ for some constant $\bar{s}>0$. 
Under the critical stretch criterion, all bonds $\bm{\xi} \in \H \setminus \{\bm0\}$ break as long as $\bar{s} \geq s_0$.
On the other hand, under the critical energy density criterion, a bond $\bm{\xi}$ breaks when
$\bar{s}^2 \geq s_c^2(\|\bm{\xi}\|) = 3 s_0^2 \delta^{1-\alpha} /((4-\alpha) \|\bm{\xi}\|^{1-\alpha})$ (see \eqref{eq:omega1-2D}) or, equivalently,
\begin{equation}\label{eq:iso-break-2D}
\|\bm{\xi}\|^{1-\alpha} \geq \frac{3\delta^{1-\alpha} s_0^2}{(4-\alpha) \bar{s}^2}.
\end{equation}
Note that for the longest bonds (i.e., bonds $\bm{\xi} \in \H \setminus \{\bm0\}$ such that $\|\bm{\xi}\|=\delta$) 
 to break, $\bar{s}$ must satisfy (see~\eqref{eq: s0_2D})
\begin{equation}\label{eq:bars-2D}
\bar{s}^2 \geq \frac{3s_0^2}{4-\alpha} =
\begin{cases}
\displaystyle\frac{15 \pi G_0}{16E(3-\alpha)\delta} 
& \hbox{for a 2D problem in a state of plane strain,}\\[1.5em]
\displaystyle\frac{\pi G_0}{E(3-\alpha)\delta} 
 & \hbox{for a 2D problem in a state of plane stress}.
\end{cases}
\end{equation}
Therefore, we conclude the following:
\begin{enumerate}
\item Assume  $\alpha <1$. As long as~\eqref{eq:bars-2D} is satisfied, we use~\eqref{eq:iso-break-2D} to conclude that all bonds $\bm{\xi} \in \H \setminus \{\bm0\}$ in the annulus
\[
\left\{\bm{\xi}: \left(\frac{3 s_0^2}{(4-\alpha) \bar{s}^2} \right)^{\frac{1}{1-\alpha}} \delta  \leq \|\bm{\xi}\| \leq \delta \right\}
\]
break under the critical energy density criterion. That is, only sufficiently long bonds
break.  
\item Assume $1 < \alpha < 3$. We use \eqref{eq:iso-break-2D} to conclude that all bonds $\bm{\xi} \in \H \setminus \{\bm0\}$ in the closed disc
\[
\left\{\bm{\xi}:  \|\bm{\xi}\| \leq \left(\frac{(4-\alpha) \bar{s}^2}{3 s_0^2} \right)^{\frac{1}{\alpha-1}}\delta \right\}
\]
break under the critical energy density  criterion. That is, only sufficiently short bonds break.
Moreover, for all bonds $\bm{\xi} \in \H \setminus \{\bm0\}$ to break, $\bar{s}$ must satisfy \eqref{eq:bars-2D}.  
\end{enumerate}
 This behavior will be observed numerically in Section \ref{sec:isotropicextension}.

\section{Numerical examples}\label{sec: numerical examples}

We now present three numerical examples in two dimensions. 
The material model for each example is the GPMB model given by~\eqref{eq:f-PMB} with the influence function $\omega(\|\bm{\xi}\|) = 1/\|\bm{\xi}\|^\alpha$, where $\alpha < 3$. 

\subsection{Example 1: Isotropic extension} 
\label{sec:isotropicextension}

In this example, we demonstrate the difference between the critical stretch and critical energy density  criteria, including the effect of the choice of influence function on bond breaking, using a simple scenario with an imposed displacement field. 

We consider a two-dimensional square plate with domain $\Omega = (-0.5,0.5)\times(-0.5,0.5)$ 
and a horizon of ${\delta = 0.2}$. The  expressions for the elastic and bond-breaking parameters, derived in Section~\ref{sec:2Dc} (see~\eqref{eq:c-planestrain} and~\eqref{eq:c-planestress}) and Section~\ref{sec:2Dcomparision} (see~\eqref{eq:s0-2D}, \eqref{eq:wc-2D}, and~\eqref{eq:omega1-2D}), respectively,
are summarized 
 in Table~\ref{table: Example 1 expressions}. 

\begin{table}[h!]
\centering
{\tabulinesep=1.2mm
\begin{tabu}{||c | c | c | c||} 
 \hline 
  $c$ & $s_0^2$ & $w_c$ & $s_c^2(\|\bm{\xi}\|)$ \\ [0.5ex] 
     \hline\hline
      $\begin{array}{cc}
          \displaystyle \frac{16(3-\alpha)E}{5 \pi h \delta^{3-\alpha}} & \mbox{plane strain}\\
          \vspace*{-0.02in}\\
          \displaystyle \frac{3(3-\alpha)E}{\pi h \delta^{3-\alpha}} & \mbox{plane stress}\\
          \end{array}
          $  
      & $ \displaystyle
      \begin{array}{cc}
      	\displaystyle 
	 \frac{5(4-\alpha)\pi G_0}{16(3-\alpha)E \delta}
          & \mbox{plane strain}\\
          \vspace*{-0.02in}\\
         \displaystyle
          \frac{(4-\alpha)\pi G_0}{\displaystyle 3(3-\alpha)E\delta}
         & \mbox{plane stress}\\
          \end{array}
          $ 
      & $\displaystyle\frac{3G_0}{2h\delta^3}$ 
      &$\displaystyle
       \frac{3}{4-\alpha} \left(\frac{\delta}{\|\bm{\xi}\|}\right)^{1-\alpha} s_0^2$\\  
 \hline
\end{tabu}
}
\caption{
Elastic and bond-breaking parameters for $\omega(\|\bm{\xi}\|) = 1/\|\boldsymbol{\xi}\|^\alpha$ in 2D.}
\label{table: Example 1 expressions}
\end{table}

 The plate is subjected to an isotropic extension given by $\u(\x) = \bar{s} \x$ with $\bar{s}$ constant, resulting in a constant stretch $s \equiv \bar{s}$ for every bond $\bm{\xi}$ (see Remark~\ref{rem:isotropic}).
We assume a critical stretch $s_0 = 0.25$, which implies that bonds break under the critical stretch criterion when their extension reaches $25\%$ of their original (undeformed) length. We  compare numerically the plate response for varying imposed stretch $\bar{s}$ and for different choices of the influence function parameter $\alpha$ using the critical energy density criterion. As proved in Theorem \ref{thm:main2D}, the case $\alpha = 1$ coincides with the critical stretch criterion, i.e., $s_c(\|\bm{\xi}\|)=s_0$.

The computations employ the PDMATLAB2D code~\cite{SelesonEtAl}, using the meshfree discretization from~\cite{SillingAskari} with a uniform grid spacing of  $\Delta x = \Delta y = \delta/3$, which results in a 
$15\times 15$ grid of computational nodes
and an $m$-ratio of $m := \delta/\Delta x = 3$. 
A relatively coarse discretization is employed in this example to enable clear visualization of results (including computational nodes and  bonds).  
In this setting, each computational node in the bulk of the plate has $28$ neighbors, as illustrated in Figure~\ref{fig:Example1 neighborhood}.

\begin{figure}[htbp!]
\centering
\begin{subfigure}{.48\textwidth}
    \centering
    \includegraphics[width=\linewidth, trim=3cm 1.5cm 3cm 1.5cm,clip]{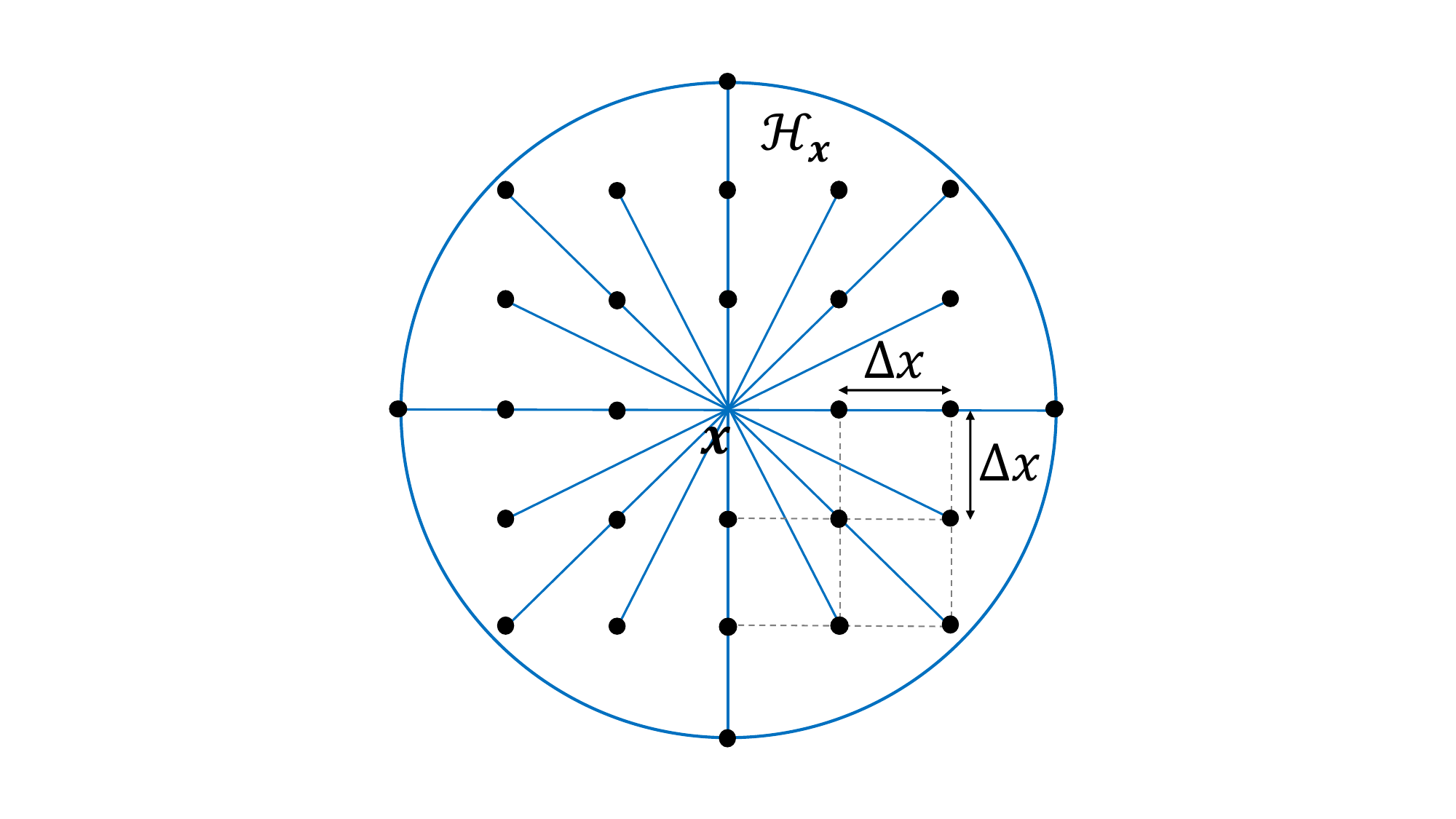}
    \medskip
    \caption{Illustration of the 2D neighborhood $\mathcal{H}_{\x}$ of a computational node at ${\x}$.}
    \label{fig:Example1 neighborhood}
   \end{subfigure}
   \hfill
   \begin{subfigure}{.47\textwidth}
   \centering
{\tabulinesep=1.2mm
\begin{tabu}{|c || c | c | c | c||} 
 \hline
   $ \|\boldsymbol{\xi}\|$   & $\alpha = 0$ & $\alpha = 1$ & $\alpha = 2$  \\ [0.5ex] 
 \hline\hline
   $\Delta x$ & $\frac{9}{4}s_0^2$ & $s_0^2$ & $\frac{1}{2}s_0^2$ \\
    $\sqrt{2}\Delta x$ & $\frac{9}{4\sqrt{2}}s_0^2$ & $s_0^2$ & $\frac{\sqrt{2}}{2}s_0^2$ \\
  $2\Delta x$     & $\frac{9}{8}s_0^2$ & $s_0^2$ &$s_0^2$  \\
   $\sqrt{5}\Delta x$ & $\frac{9}{4\sqrt{5}}s_0^2$ & $s_0^2$ & $\frac{\sqrt{5}}{2}s_0^2$ \\ 
   $2\sqrt{2}\Delta x$ & $\frac{9}{8\sqrt{2}}s_0^2$  & $s_0^2$ &  $\sqrt{2}s_0^2$ \\ 
    $3\Delta x$ & $\frac{3}{4}s_0^2$  & $s_0^2$ & $\frac{3}{2}s_0^2$ \\ 
 \hline
\end{tabu}
}
\caption{Expressions for $s_c^2(\|\bm{\xi}\|)$ as functions of $s_0^2$ for  $\alpha = 0,1,2$ and for the different bond lengths.}
\label{table: Example 1 critical stretch from energy}
      \end{subfigure}
 \caption{Uniform discretization with a grid spacing of $\Delta x = \Delta y = \delta/3$ in Example 1.}
\end{figure}

We note from Figure~\ref{fig:Example1 neighborhood} that, for the chosen discretization, a full neighborhood contains bonds with lengths 
$\|\boldsymbol{\xi}\| = \Delta x,\sqrt{2}\Delta x,2\Delta x,\sqrt{5}\Delta x,2\sqrt{2}\Delta x,$ or $3\Delta x$.
In Figure~\ref{table: Example 1 critical stretch from energy}, we present the expressions for $s_c^2(\|\bm{\xi}\|)$ for each of these bond lengths as functions of $s_0^2$ ({\em cf.}~Table~\ref{table: Example 1 expressions}) 
for $\delta = 3\Delta x$ and different values of~$\alpha$. 
We observe that for $\alpha = 1$, $s_c$ remains constant regardless of the bond length; for $\alpha = 0$, $s_c$ decreases as the bond length increases; and for $\alpha = 2$,  $s_c$ increases  as the bond length increases.
This confirms our analytical results in Section~\ref{sec:2Diso}. 
Indeed, for an isotropic extension with imposed stretch $\bar{s}$ of increasing value, under the critical energy density criterion, longer bonds break first for $\alpha = 0$, whereas shorter bonds break first for $\alpha = 2$. 
In contrast, for $\alpha = 1$, all bonds break at the same 
 critical stretch.
 
In Figure~\ref{Fig: Example 1 results}, we present numerical results demonstrating these different bond-breaking patterns. Specific values of imposed stretch $\bar{s}$ were selected based on Figure~\ref{table: Example 1 critical stretch from energy}. All the computational nodes and bonds are plotted in cyan. 
 For better visualization, we also select a given source node and draw its neighborhood, including its boundary (blue circle) and the intact bonds (blue lines) and corresponding neighboring nodes (black filled circles). 
Examining the results from the top row to the bottom row, the computations confirm that longer bonds break first for $\alpha = 0$ (left column), whereas shorter bonds break first for $\alpha = 2$ (right column), as proved.

\def\width{0.55}
\def\wwidth{0.54}

\begin{figure*}[htbp!]
    \centering
    \begin{subfigure}[t]{0.32\textwidth}
        \centering
       \boldsymbol{$\alpha = 0$}\\[0.1in]
        \includegraphics[width=\width\textwidth, trim=22cm 0cm 26cm 2cm,clip]{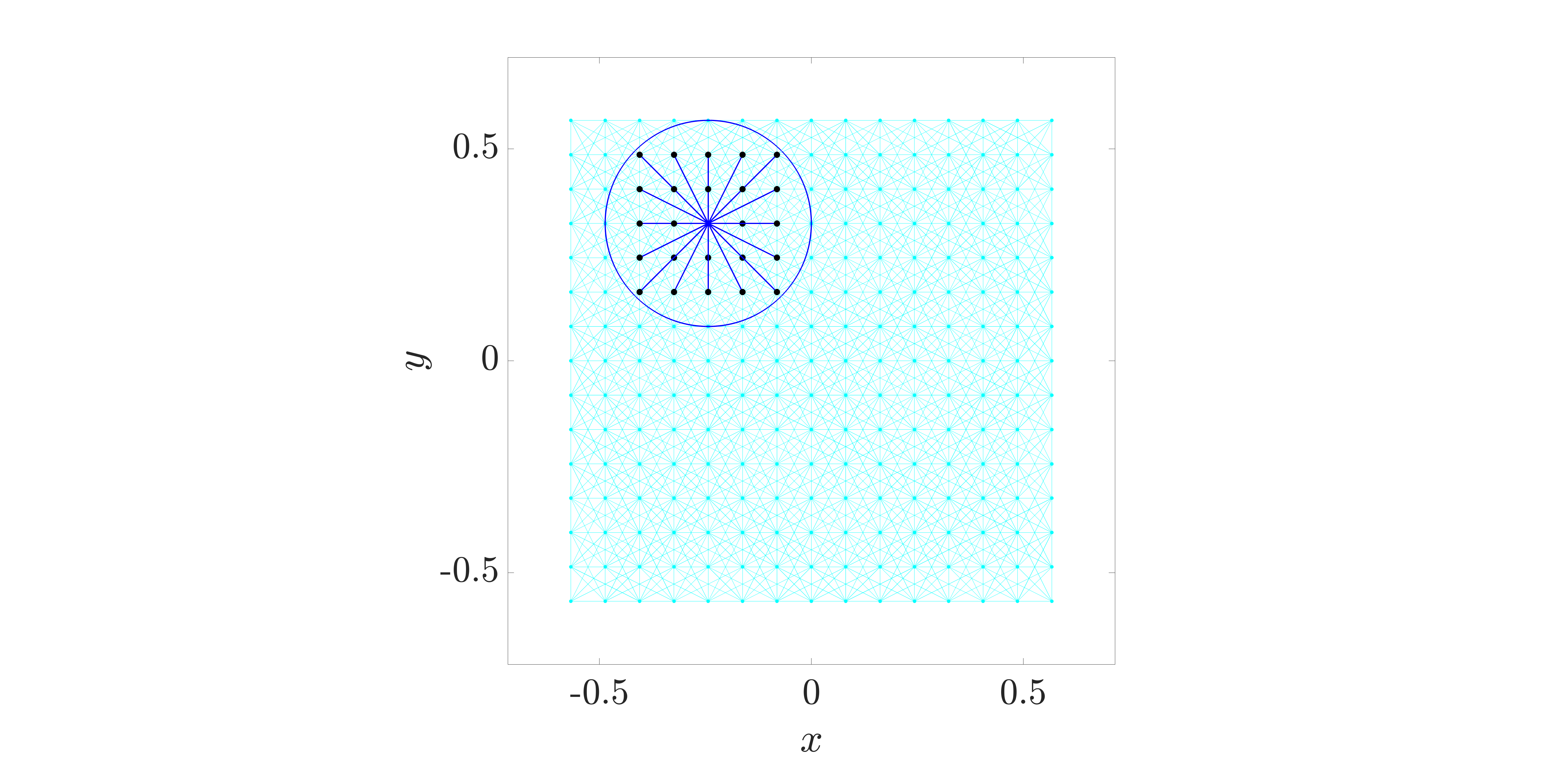}
        \caption{$\bar{s}^2 = \frac{3}{4}s_0^2$}
    \end{subfigure}%
     \vspace{0.5pc}
    \begin{subfigure}[t]{0.32\textwidth}
        \centering
       \boldsymbol{$\alpha = 1$}\\[0.1in]
    \end{subfigure}
    \begin{subfigure}[t]{0.32\textwidth}
        \centering
        \boldsymbol{$\alpha = 2$}\\[0.1in]
        \includegraphics[width=\width\textwidth, trim=22cm 0cm 26cm 2cm,clip]{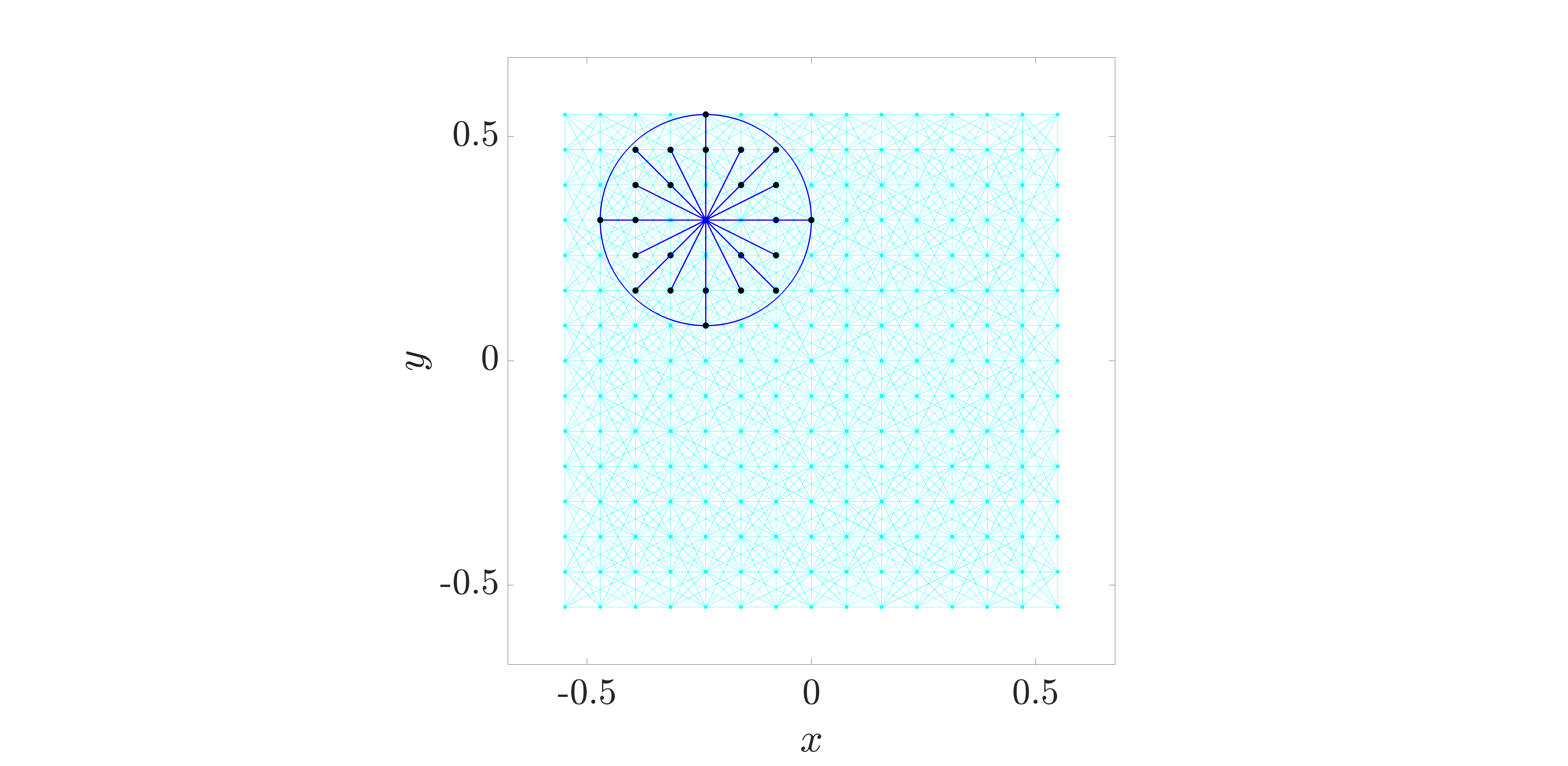}
        \caption{$\bar{s}^2 = \frac{1}{2}s_0^2$}
    \end{subfigure}
       \centering
    \begin{subfigure}[t]{0.32\textwidth}
        \centering
        \includegraphics[width=\width\textwidth,  trim=22cm 0cm 26cm 2cm,clip]{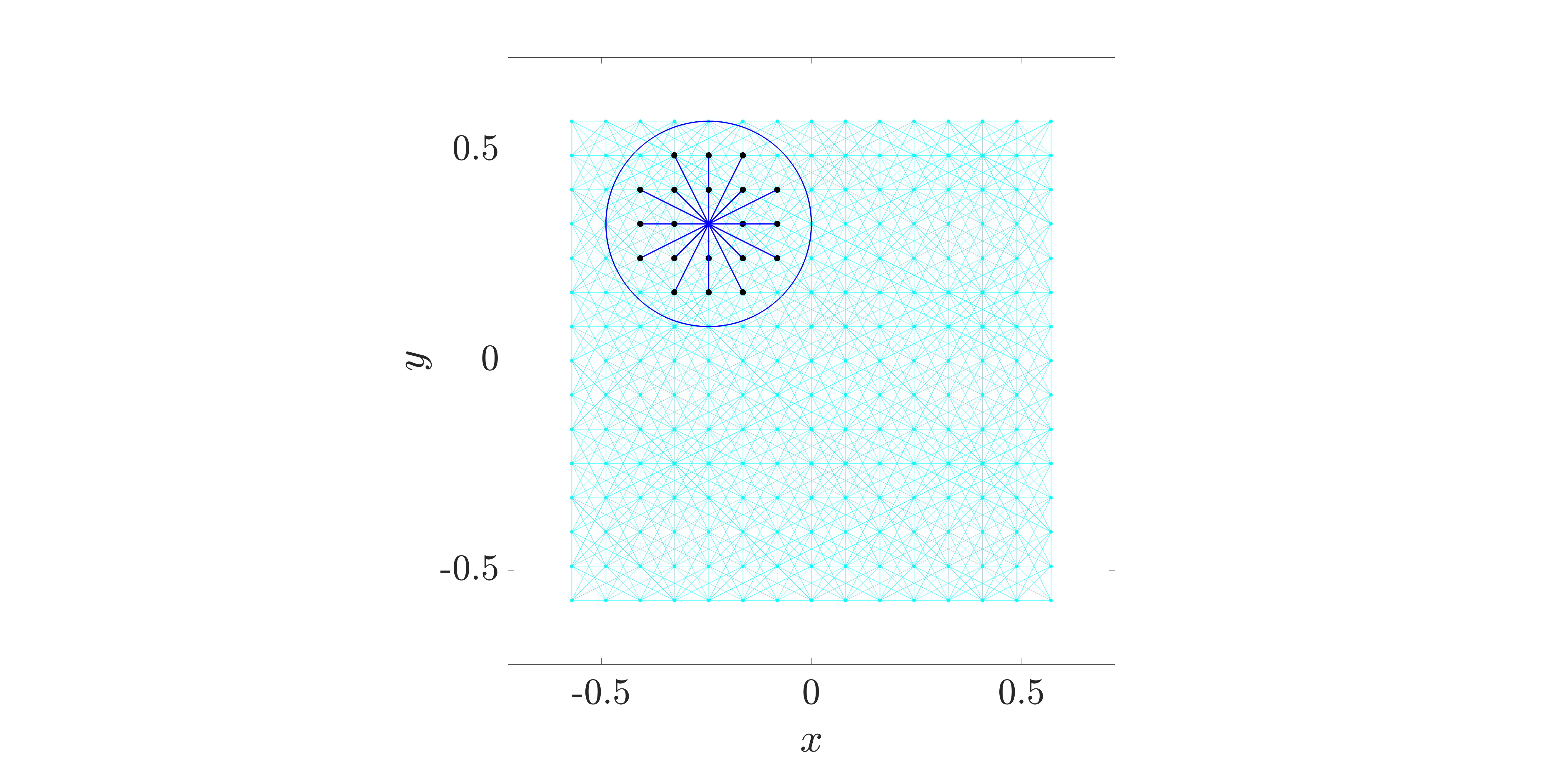}
        \caption{$\bar{s}^2 = \frac{9}{8\sqrt{2}}s_0^2$}
    \end{subfigure}
     \vspace{0.5pc}
    \begin{subfigure}[t]{0.32\textwidth}
        \centering
        {\color{white}$\alpha = 2$}
    \end{subfigure}
    \begin{subfigure}[t]{0.32\textwidth}
        \centering
        \includegraphics[width=\width\textwidth,  trim=22cm 0cm 26cm 2cm,clip]{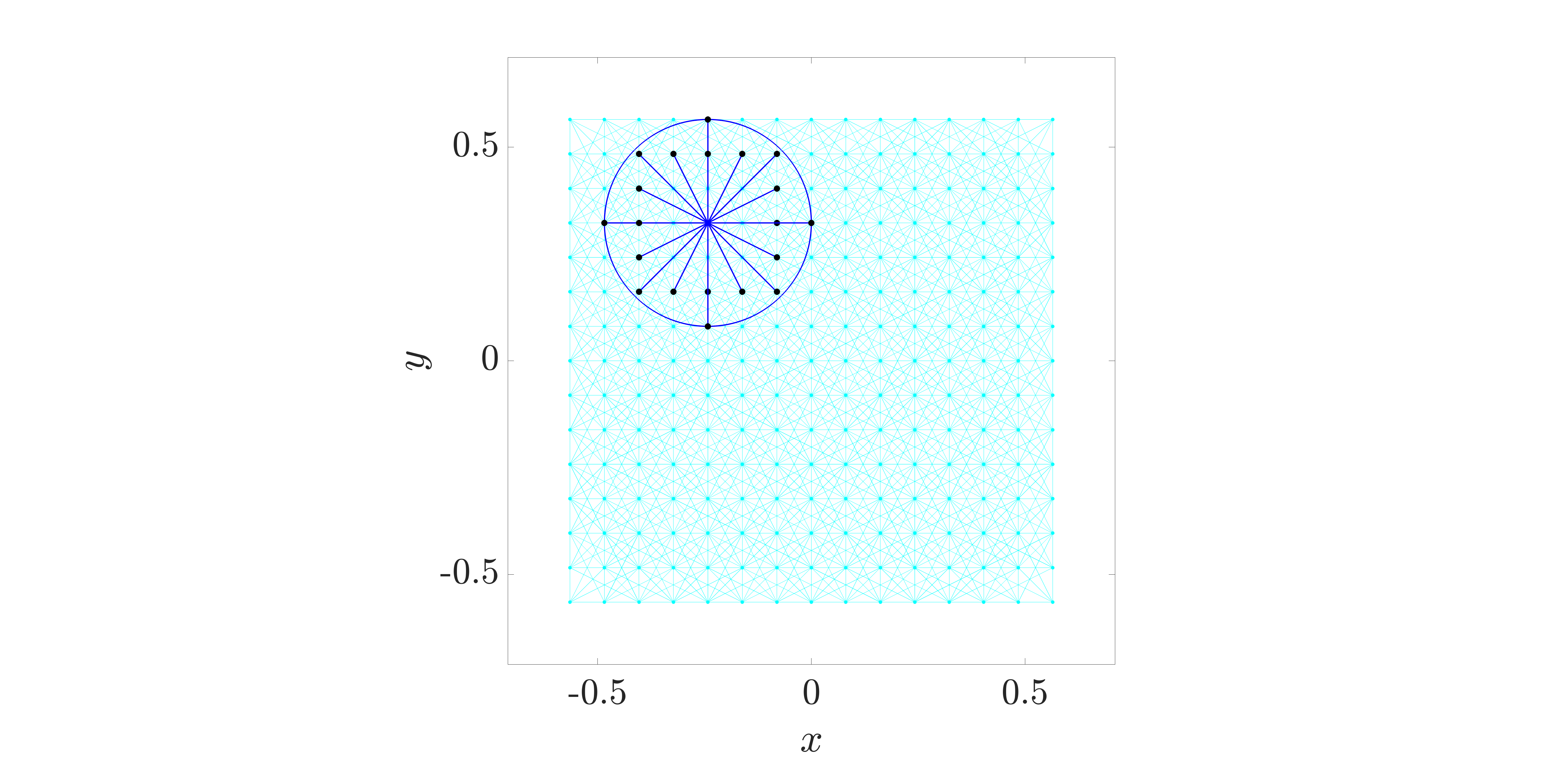}
        \caption{$\bar{s}^2 = \frac{\sqrt{2}}{2}s_0^2$}
    \end{subfigure} 
       \centering
    \begin{subfigure}[t]{0.32\textwidth}
        \centering
\includegraphics[width=\width\textwidth, trim=22cm 0cm 26cm 2cm,clip]{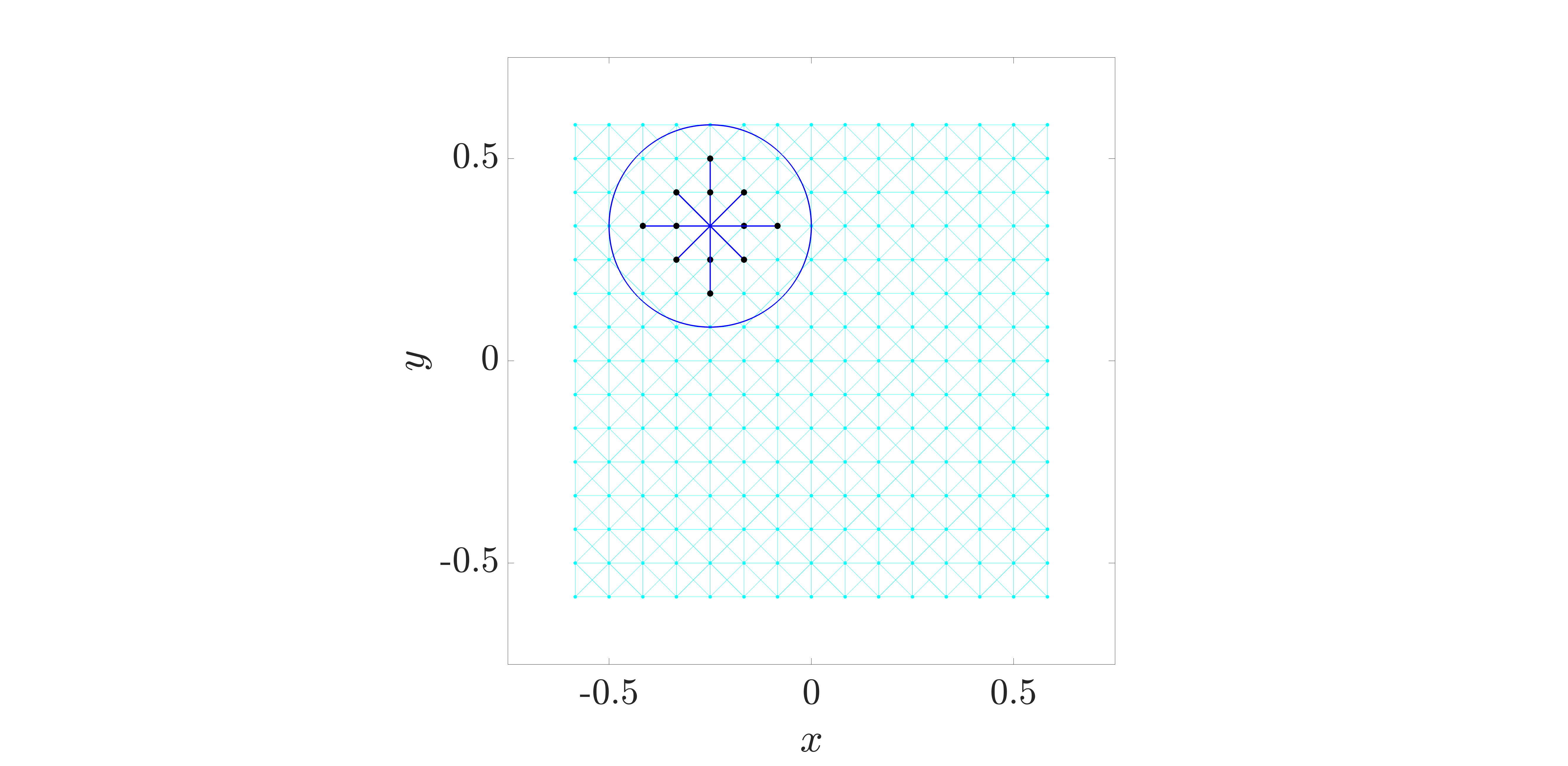}
        \caption{$\bar{s}^2 = \frac{9}{4\sqrt{5}}s_0^2$}
    \end{subfigure}
     \vspace{0.5pc}
    \begin{subfigure}[t]{0.32\textwidth}
        \centering
         {\color{white}$\alpha = 2$}
    \end{subfigure}
    \begin{subfigure}[t]{0.32\textwidth}
        \centering
        \includegraphics[width=\width\textwidth, trim=22cm 0cm 26cm 2cm,clip]{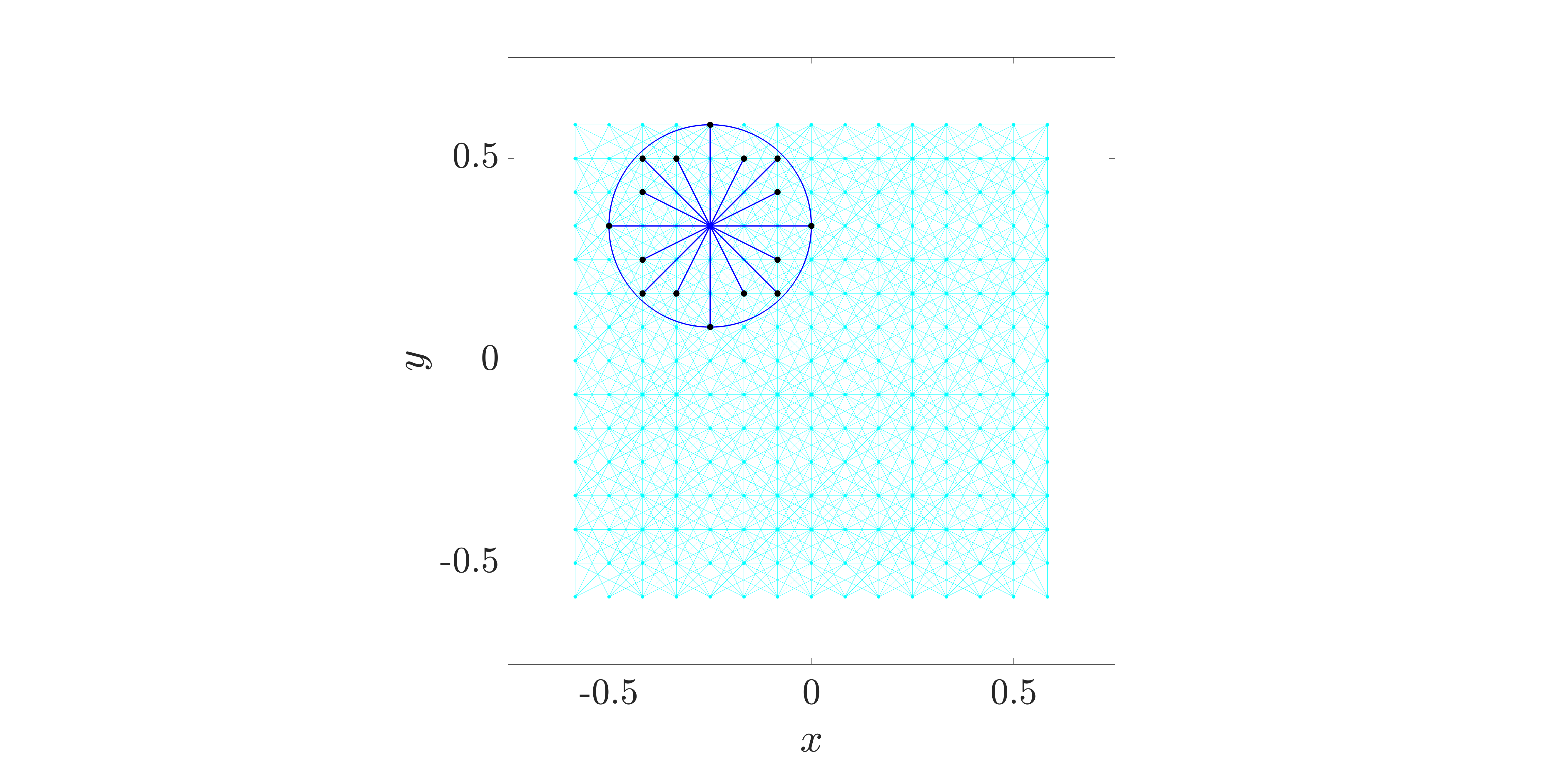}
        \caption{$\bar{s}^2 =s_0^2$}
    \end{subfigure} 
       \centering
    \begin{subfigure}[t]{0.32\textwidth}
        \centering
                \includegraphics[width=\width\textwidth,  trim=22cm 0cm 26cm 2cm,clip]{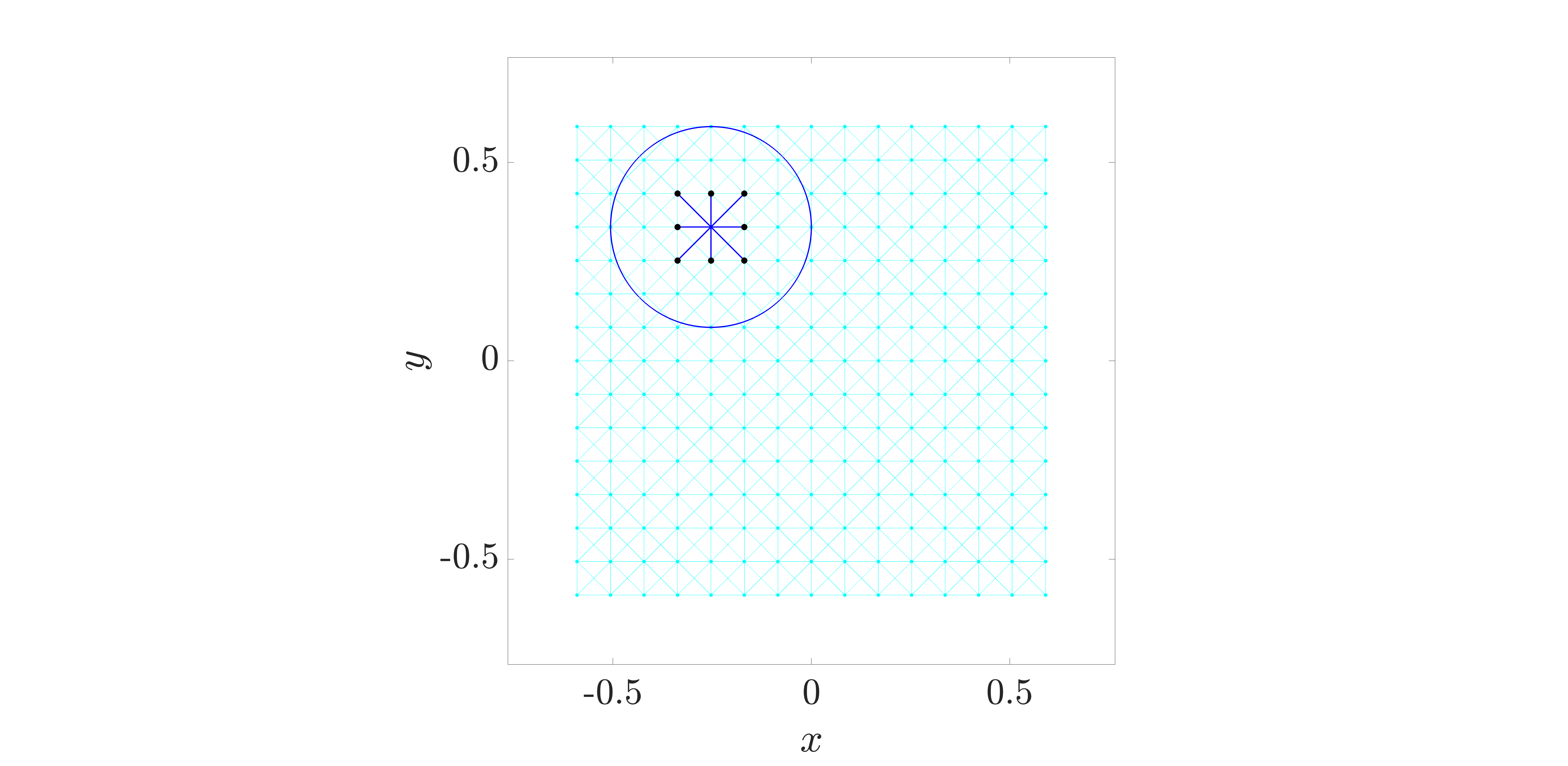}
        \caption{$\bar{s}^2 = \frac{9}{8}s_0^2$}
    \end{subfigure}
     \vspace{0.5pc}
    \begin{subfigure}[t]{0.32\textwidth}
        \centering
       {\color{white}$\alpha = 2$}
    \end{subfigure}
    \begin{subfigure}[t]{0.32\textwidth}
        \centering
        \includegraphics[width=\width\textwidth,  trim=22cm 0cm 26cm 2cm,clip]{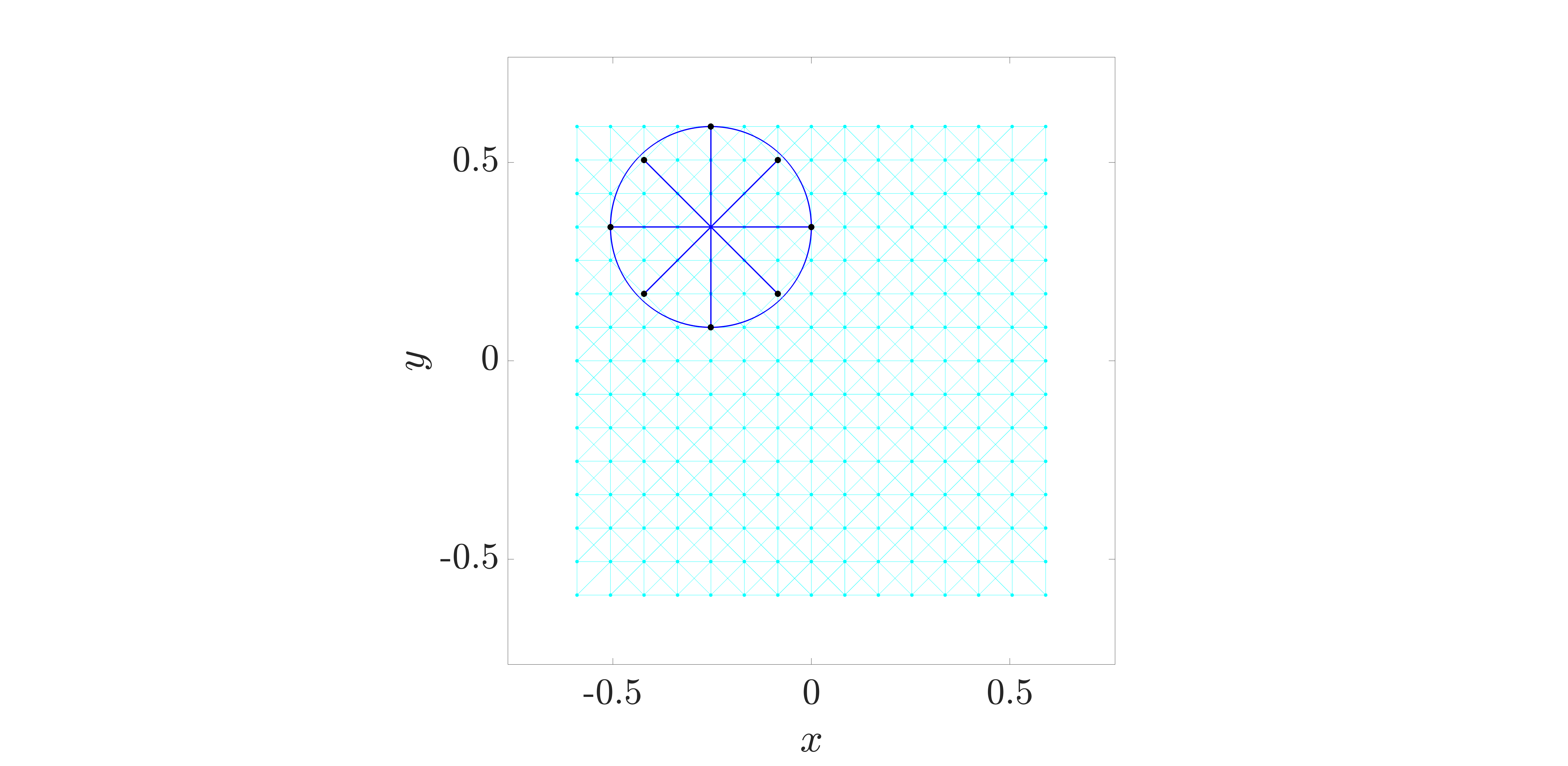}
        \caption{$\bar{s}^2= \frac{\sqrt{5}}{2}s_0^2$}
    \end{subfigure} 
       \centering
    \begin{subfigure}[t]{0.32\textwidth}
        \centering
               \includegraphics[width=\width\textwidth, trim=22cm 0cm 26cm 2cm,clip]{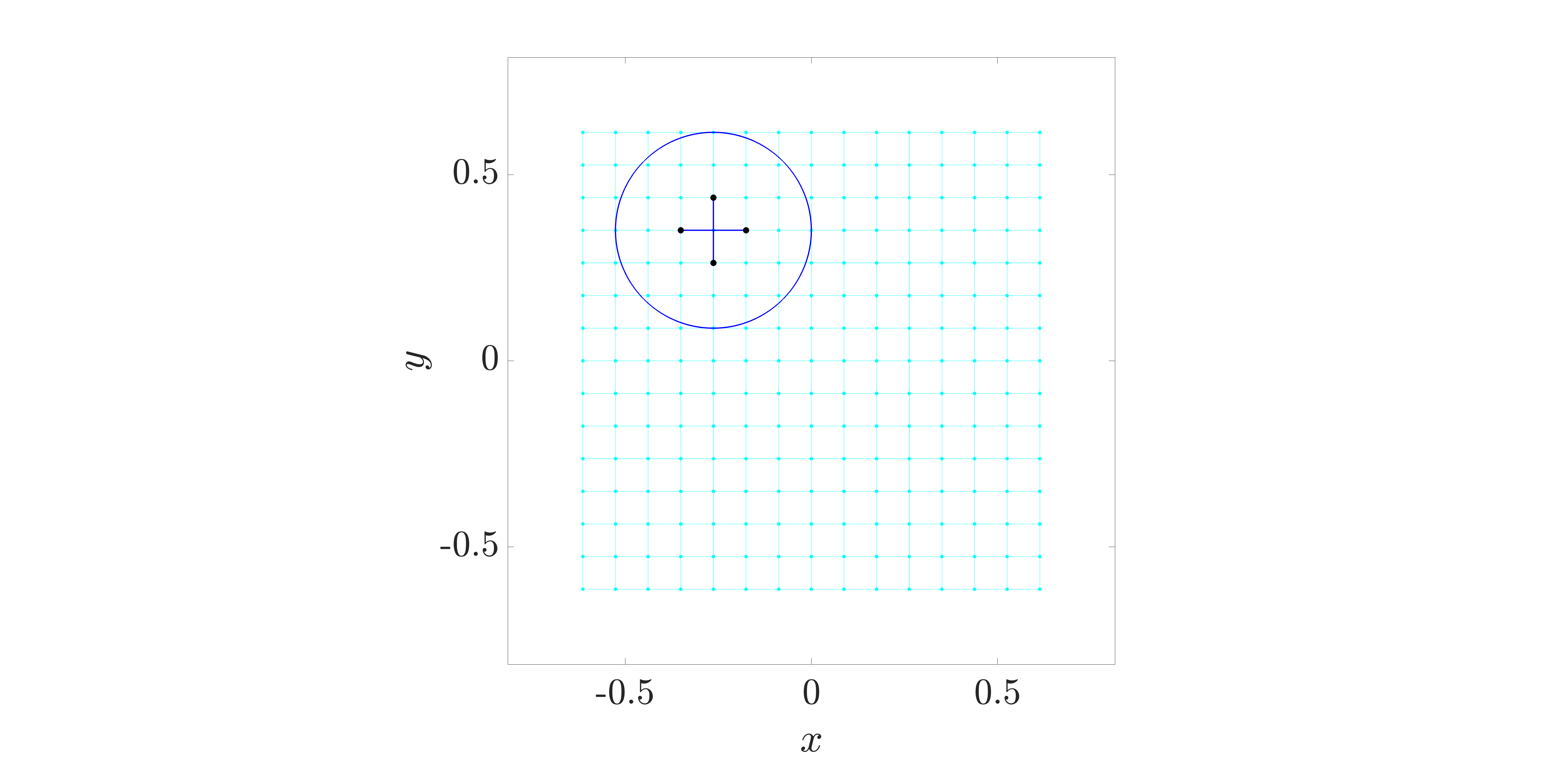}
        \caption{$\bar{s}^2 = \frac{9}{4\sqrt{2}}s_0^2$} 
    \end{subfigure}   
     \vspace{0.5pc}
    \begin{subfigure}[t]{0.32\textwidth}
        \centering
      {\color{white}$\alpha = 2$}
    \end{subfigure}
    \begin{subfigure}[t]{0.32\textwidth}
        \centering
        \includegraphics[width=\width\textwidth, trim=22cm 0cm 26cm 2cm,clip]{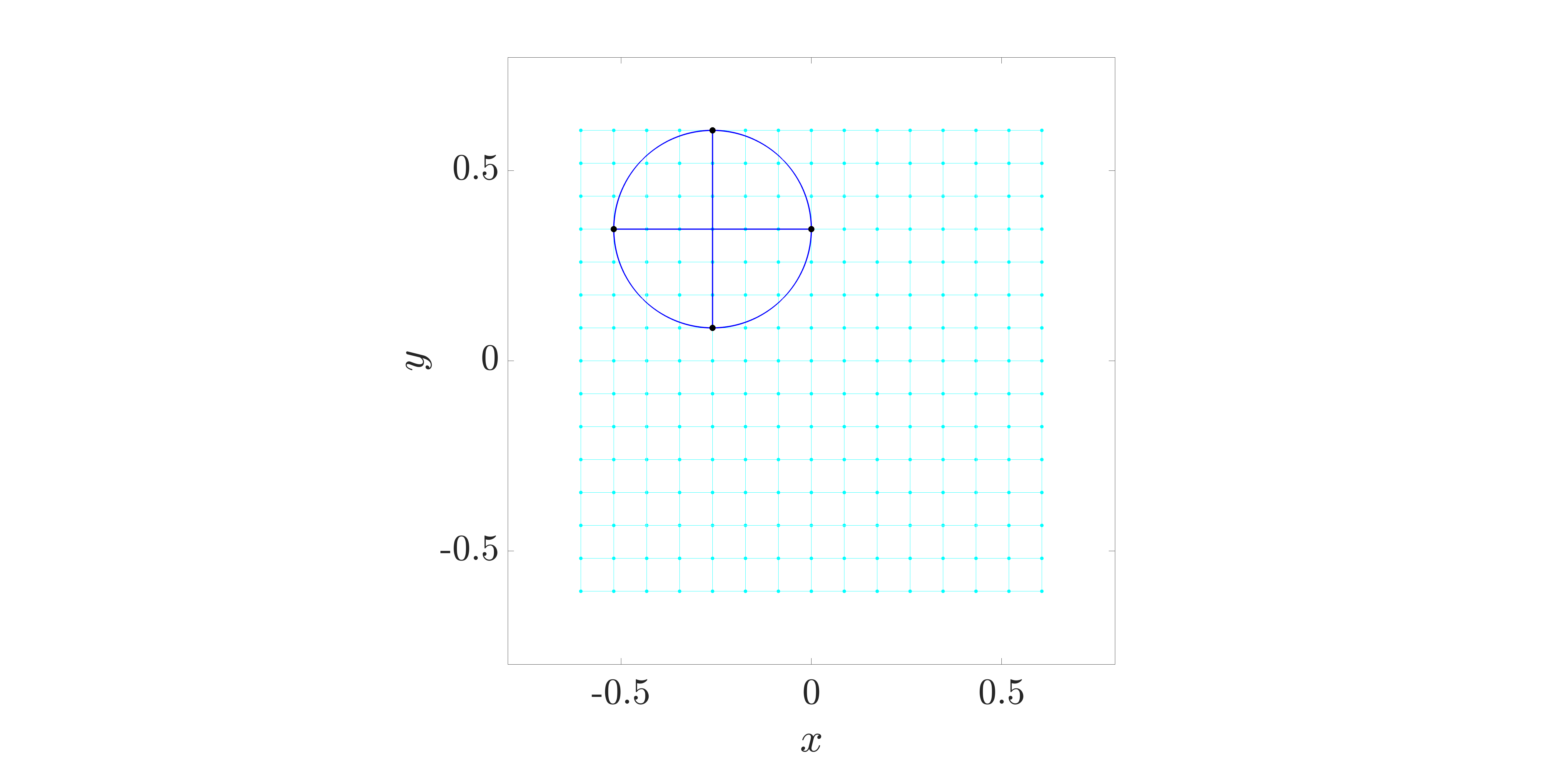}
        \caption{$\bar{s}^2 = \sqrt{2}s_0^2$}
    \end{subfigure} 
       \centering
    \begin{subfigure}[t]{0.32\textwidth}
        \centering
         \includegraphics[width=\width\textwidth, trim=22cm 0cm 26cm 2cm,clip]{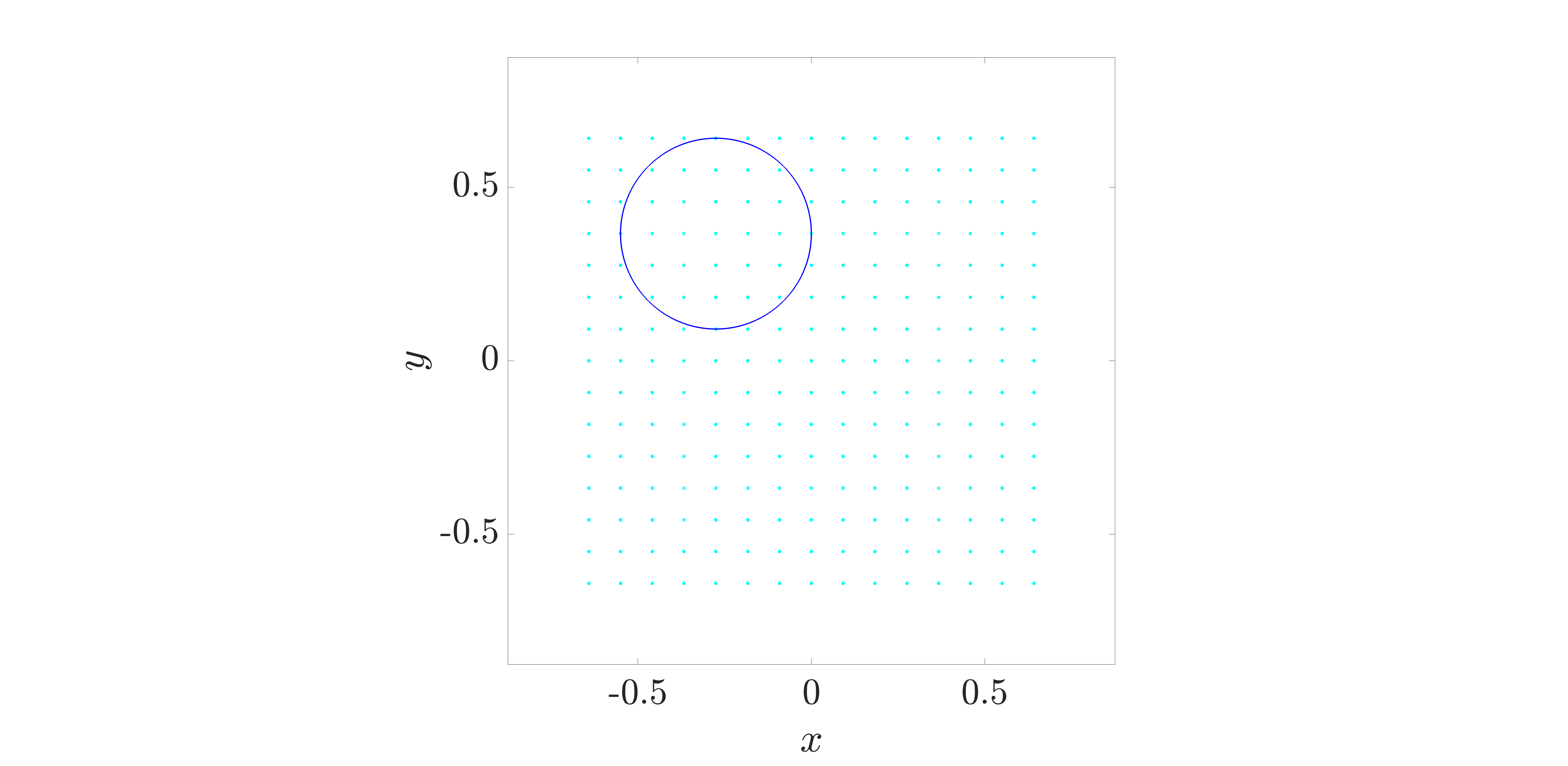}
        \caption{$\bar{s}^2 = \frac{9}{4}s_0^2$}
    \end{subfigure}
    \begin{subfigure}[t]{0.32\textwidth}
        \centering
        \includegraphics[width=\width\textwidth, trim=22cm 0cm 26cm 2cm,clip]{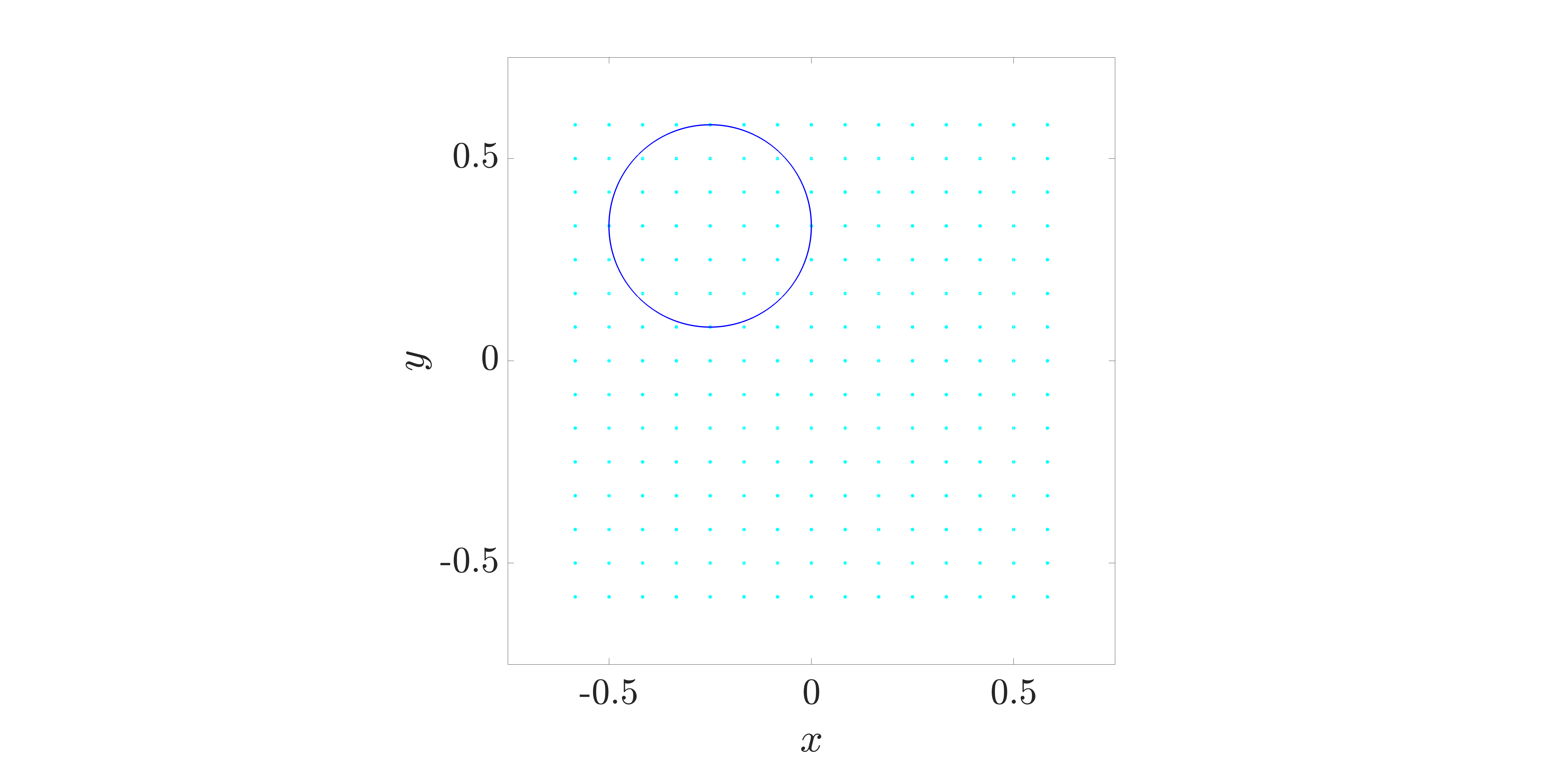}
        \caption{$\bar{s}^2 = s_0^2$}
    \end{subfigure}
    \begin{subfigure}[t]{0.32\textwidth}
        \centering
        \includegraphics[width=\width\textwidth, trim=22cm 0cm 26cm 2cm,clip]{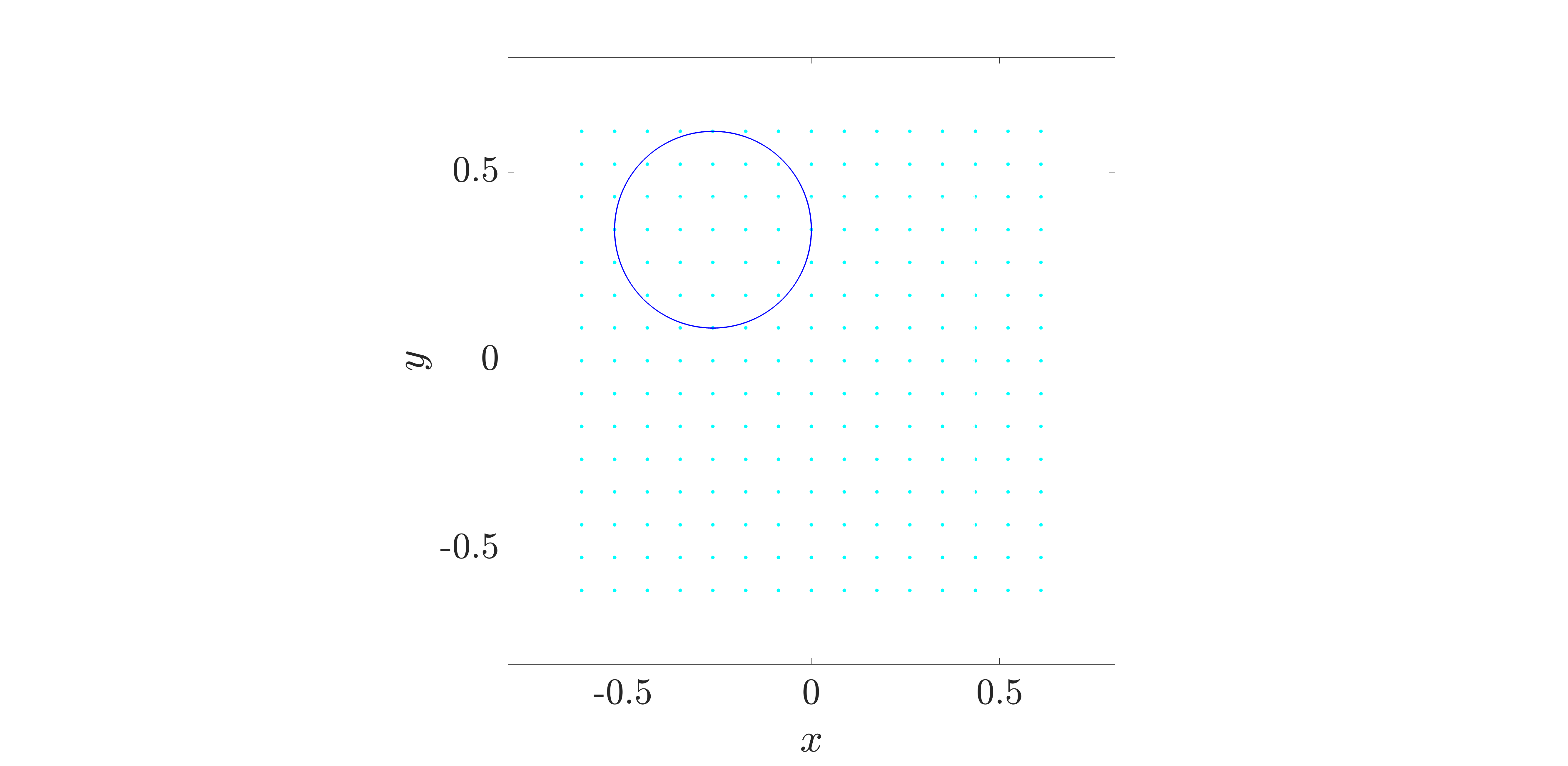}
        \caption{$\bar{s}^2 = \frac{3}{2}s_0^2$}
    \end{subfigure} 
    \caption{Illustration of the bond-breaking patterns for the critical energy density bond-failure criterion for different values of the influence function parameter $\alpha$ in Example~1. 
    }
    \label{Fig: Example 1 results}
\end{figure*}

\subsection{Example 2: Crack tip evolution}
\label{sec: Crack tip evolution}

In this example, we compare the effect of the bond-failure criteria on the evolution of a crack tip. 
Accordingly, we consider the problem of 
crack propagation in 
a pre-notched soda-lime glass thin plate subjected to traction loading~\cite{Bobaru-Zhang-2015}. The details of this problem, including the discretization parameters, are provided in the next section, where we study the full behavior of the system. 
Here, however, 
we focus on the evolution of the crack tip during the initial steps that lead to crack propagation. Our goal is to illustrate 
the sequence of bond breaking for different choices of the influence function parameter $\alpha$ for the critical stretch and critical energy density criteria. 

As opposed to the previous section, where the deformation is prescribed, here the deformation results from the material response to the imposed loading and, as such, depends on the material model chosen, in particular the influence function.  
Indeed, different choices of influence function can result in different elastic and fracture behavior~\cite{SelesonParks}. It was observed experimentally for Homalite that crack propagation is affected by stress wave interactions~(see, e.g.,~\cite{Ravi-ChandarKnauss});  this phenomenon was confirmed numerically through peridynamic simulations in both soda-lime glass and Homalite~(see, e.g.,~\cite{Bobaru-Zhang-2015}). 
Therefore, in this section, rather than comparing simulation results for different influence functions, we study the difference between the bond-failure criteria for given choices of the influence function, i.e., for given values of $\alpha$.  

Here, we consider the case of a traction magnitude of  $\sigma = 0.2$ MPa.  
The results are reported for $\alpha = 0$ in Figures~\ref{Fig: Example 2a results} and \ref{Fig: Example 2b results}, and for $\alpha =2$ in Figures~\ref{Fig: Example 2c results} and~\ref{Fig: Example 2d results}. For clarity, we specify the bonds that break at each step in Table~\ref{table: Example 2 broken bonds alpha 0.} for $\alpha = 0$ and in Table~\ref{table: Example 2 broken bonds alpha 2.} for $\alpha = 2$.

\def\width{0.6}


\begin{figure*}[htbp!]
    \centering
     {\bf The case}~\boldsymbol{$\alpha = 0$}\\[0.1in]
    \begin{subfigure}[t]{0.49\textwidth}
        \centering
      {\bf Critical stretch}\\[0.1in]
        \includegraphics[width=\width\textwidth, trim=0cm 0cm 0cm 0cm,clip]{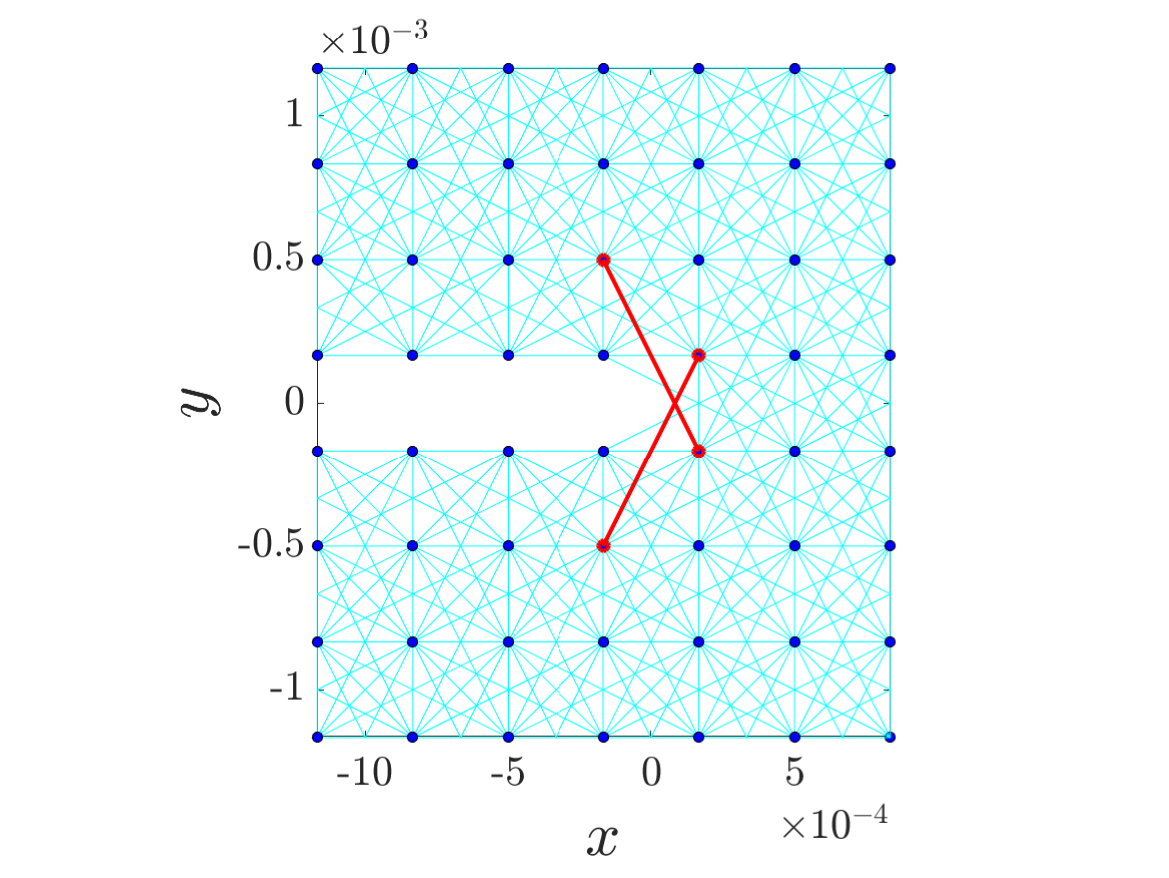}
        \caption{step $= 928$}
         \label{Fig: Example 2a results (a)}
    \end{subfigure}%
     \vspace{0.5pc}
    \begin{subfigure}[t]{0.49\textwidth}
        \centering
    {\bf Critical energy density}\\[0.1in]
       \includegraphics[width=\width\textwidth, trim=0cm 0cm 0cm 0cm,clip]{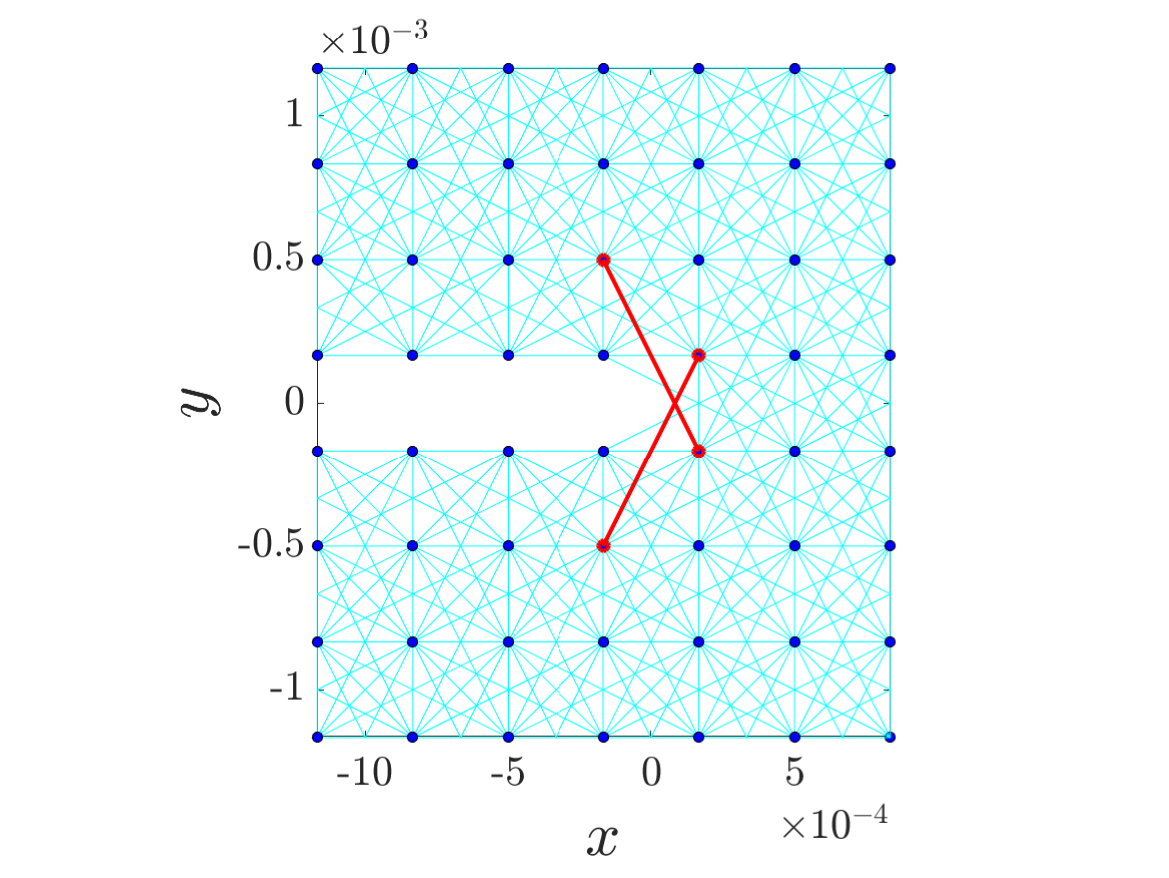}
        \caption{step $= 930$}
         \label{Fig: Example 2a results (b)}
    \end{subfigure}
        \begin{subfigure}[t]{0.49\textwidth}
        \centering
        \includegraphics[width=\width\textwidth, trim=0cm 0cm 0cm 0cm,clip]{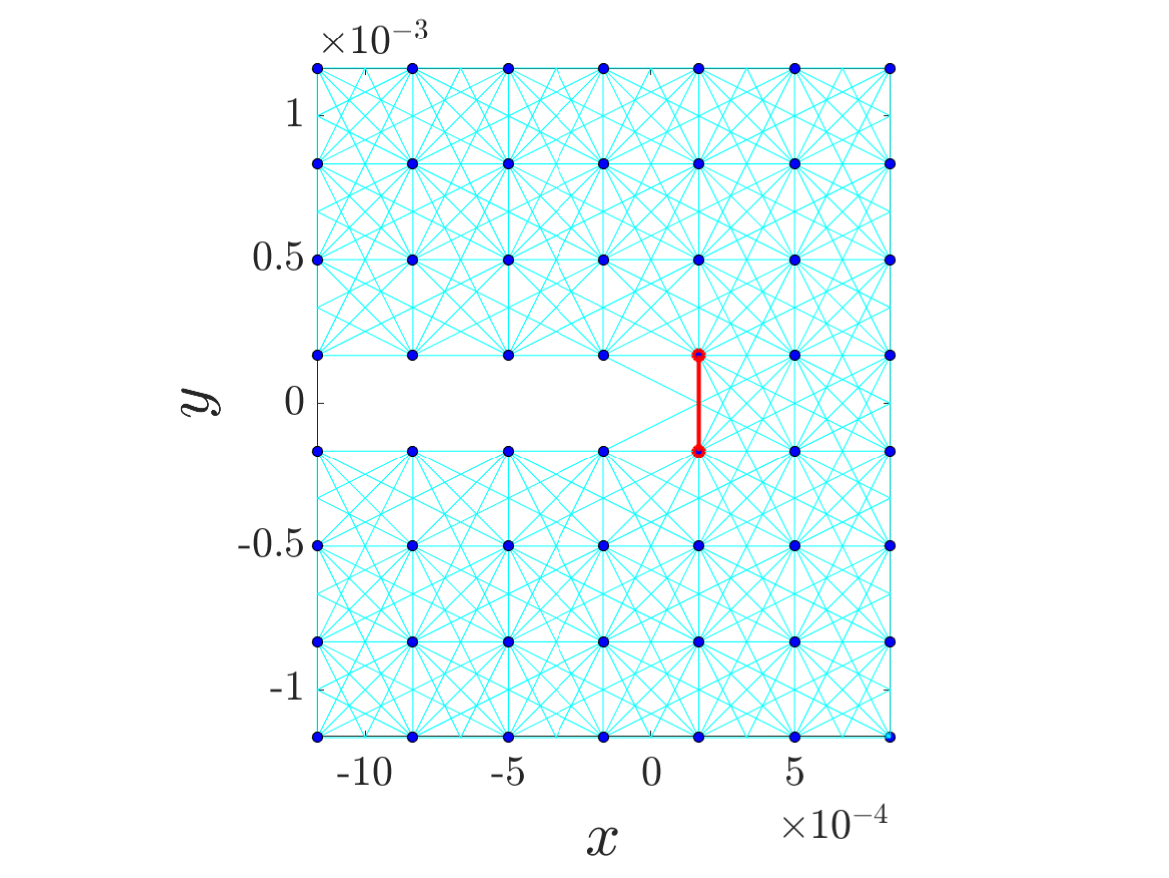}
        \caption{step $= 929$}
         \label{Fig: Example 2a results (c)}
    \end{subfigure}%
     \vspace{0.5pc}
    \begin{subfigure}[t]{0.49\textwidth}
        \centering
        \includegraphics[width=\width\textwidth, trim=0cm 0cm 0cm 0cm,clip]{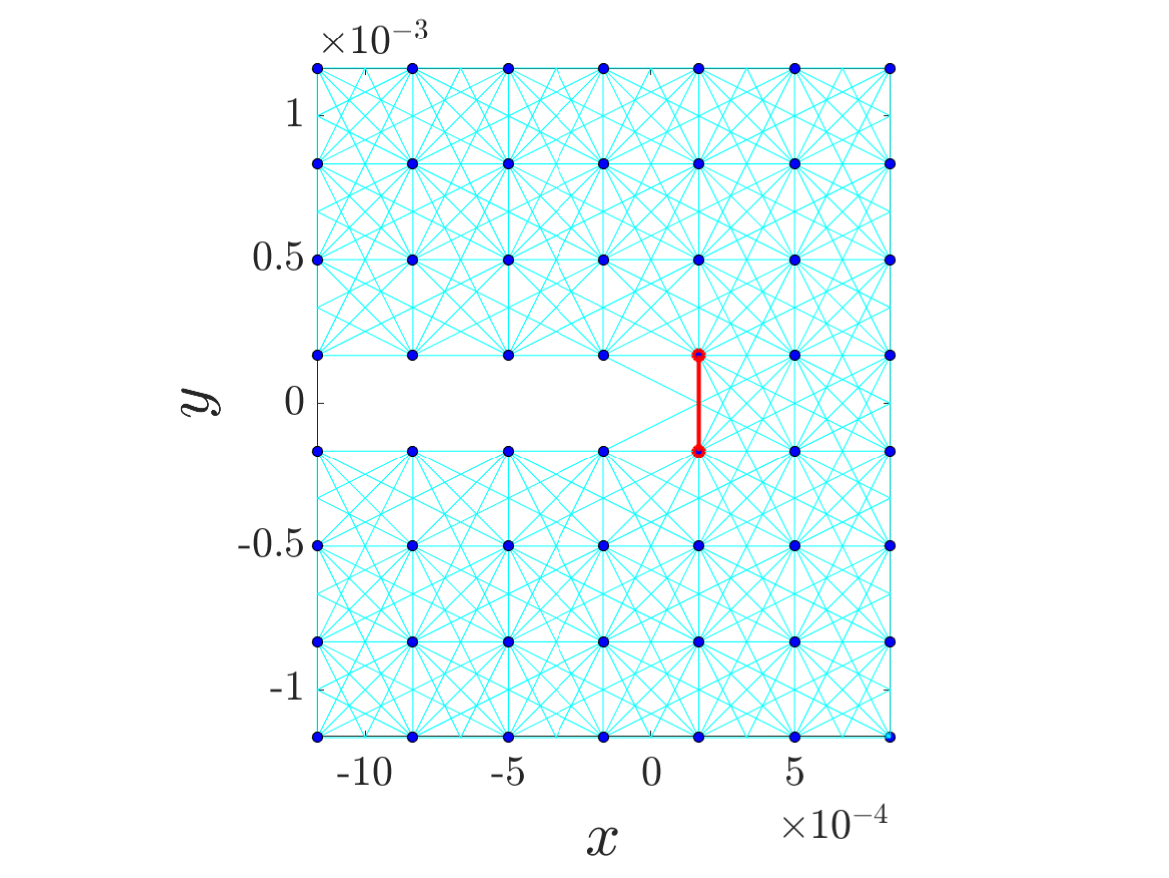}
        \caption{step $= 933$}
         \label{Fig: Example 2a results (d)}
    \end{subfigure}
        \begin{subfigure}[t]{0.49\textwidth}
        \centering
        \includegraphics[width=\width\textwidth, trim=0cm 0cm 0cm 0cm,clip]{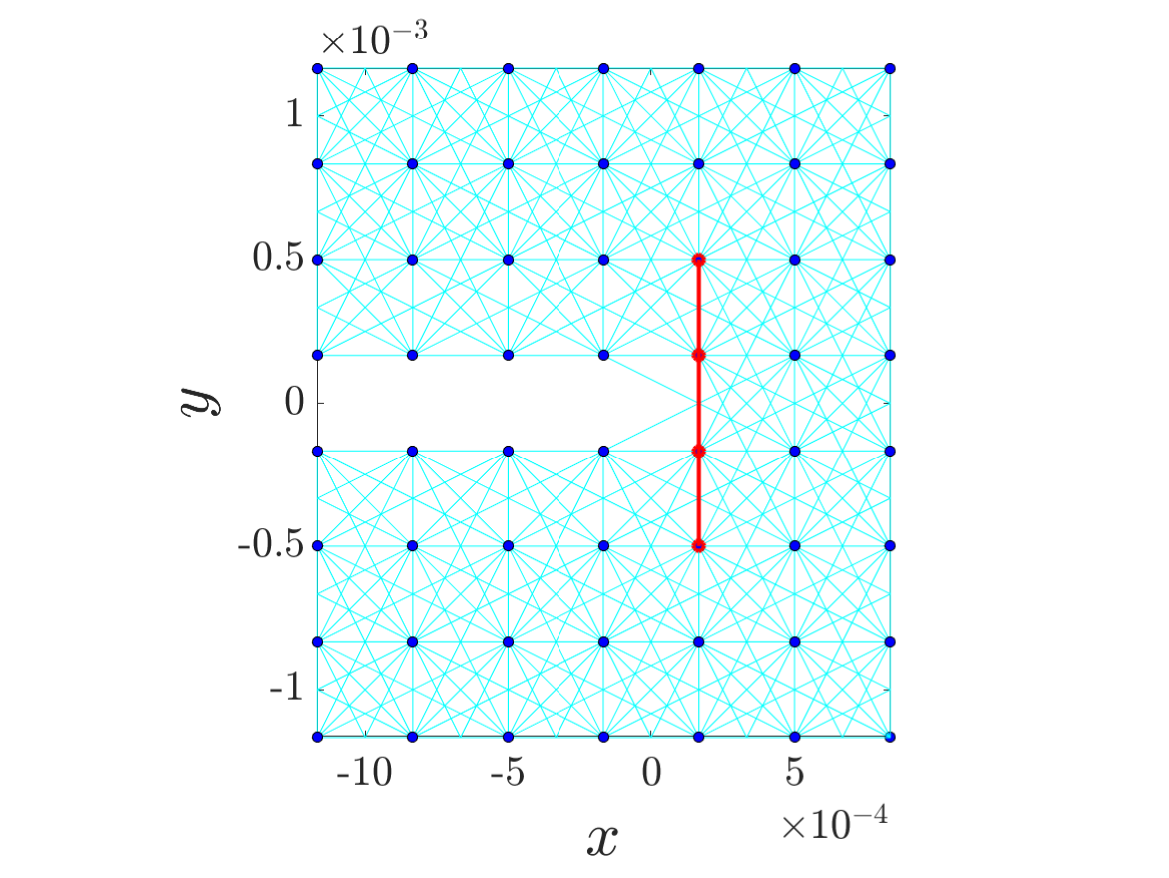}
        \caption{step $= 930$}
         \label{Fig: Example 2a results (e)}
    \end{subfigure}%
     \vspace{0.5pc}
    \begin{subfigure}[t]{0.49\textwidth}
        \centering
        \includegraphics[width=\width\textwidth, trim=0cm 0cm 0cm 0cm,clip]{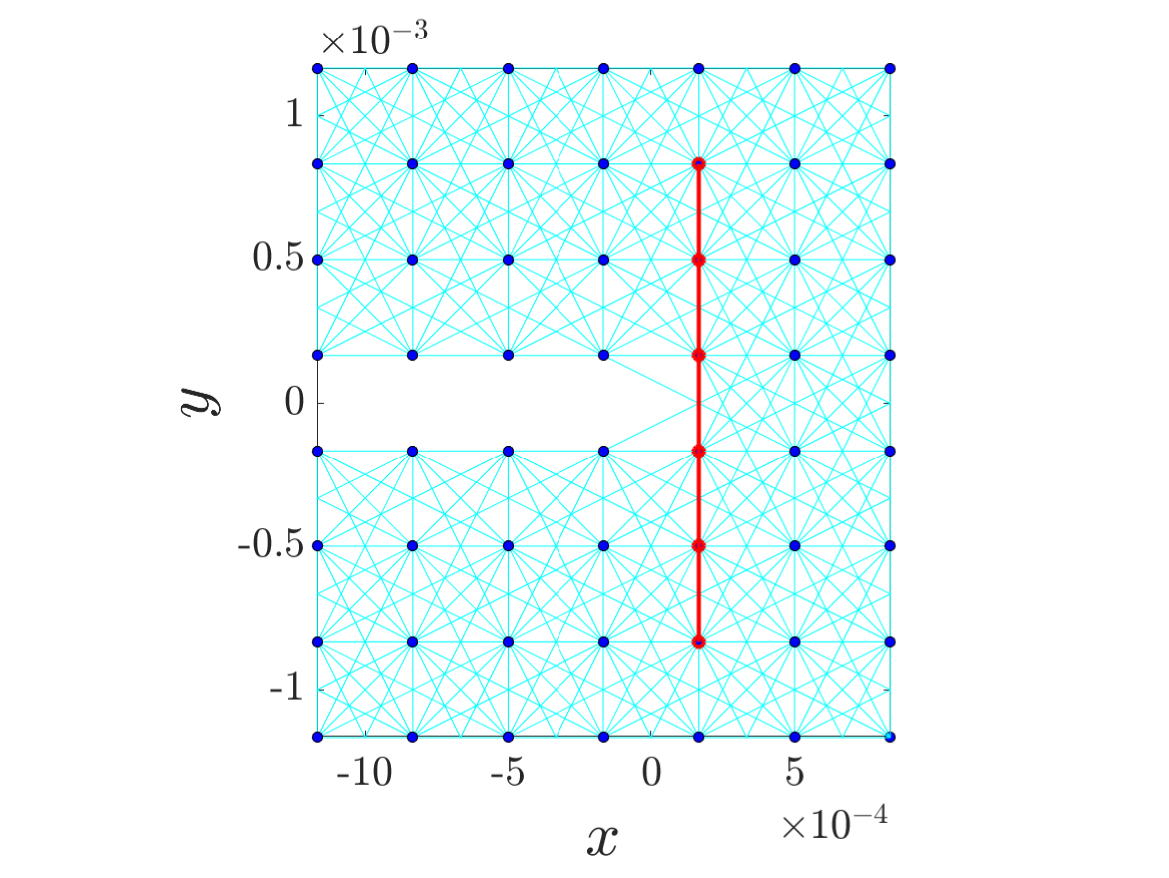}
        \caption{step $= 935$}
         \label{Fig: Example 2a results (f)}
    \end{subfigure}
            \begin{subfigure}[t]{0.49\textwidth}
        \centering
        \includegraphics[width=\width\textwidth, trim=0cm 0cm 0cm 0cm,clip]{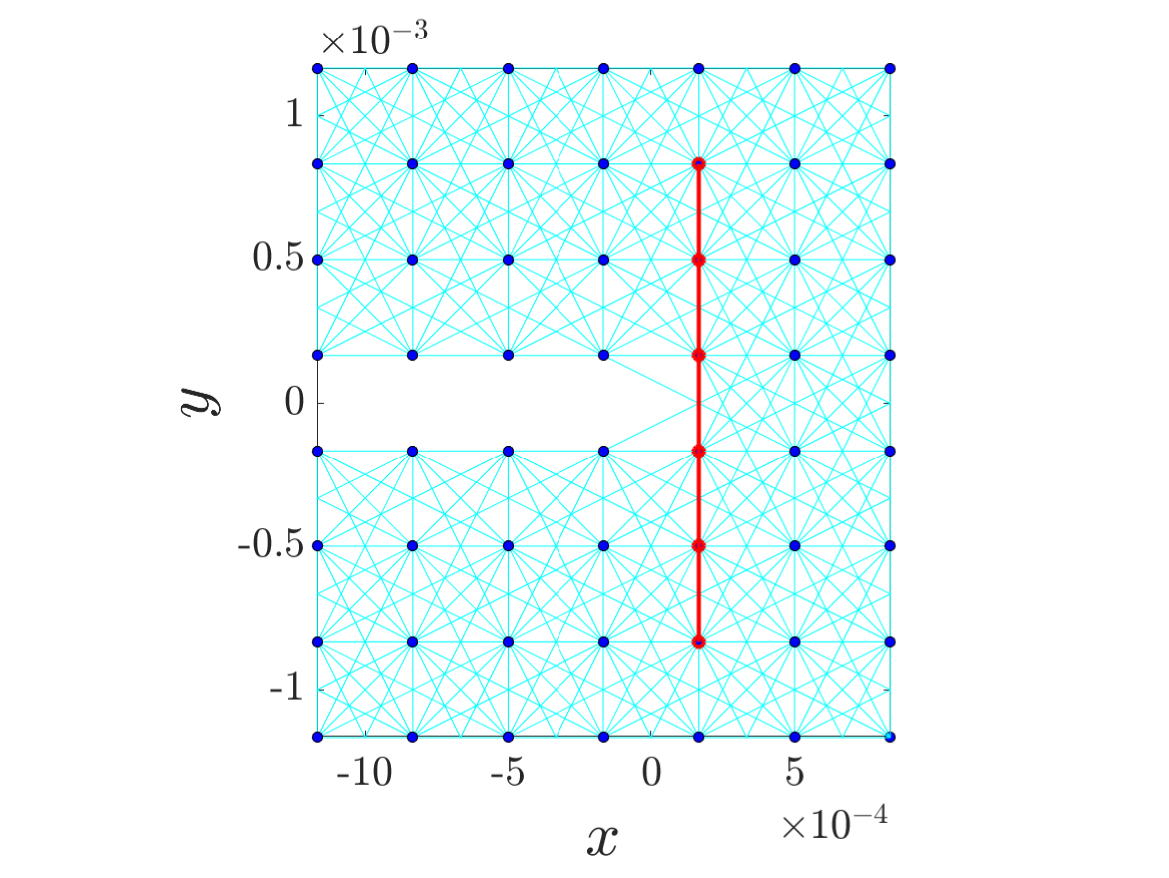}
        \caption{step $= 932$}
         \label{Fig: Example 2a results (g)}
    \end{subfigure}%
     \vspace{0.5pc}
    \begin{subfigure}[t]{0.49\textwidth}
        \centering
      \includegraphics[width=\width\textwidth, trim=0cm 0cm 0cm 0cm,clip]{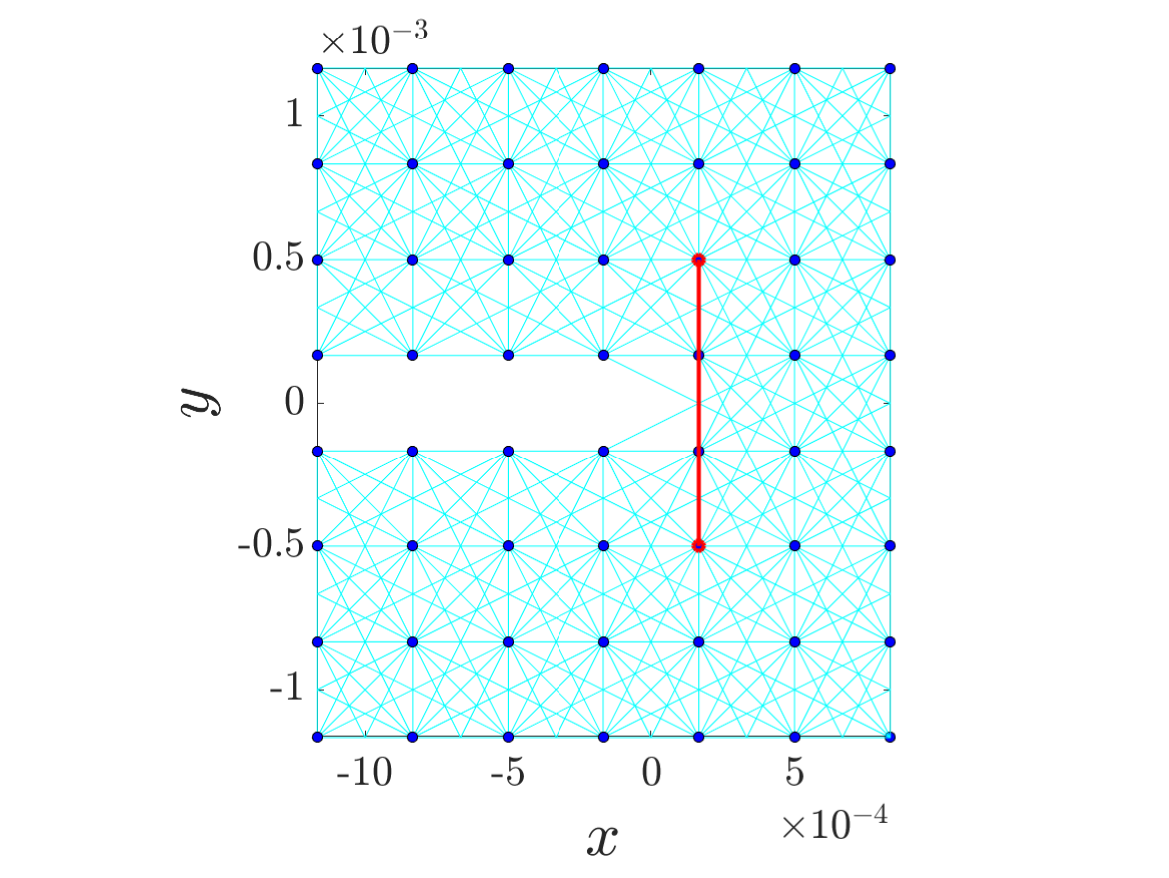}
        \caption{step $= 936$}
         \label{Fig: Example 2a results (h)}
    \end{subfigure}
                \begin{subfigure}[t]{0.49\textwidth}
        \centering
        \includegraphics[width=\width\textwidth, trim=0cm 0cm 0cm 0cm,clip]{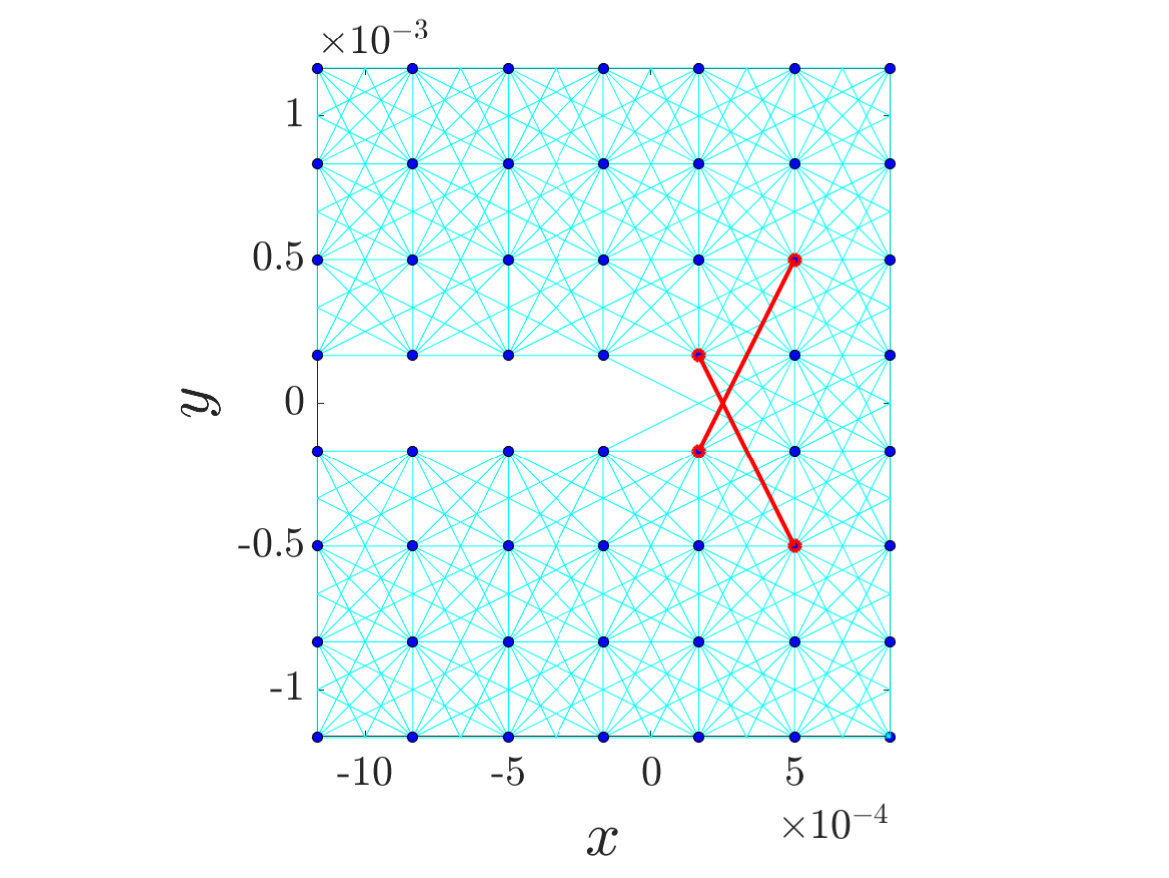}
        \caption{step $= 933$}
         \label{Fig: Example 2a results (i)}
    \end{subfigure}%
     \vspace{0.5pc}
    \begin{subfigure}[t]{0.49\textwidth}
        \centering
        \includegraphics[width=\width\textwidth, trim=0cm 0cm 0cm 0cm,clip]{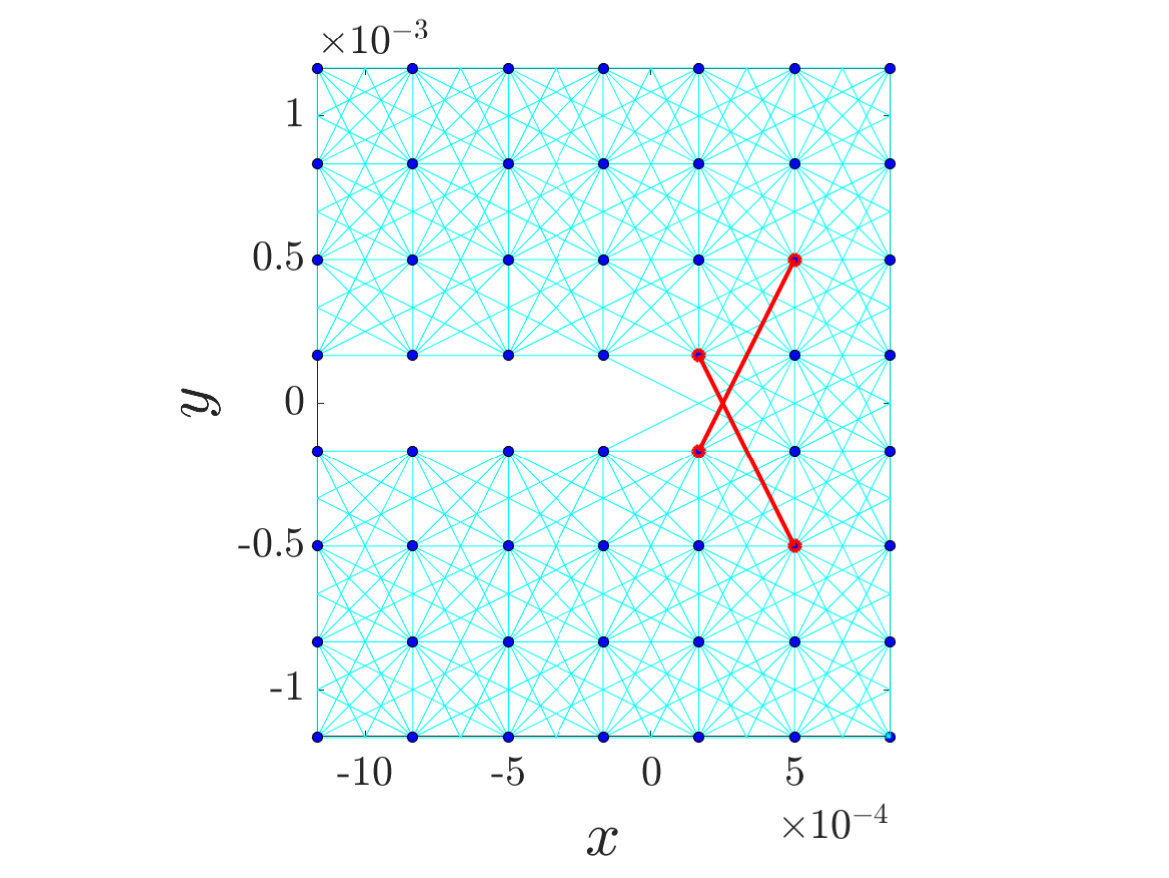}
        \caption{step $= 937$}
         \label{Fig: Example 2a results (j)}
    \end{subfigure}
        \caption{Comparison of the crack tip evolution between the two bond-failure criteria for $\alpha = 0$ in Example 2.} 
    \label{Fig: Example 2a results}
\end{figure*} 


\begin{figure*}[htbp!]
    \centering
    {\bf The case}~\boldsymbol{$\alpha = 0$} {\bf (cont.)}\\[0.1in]
    \begin{subfigure}[t]{0.49\textwidth}
        \centering
      {\bf Critical stretch}\\[0.1in]
        \includegraphics[width=\width\textwidth, trim=0cm 0cm 0cm 0cm,clip]{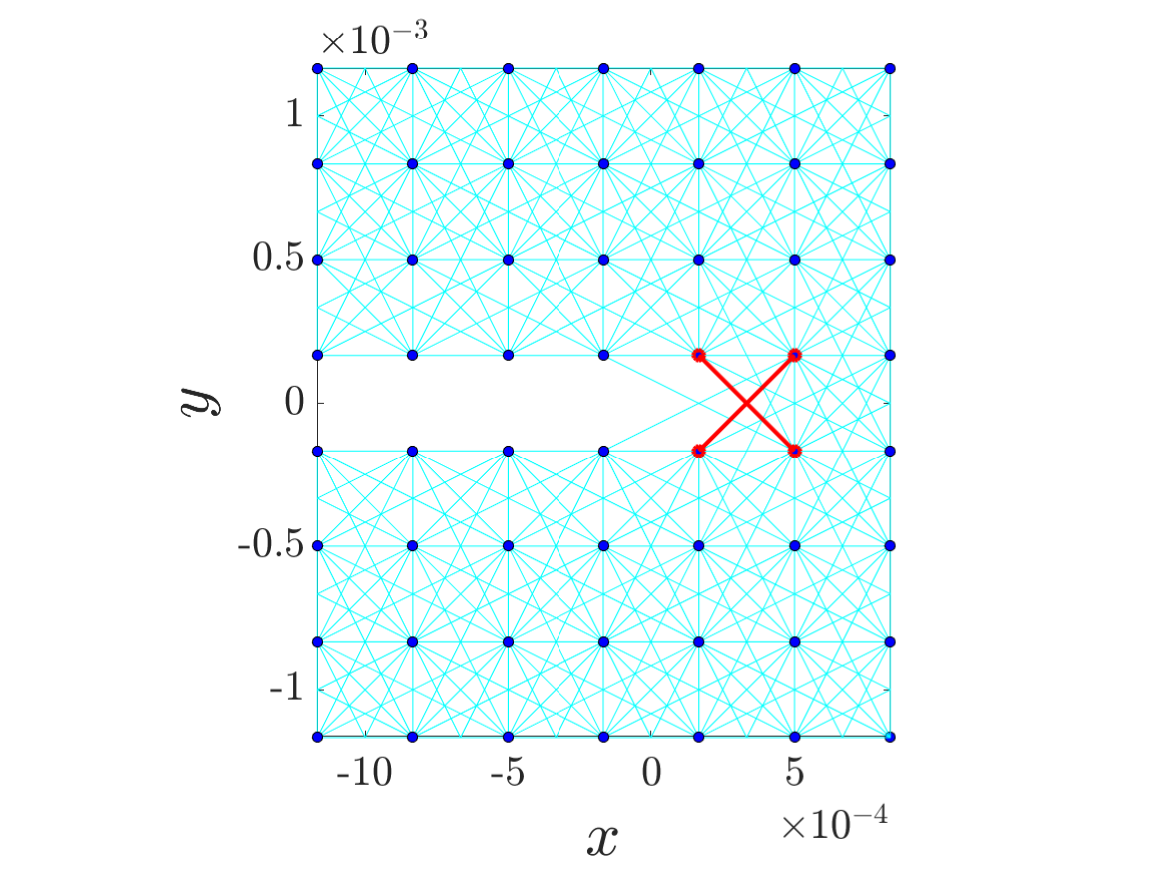}
        \caption{step $= 935$}
    \label{Fig: Example 2b results (a)}
    \end{subfigure}%
     \vspace{0.5pc}
    \begin{subfigure}[t]{0.49\textwidth}
        \centering
    {\bf Critical energy density}\\[0.1in]
        \includegraphics[width=\width\textwidth, trim=0cm 0cm 0cm 0cm,clip]{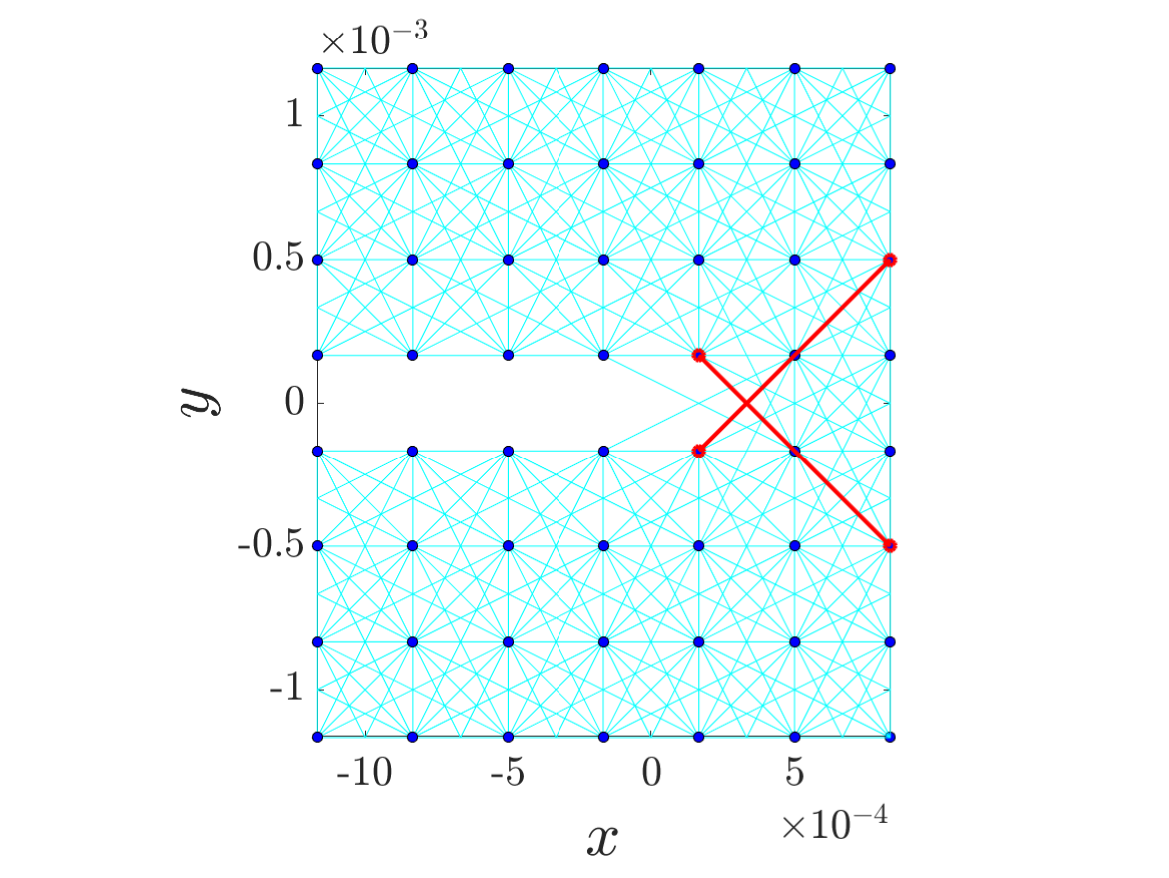}
        \caption{step $= 1001$}
    \label{Fig: Example 2b results (b)}
    \end{subfigure}
        \begin{subfigure}[t]{0.49\textwidth}
        \centering
       \includegraphics[width=\width\textwidth, trim=0cm 0cm 0cm 0cm,clip]{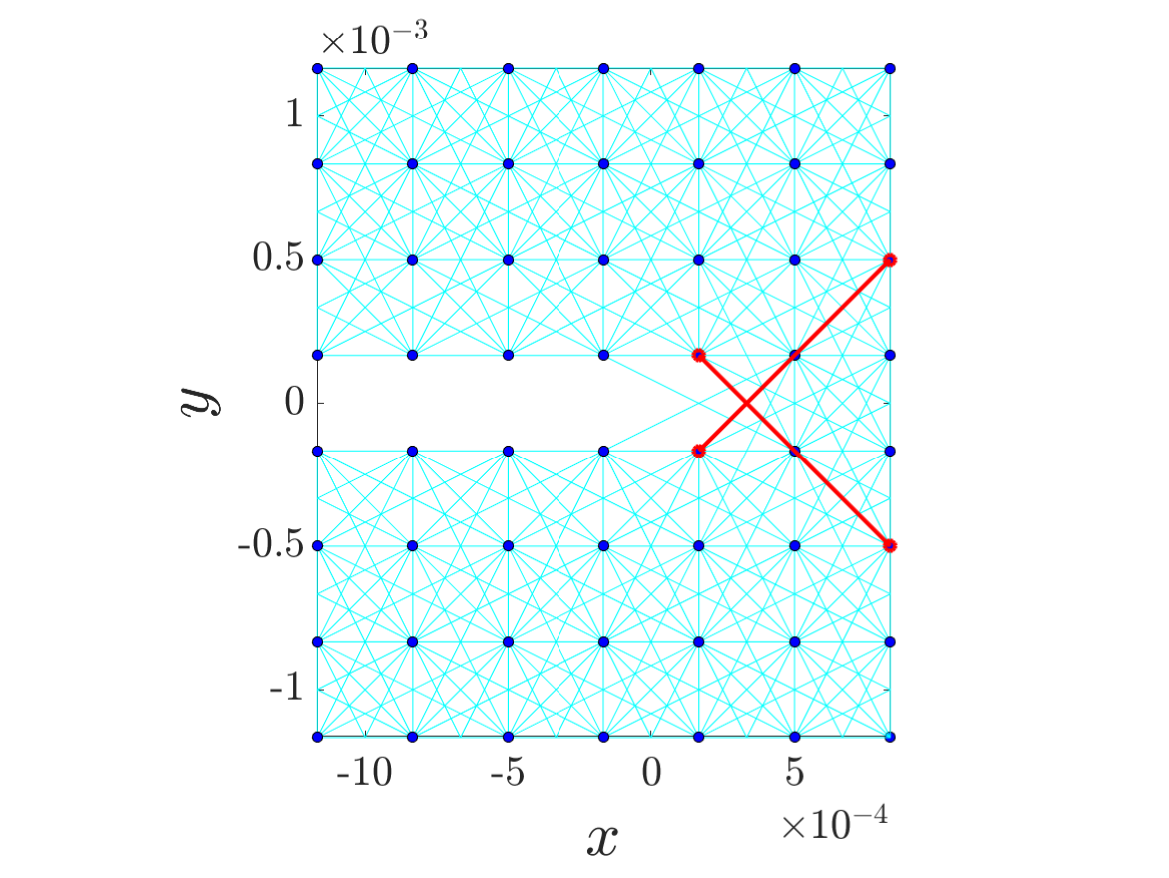}
        \caption{step $= 1036$}
    \label{Fig: Example 2b results (c)}
    \end{subfigure}%
     \vspace{0.5pc}
    \begin{subfigure}[t]{0.49\textwidth}
        \centering
       \includegraphics[width=\width\textwidth, trim=0cm 0cm 0cm 0cm,clip]{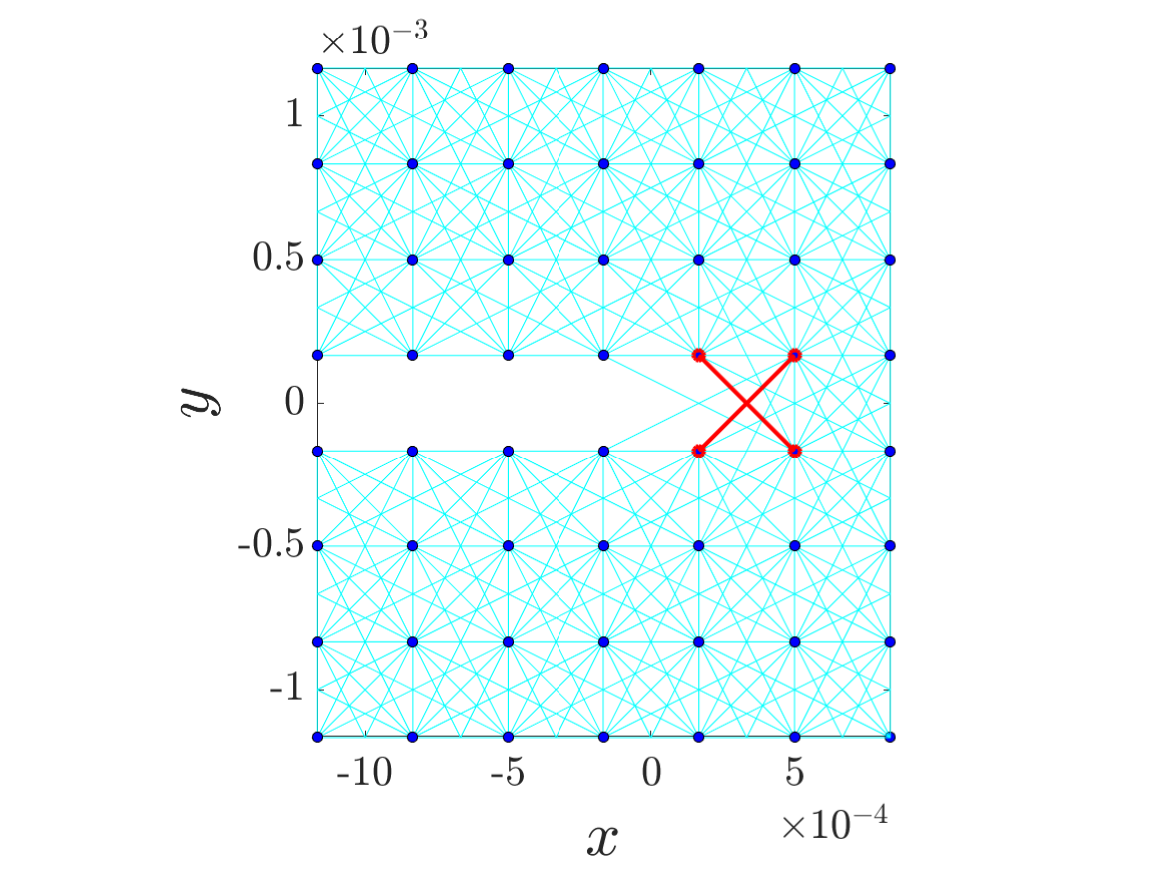}
        \caption{step $= 1003$}
    \label{Fig: Example 2b results (d)}
    \end{subfigure}
            \begin{subfigure}[t]{0.49\textwidth}
        \centering
       \includegraphics[width=\width\textwidth, trim=0cm 0cm 0cm 0cm,clip]{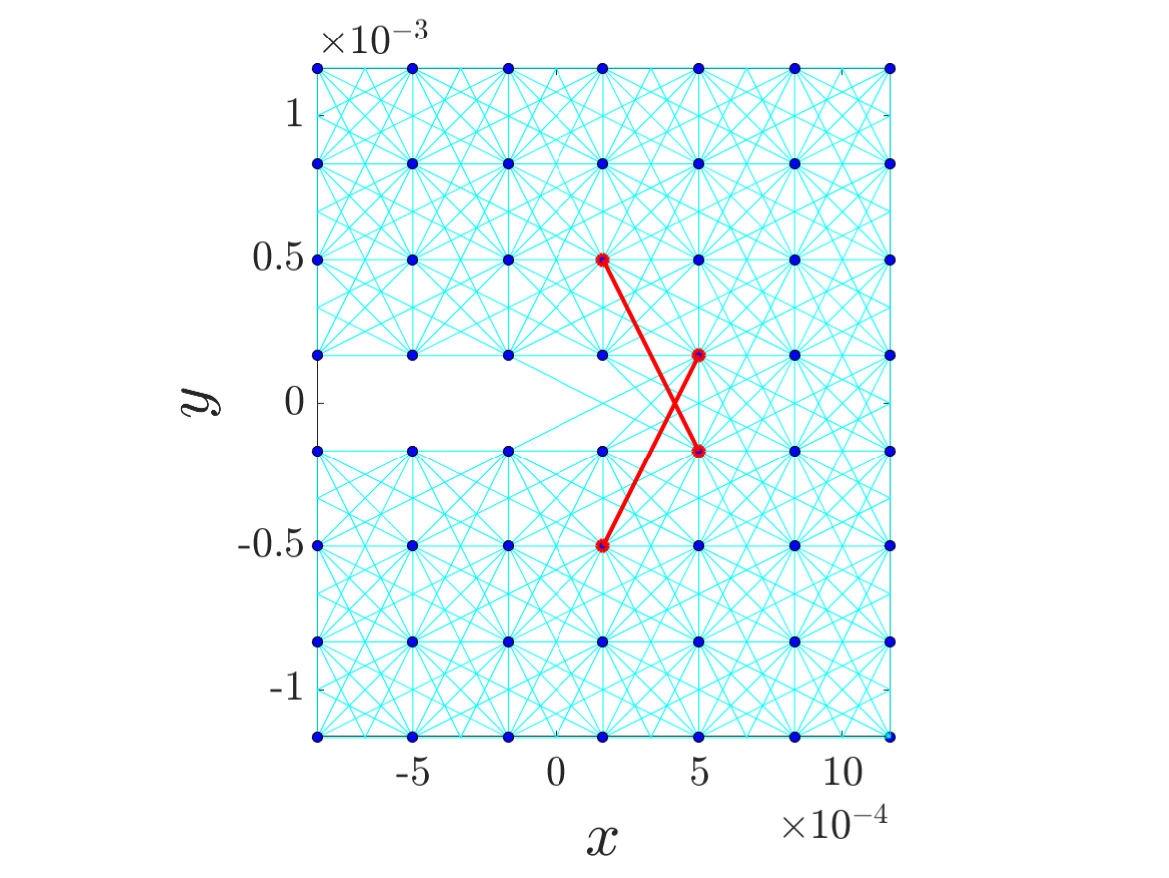}
        \caption{step $= 1041$}
    \label{Fig: Example 2b results (e)}
    \end{subfigure}%
     \vspace{0.5pc}
    \begin{subfigure}[t]{0.49\textwidth}
        \centering
        \includegraphics[width=\width\textwidth, trim=0cm 0cm 0cm 0cm,clip]{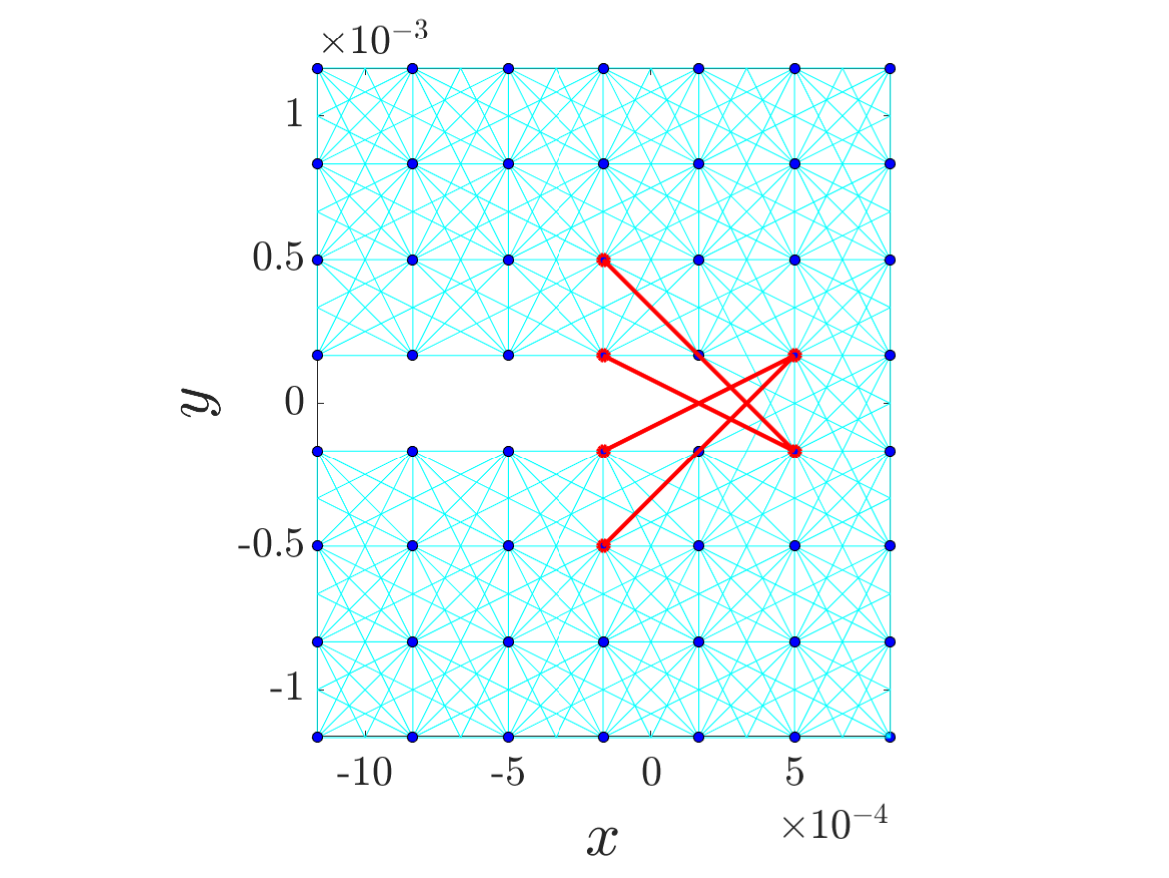}
        \caption{step $= 1005$}
    \label{Fig: Example 2b results (f)}
    \end{subfigure}
        \begin{subfigure}[t]{0.49\textwidth}
        \centering
       \includegraphics[width=\width\textwidth, trim=0cm 0cm 0cm 0cm,clip]{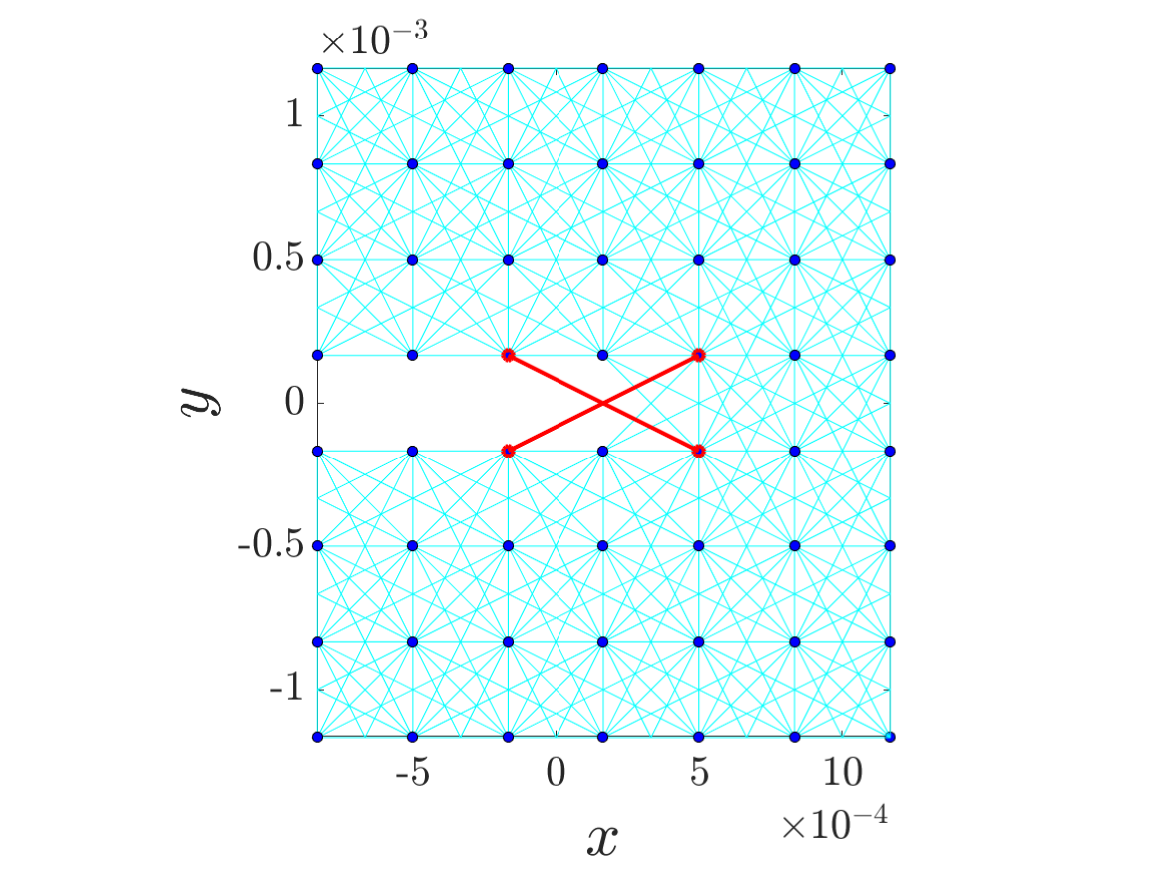}
        \caption{step $= 1042$}
    \label{Fig: Example 2b results (g)}
    \end{subfigure}%
     \vspace{0.5pc}
    \begin{subfigure}[t]{0.49\textwidth}
        \centering
        \includegraphics[width=\width\textwidth, trim=0cm 0cm 0cm 0cm,clip]{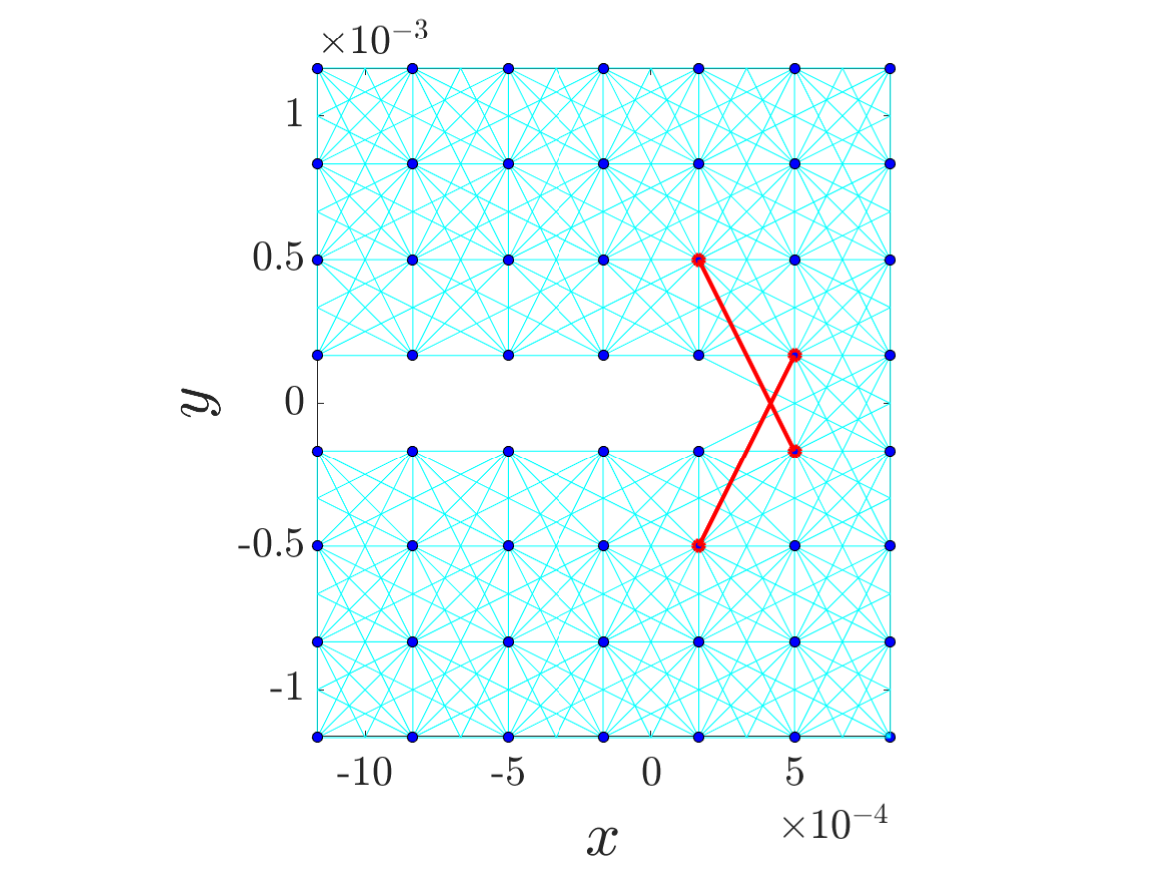}
        \caption{step $= 1006$}
    \label{Fig: Example 2b results (h)}
    \end{subfigure}
            \begin{subfigure}[t]{0.49\textwidth}
        \centering
        \includegraphics[width=\width\textwidth, trim=0cm 0cm 0cm 0cm,clip]{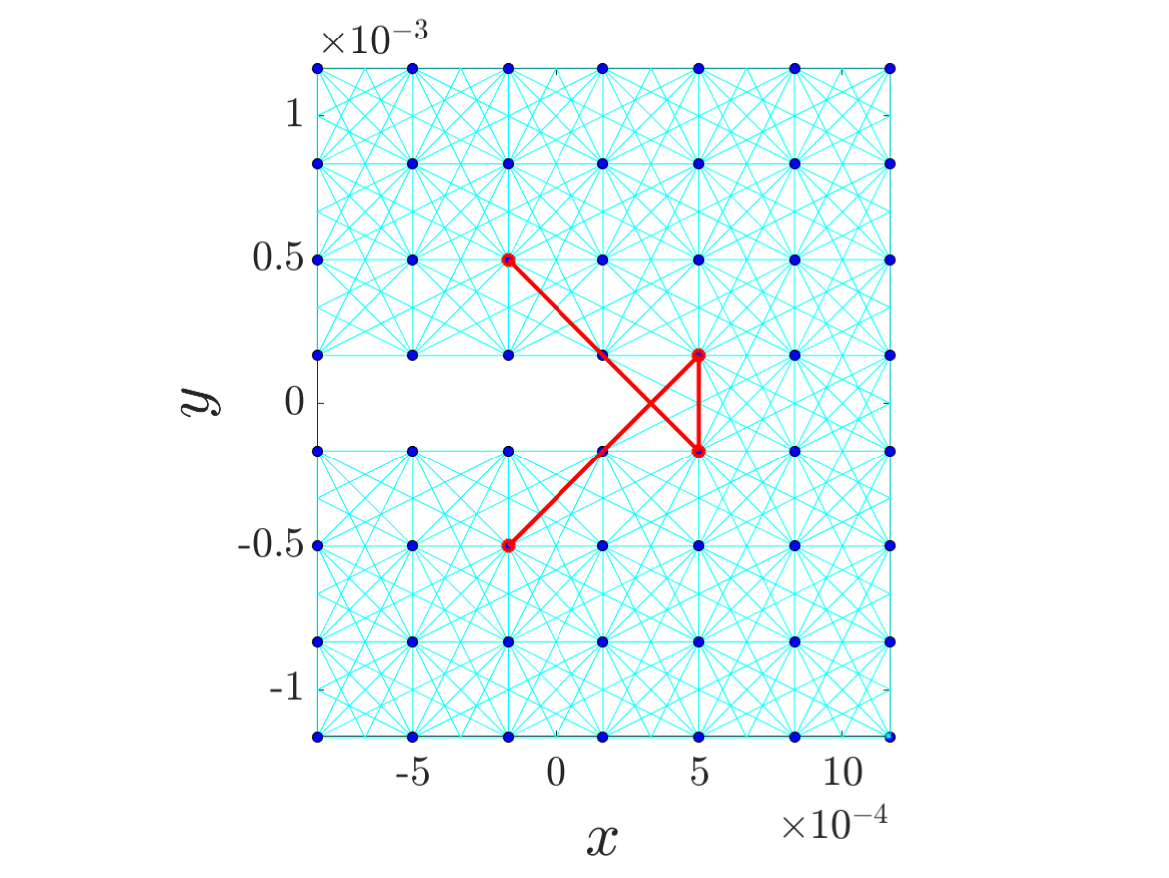}
        \caption{step $= 1043$}
    \label{Fig: Example 2b results (i)}
    \end{subfigure}%
     \vspace{0.5pc}
    \begin{subfigure}[t]{0.49\textwidth}
        \centering
        \includegraphics[width=\width\textwidth, trim=0cm 0cm 0cm 0cm,clip]{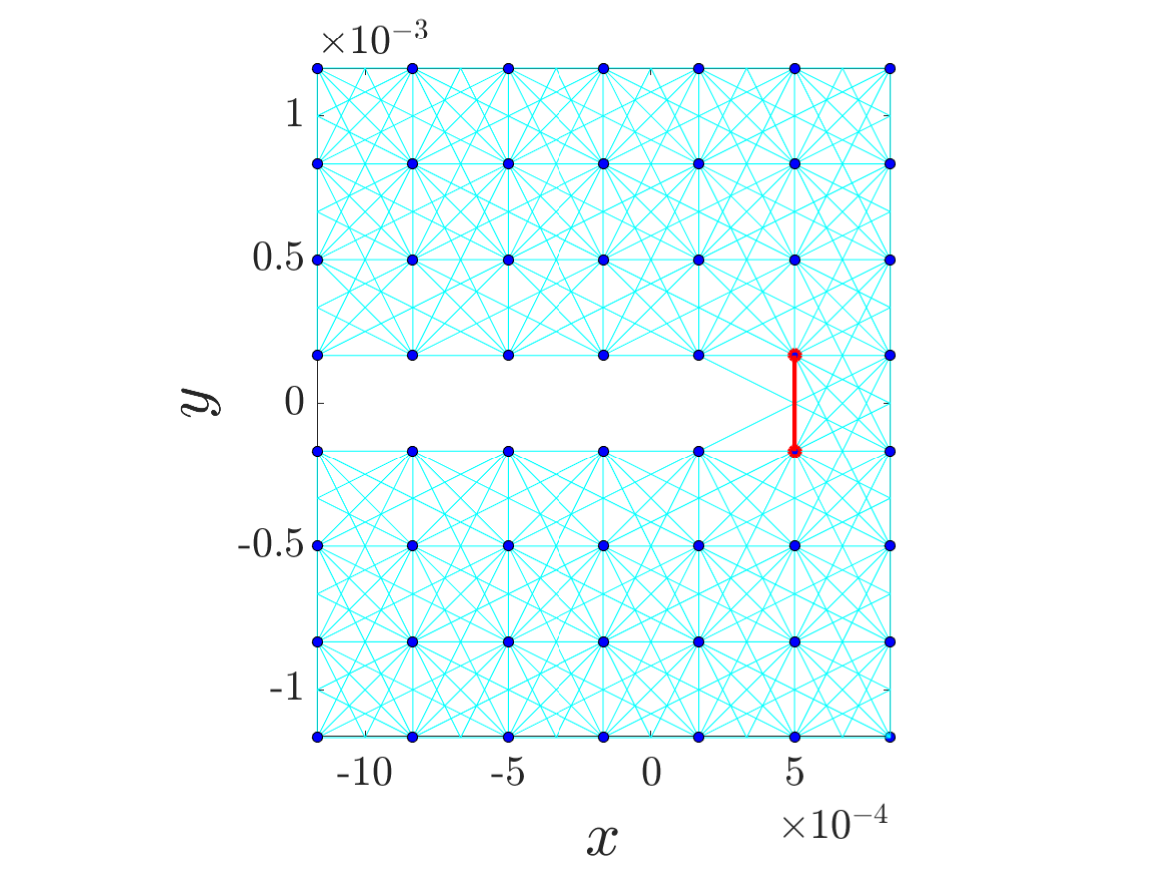}
        \caption{step $= 1008$}
    \label{Fig: Example 2b results (j)}
    \end{subfigure}
        \caption{Comparison of the crack tip evolution between the two bond-failure criteria for $\alpha = 0$ in Example 2 (cont.).}
    \label{Fig: Example 2b results}
\end{figure*} 

\begin{table}[htbp!]
\centering
{\tabulinesep=1.5mm
\begin{tabu}{||c | c | c c |  c| c c |} 
 \hline
    & Step   & \multicolumn{2}{c|}{\bf Critical stretch criterion} & Step    & \multicolumn{2}{c|}{\bf Critical energy density criterion}  \\ [0.5ex] 
    \cline{3-4} \cline{6-7} 
   &                   &          Bond & Bond length               &                    & Bond & Bond length  \\ [0.5ex] 
 \hline\hline
$1$ & $928$ & $1$:~$[17{,}550 \; ; \; 18{,}151]$ & $\sqrt{5}\Delta x$ & $930$ & $1$:~$[17{,}550 \; ; \; 18{,}151]$ & $\sqrt{5}\Delta x$ \\
       &            & $2$:~$[17{,}851 \; ; \; 18{,}450]$ & $\sqrt{5}\Delta x$&            & $2$:~$[17{,}851 \; ; \; 18{,}450]$ & $\sqrt{5}\Delta x$ \\
\hline
$2$ & $929$ & $1$:~$[17{,}851 \; ; \; 18{,}151]$ & $\Delta x$ & $933$ & $1$:~$[17{,}851 \; ; \; 18{,}151]$ & $\Delta x$\\ 
\hline
$3$ & $930$ & $1$:~$[17{,}551 \; ; \; 18{,}151]$ & $2\Delta x$ & $935$ & $1$:~$[17{,}251 \; ; \; 18{,}151]$ & $3\Delta x$ \\ 
       &            & $2$:~$[17{,}851 \; ; \; 18{,}451]$ & $2\Delta x$ &            & $2$:~$[17{,}551 \; ; \; 18{,}151]$ & $2\Delta x$ \\ 
       &            &                                                     &                   &            & $3$:~$[17{,}851 \; ; \; 18{,}451]$ & $2\Delta x$ \\
       &            &                                                     &                   &            & $4$:~$[17{,}851 \; ; \; 18{,}751]$ & $3\Delta x$ \\
\hline
$4$ & $932$ & $1$:~$[17{,}251 \; ; \; 18{,}151]$ & $3\Delta x$ & $936$ & $1$:~$[17{,}551 \; ; \; 18{,}451]$ & $3\Delta x$ \\ 
       &            & $2$:~$[17{,}551 \; ; \; 18{,}451]$ & $3\Delta x$ & &   & \\ 
       &            & $3$:~$[17{,}851 \; ; \; 18{,}751]$ & $3\Delta x$ & &   & \\ 
\hline
$5$ & $933$ & $1$:~$[17{,}552 \; ; \; 18{,}151]$ & $\sqrt{5}\Delta x$ & $937$ & $1$:~$[17{,}552 \; ; \; 18{,}151]$ & $\sqrt{5}\Delta x$ \\ 
       &            & $2$:~$[17{,}851 \; ; \; 18{,}452]$ & $\sqrt{5}\Delta x$ &            & $2$:~$[17{,}851 \; ; \; 18{,}452]$ & $\sqrt{5}\Delta x$ \\ 
\hline
$6$ & $935$ & $1$:~$[17{,}851 \; ; \; 18{,}152]$ & $\sqrt{2}\Delta x$ & $1001$ & $1$:~$[17{,}553 \; ; \; 18{,}151]$ & $2\sqrt{2}\Delta x$ \\ 
       &            & $2$:~$[17{,}852 \; ; \; 18{,}151]$ & $\sqrt{2}\Delta x$ &            & $2$:~$[17{,}851 \; ; \; 18{,}453]$ & $2\sqrt{2}\Delta x$ \\ 
\hline
$7$ & $1036$ & $1$:~$[17{,}553 \; ; \; 18{,}151]$ & $2\sqrt{2}\Delta x$ & $1003$ & $1$:~$[17{,}851 \; ; \; 18{,}152]$ & $\sqrt{2}\Delta x$ \\ 
       &            & $2$:~$[17{,}851 \; ; \; 18{,}453]$ & $2\sqrt{2}\Delta x$ &            & $2$:~$[17{,}852 \; ; \; 18{,}151]$ & $\sqrt{2}\Delta x$ \\ 
\hline
$8$ & $1041$ & $1$:~$[17{,}551 \; ; \; 18{,}152]$ & $\sqrt{5}\Delta x$ & $1005$ & $1$:~$[17{,}550 \; ; \; 18{,}152]$ & $2\sqrt{2}\Delta x$ \\ 
       &            & $2$:~$[17{,}852 \; ; \; 18{,}451]$ & $\sqrt{5}\Delta x$ &            & $2$:~$[17{,}850 \; ; \; 18{,}152]$ & $\sqrt{5}\Delta x$ \\ 
       &            &                                                     &                   &            & $3$:~$[17{,}852 \; ; \; 18{,}150]$ & $\sqrt{5}\Delta x$ \\
       &            &                                                     &                   &            & $4$:~$[17{,}852 \; ; \; 18{,}450]$ & $2\sqrt{2}\Delta x$ \\
\hline
$9$ & $1042$ & $1$:~$[17{,}850 \; ; \; 18{,}152]$ & $\sqrt{5}\Delta x$ & $1006$ & $1$:~$[17{,}551 \; ; \; 18{,}152]$ & $\sqrt{5}\Delta x$ \\ 
       &            & $2$:~$[17{,}852 \; ; \; 18{,}150]$ & $\sqrt{5}\Delta x$ &            & $2$:~$[17{,}852 \; ; \; 18{,}451]$ & $\sqrt{5}\Delta x$\\ 
\hline
$10$ & $1043$ & $1$:~$[17{,}550 \; ; \; 18{,}152]$ & $2\sqrt{2}\Delta x$ & $1008$ & $1$:~$[17{,}852 \; ; \; 18{,}152]$ & $\Delta x$ \\ 
       &            & $2$:~$[17{,}852 \; ; \; 18{,}152]$ &  $\Delta x$ & &   & \\ 
       &            & $3$:~$[17{,}852 \; ; \; 18{,}450]$ &  $2\sqrt{2}\Delta x$ & &   & \\ 
 \hline
\end{tabu}
}
\caption{List of broken bonds and corresponding bond lengths for the first $10$ stages of bond breaking, for the case of $\alpha = 0$, for the two bond-failure criteria in Example 2 ({\em cf.}~Figures~\ref{Fig: Example 2a results} and~\ref{Fig: Example 2b results}). Bonds are listed with the numbers of the pair of computational nodes they connect.}
\label{table: Example 2 broken bonds alpha 0.}
\end{table}


\begin{figure*}[htbp!]
    \centering
    {\bf The case}~\boldsymbol{$\alpha = 2$}\\[0.1in]
    \begin{subfigure}[t]{0.49\textwidth}
        \centering
      {\bf Critical stretch}\\[0.1in]
       \includegraphics[width=\width\textwidth, trim=0cm 0cm 0cm 0cm,clip]{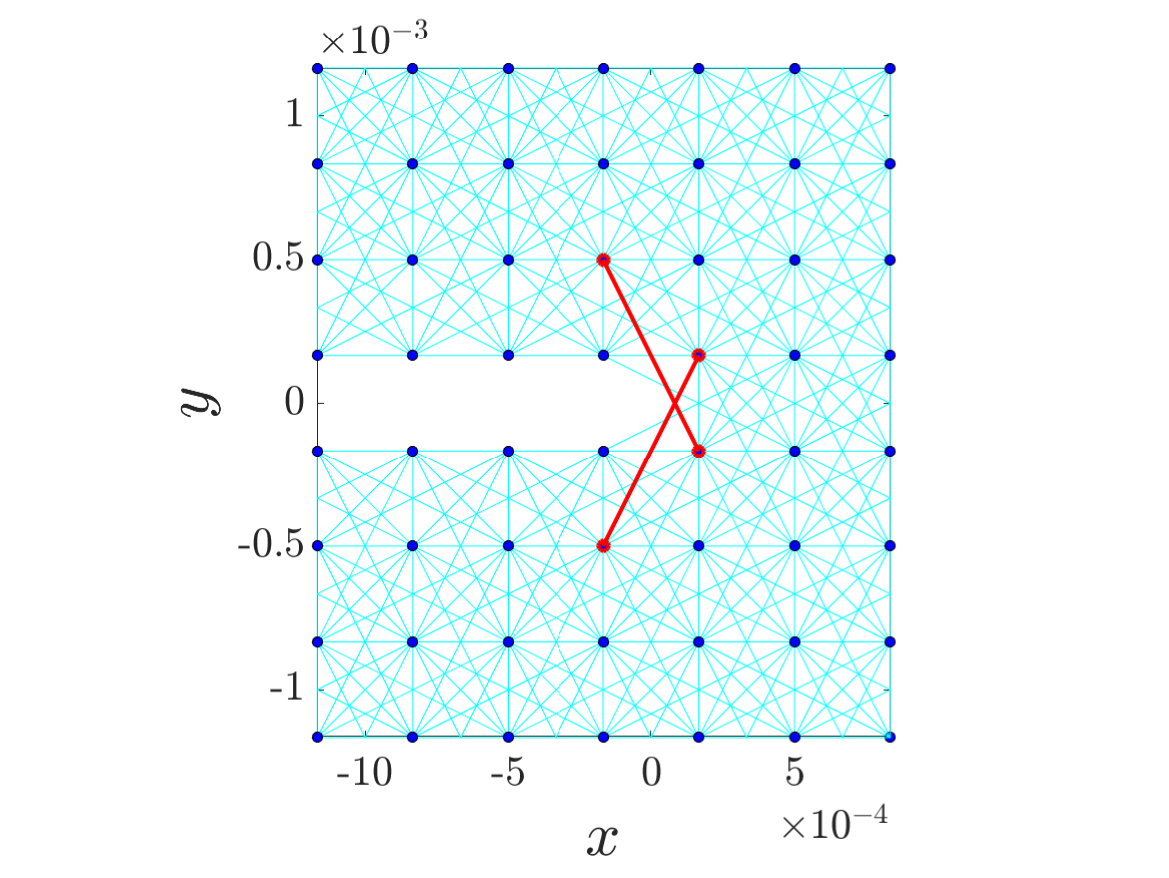}
        \caption{step $= 1022$}
    \label{Fig: Example 2c results (a)}
    \end{subfigure}%
     \vspace{0.5pc}
    \begin{subfigure}[t]{0.49\textwidth}
        \centering
    {\bf Critical energy density  }\\[0.1in]
        \includegraphics[width=\width\textwidth, trim=0cm 0cm 0cm 0cm,clip]{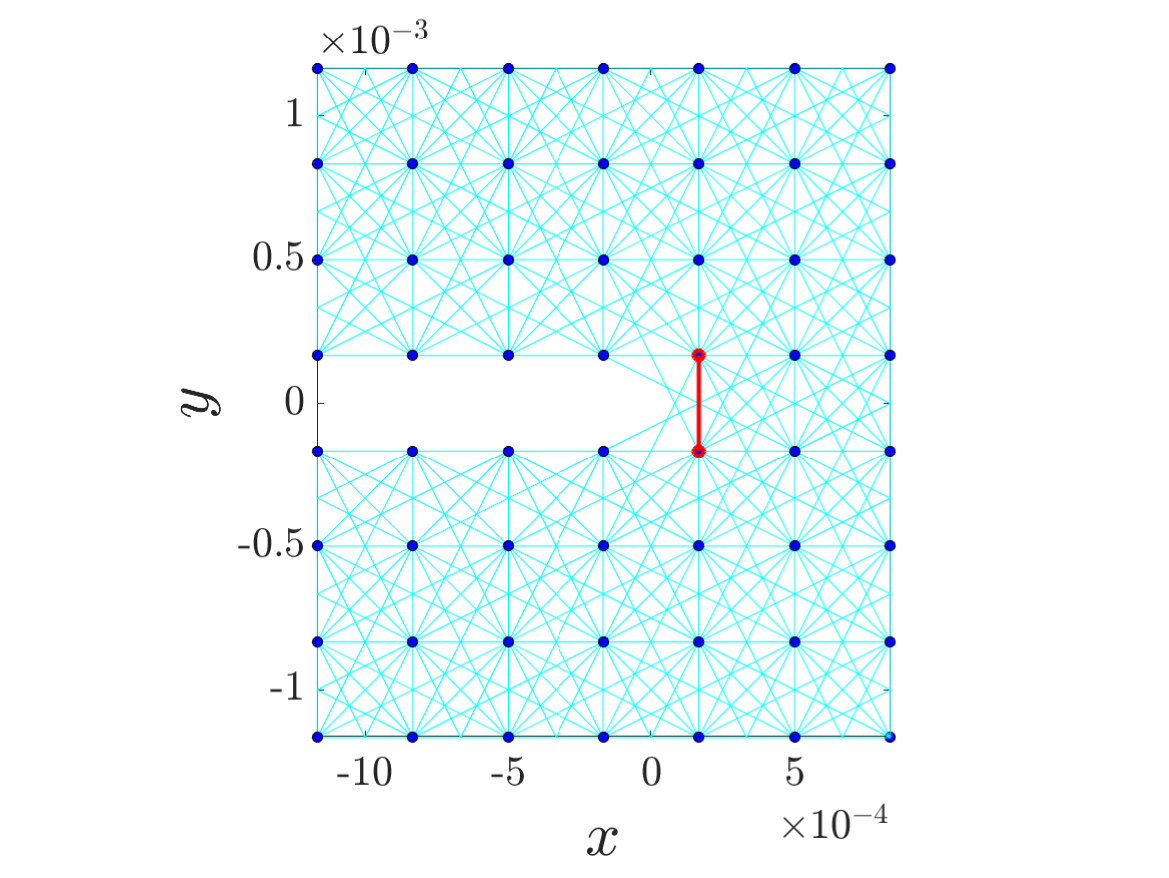}
        \caption{step $= 932$}
    \label{Fig: Example 2c results (b)}
    \end{subfigure}
        \begin{subfigure}[t]{0.49\textwidth}
        \centering
        \includegraphics[width=\width\textwidth, trim=0cm 0cm 0cm 0cm,clip]{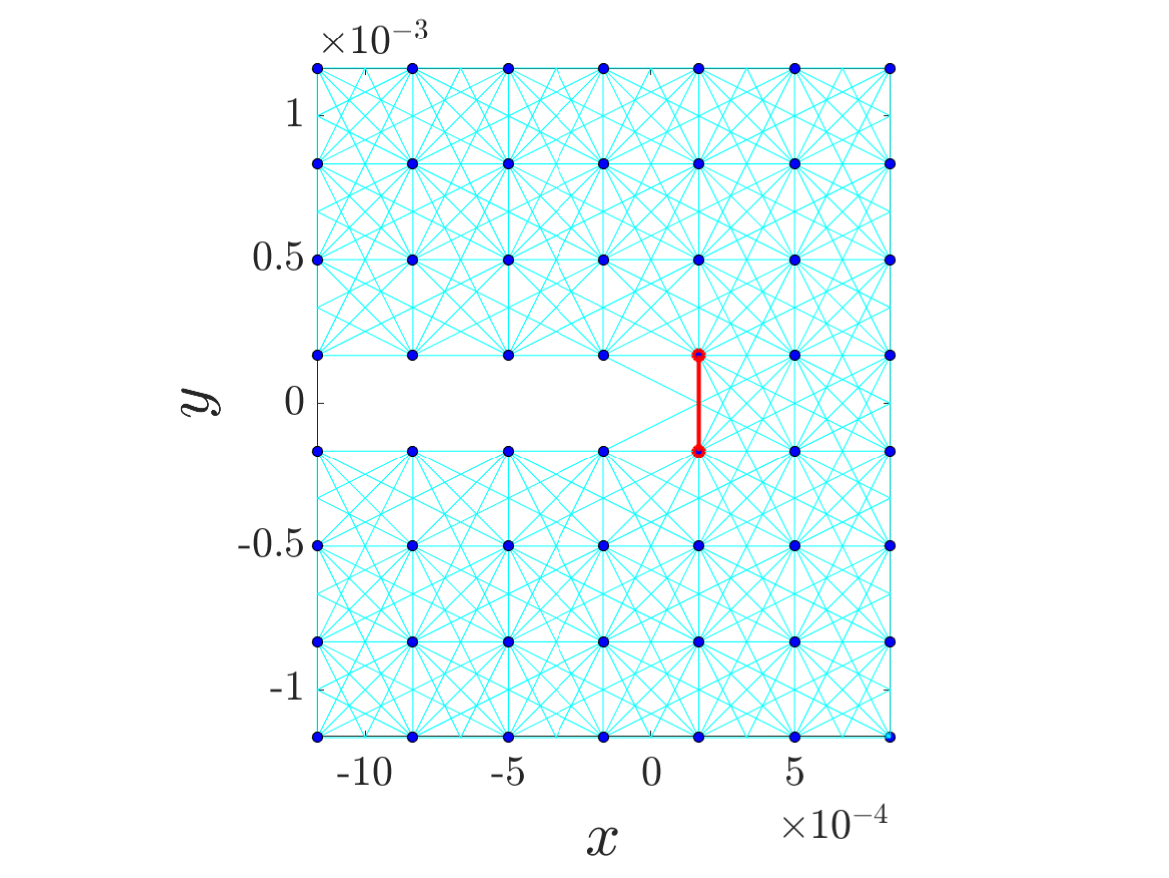}
        \caption{step $= 1024$}
    \label{Fig: Example 2c results (c)}
    \end{subfigure}%
     \vspace{0.5pc}
    \begin{subfigure}[t]{0.49\textwidth}
        \centering
        \includegraphics[width=\width\textwidth, trim=0cm 0cm 0cm 0cm,clip]{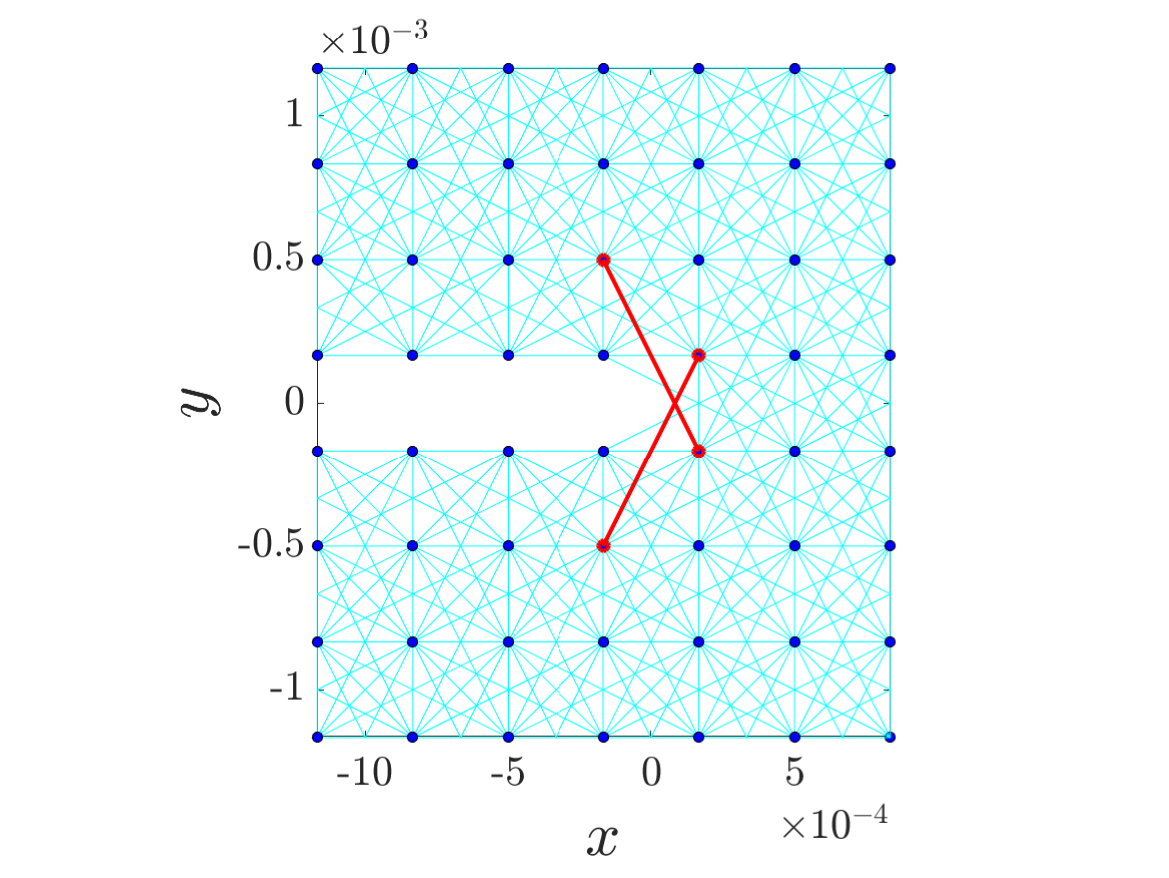}
        \caption{step $= 934$}
    \label{Fig: Example 2c results (d)}
    \end{subfigure}
            \begin{subfigure}[t]{0.49\textwidth}
        \centering
      \includegraphics[width=\width\textwidth, trim=0cm 0cm 0cm 0cm,clip]{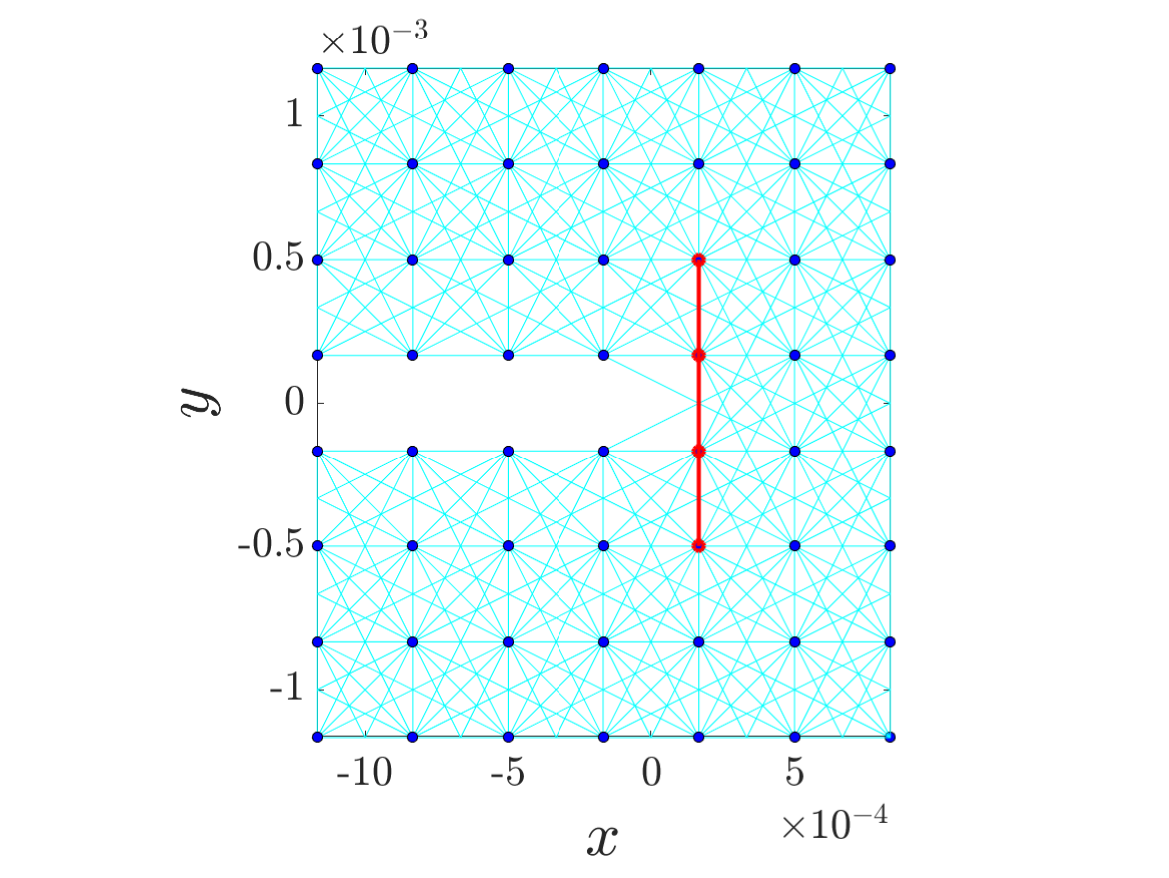}
        \caption{step $= 1025$}
      \label{Fig: Example 2c results (e)}
    \end{subfigure}%
     \vspace{0.5pc}
    \begin{subfigure}[t]{0.49\textwidth}
        \centering
        \includegraphics[width=\width\textwidth, trim=0cm 0cm 0cm 0cm,clip]{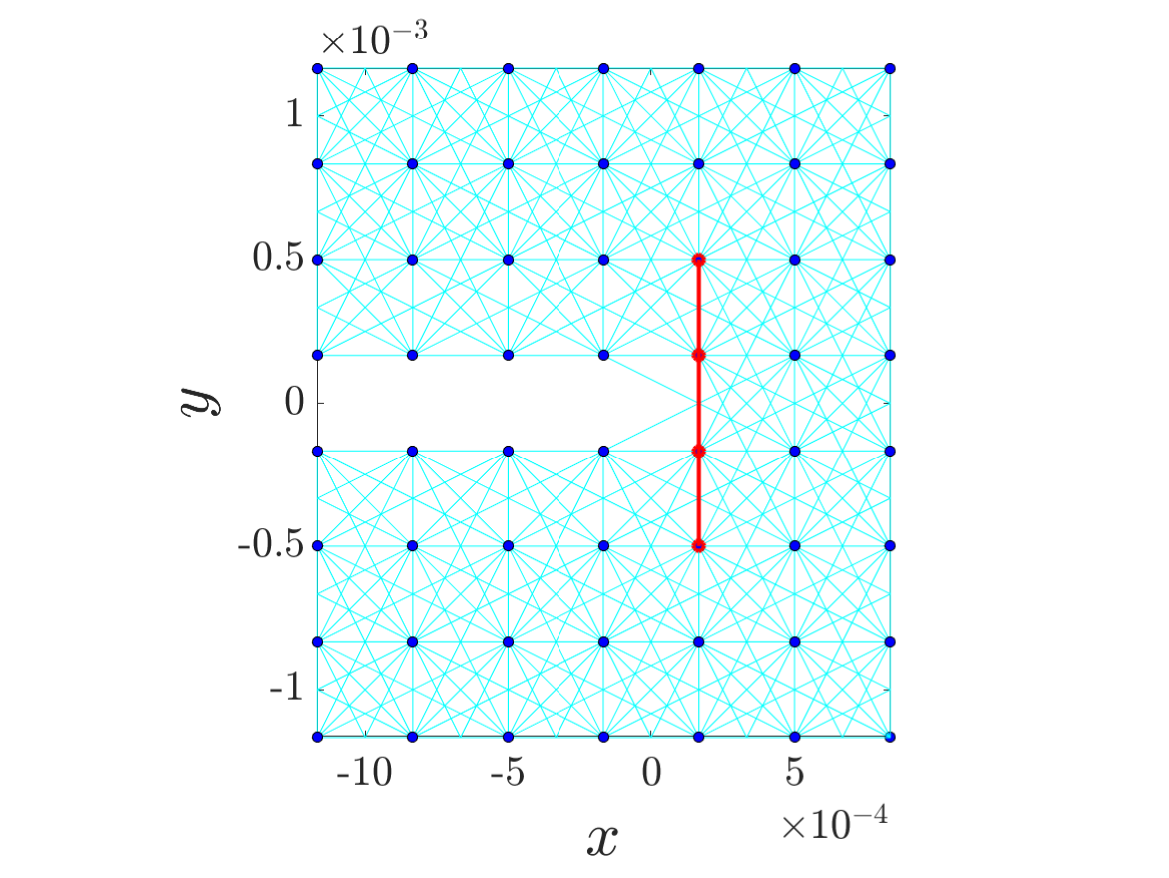}
        \caption{step $= 935$}
    \label{Fig: Example 2c results (f)}
    \end{subfigure}
                \begin{subfigure}[t]{0.49\textwidth}
        \centering
       \includegraphics[width=\width\textwidth, trim=0cm 0cm 0cm 0cm,clip]{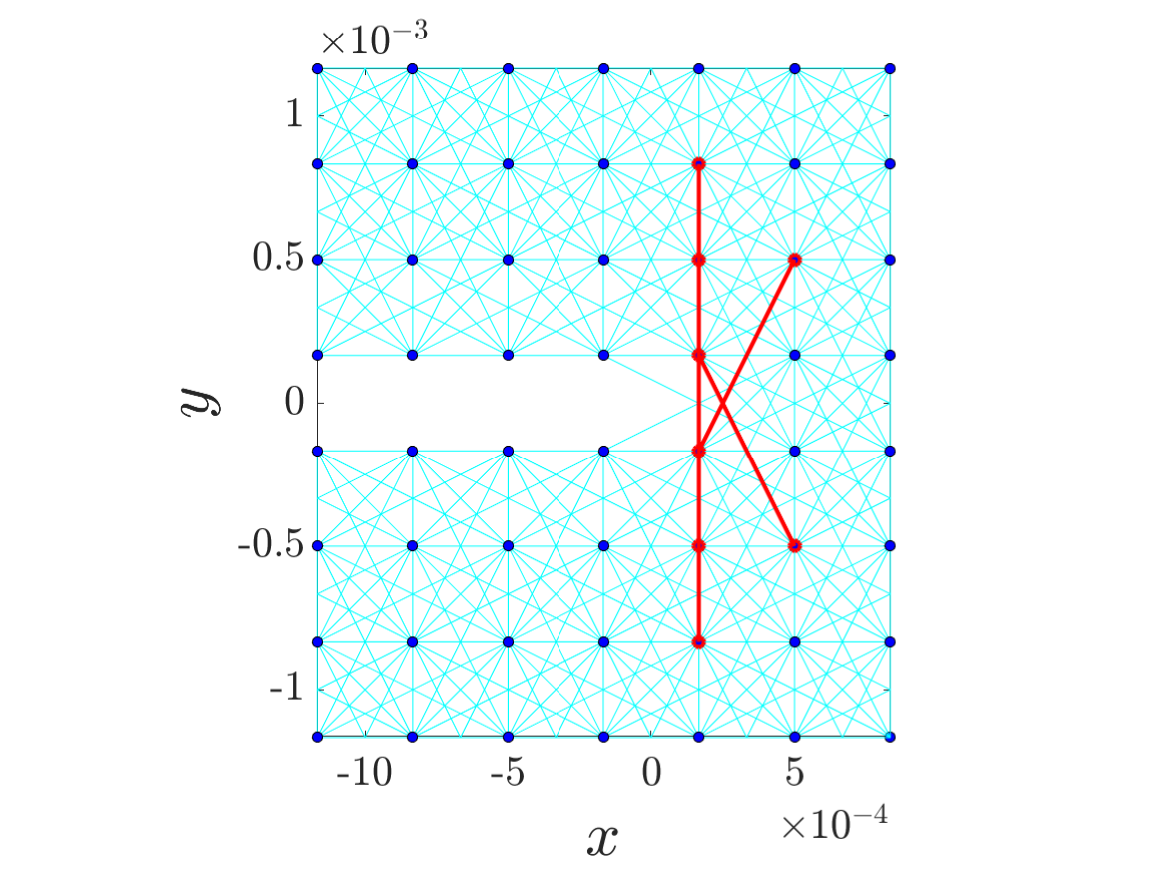}
        \caption{step $= 1027$}
    \label{Fig: Example 2c results (g)}
    \end{subfigure}%
     \vspace{0.5pc}
    \begin{subfigure}[t]{0.49\textwidth}
        \centering
       \includegraphics[width=\width\textwidth, trim=0cm 0cm 0cm 0cm,clip]{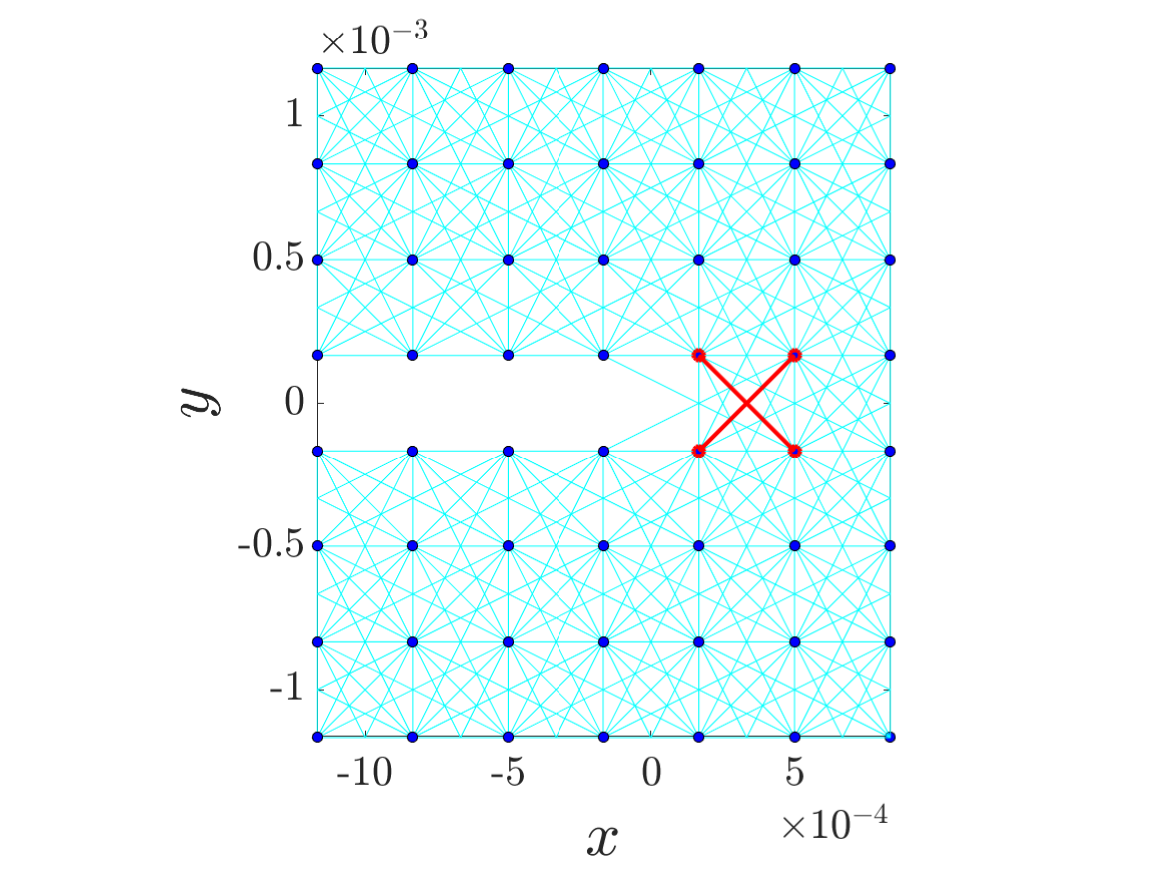}
        \caption{step $= 1010$}
    \label{Fig: Example 2c results (h)}
    \end{subfigure}
                    \begin{subfigure}[t]{0.49\textwidth}
        \centering
        \includegraphics[width=\width\textwidth, trim=0cm 0cm 0cm 0cm,clip]{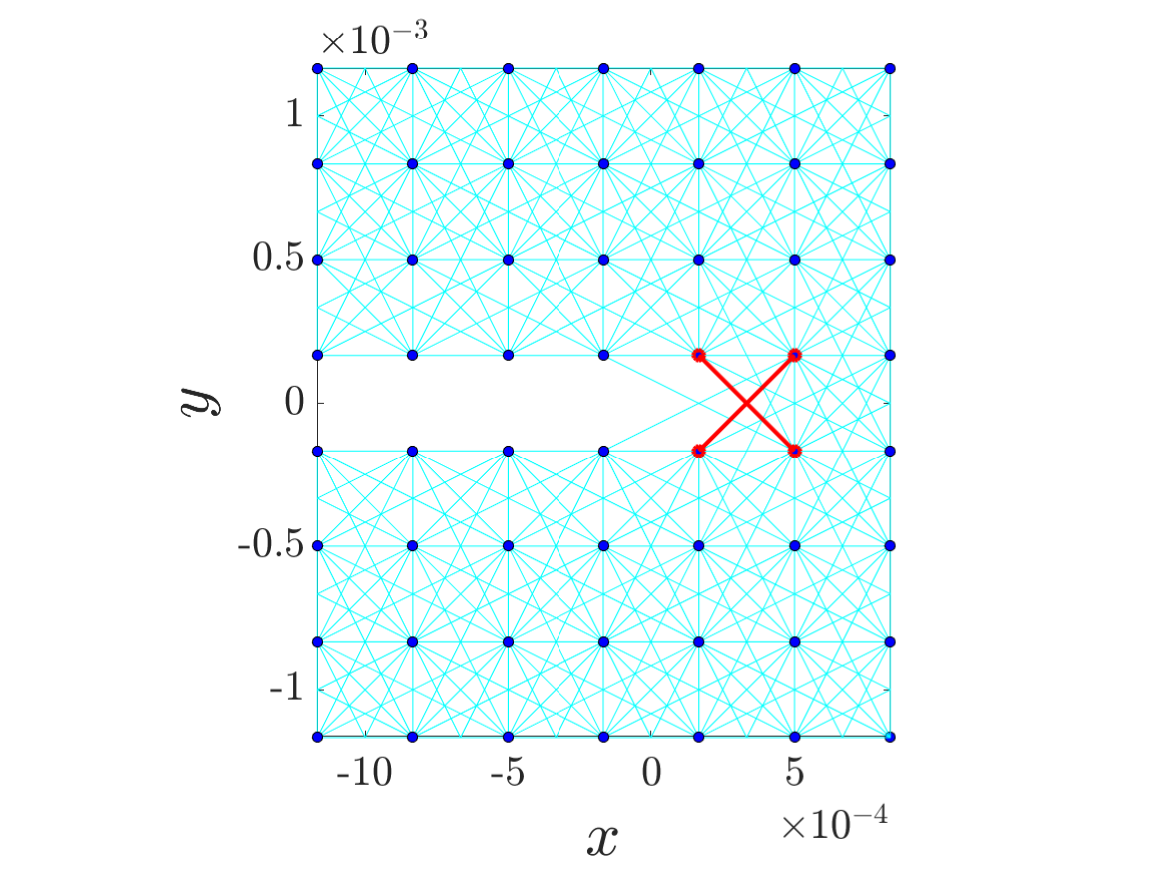}
        \caption{step $= 1028$}
    \label{Fig: Example 2c results (i)}
    \end{subfigure}%
     \vspace{0.5pc}
    \begin{subfigure}[t]{0.49\textwidth}
        \centering
        \includegraphics[width=\width\textwidth, trim=0cm 0cm 0cm 0cm,clip]{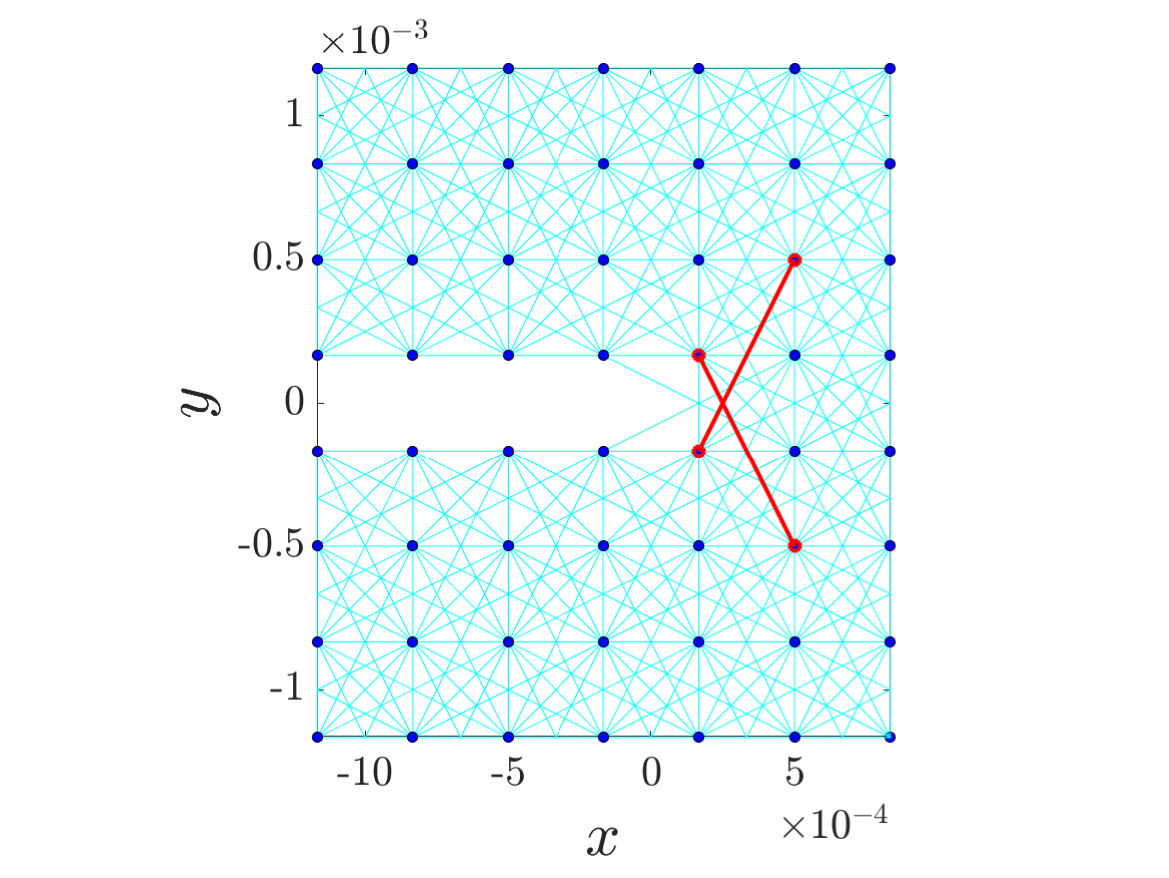}
        \caption{step $= 1011$}
       \label{Fig: Example 2c results (j)}     
    \end{subfigure}
        \caption{Comparison of the crack tip evolution between the two bond-failure criteria for $\alpha = 2$ in Example 2.} 
    \label{Fig: Example 2c results}
\end{figure*}

\begin{figure*}[htbp!]
    \centering
     {\bf The case}~\boldsymbol{$\alpha = 2$} {\bf (cont.)}\\[0.1in]
    \begin{subfigure}[t]{0.49\textwidth}
        \centering
      {\bf Critical stretch}\\[0.1in]
       \includegraphics[width=\width\textwidth, trim=0cm 0cm 0cm 0cm,clip]{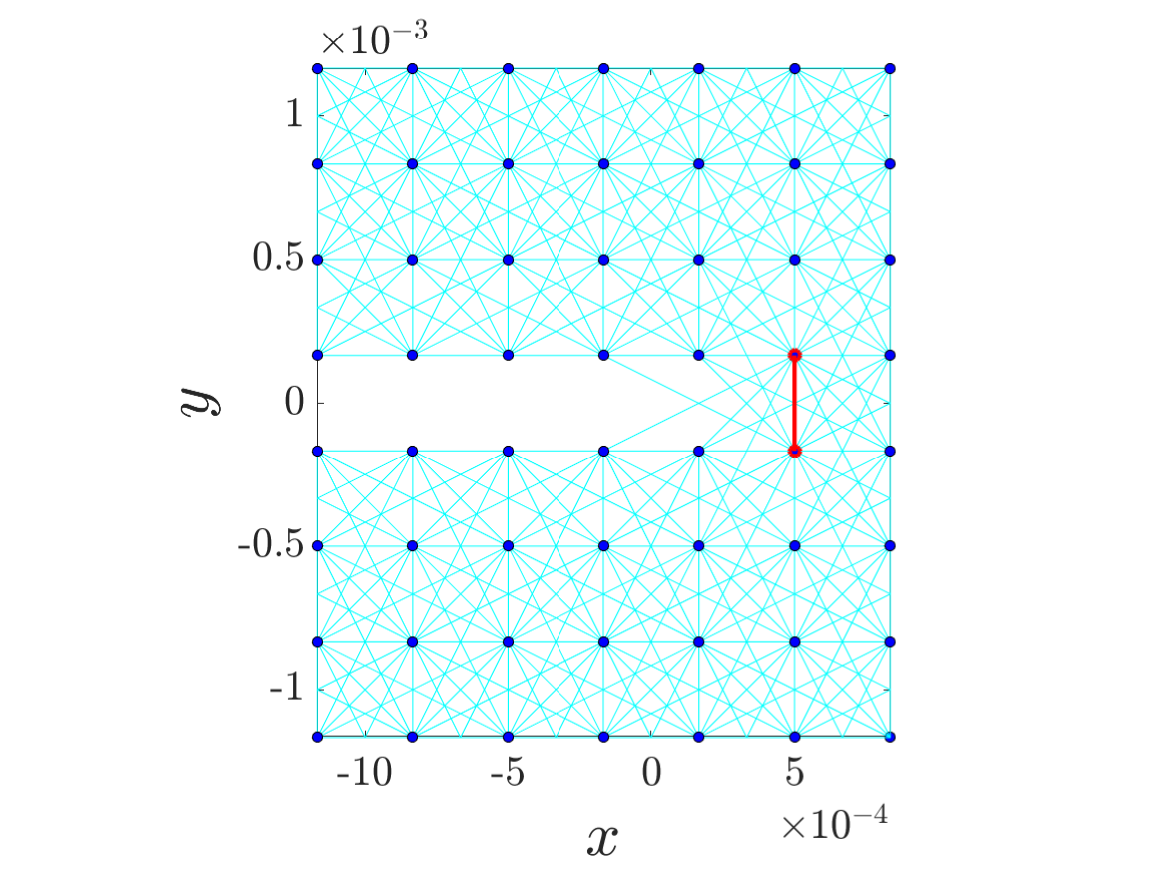}
        \caption{step $= 1030$}
    \label{Fig: Example 2d results (a)}
    \end{subfigure}%
     \vspace{0.5pc}
    \begin{subfigure}[t]{0.49\textwidth}
        \centering
    {\bf Critical energy density}\\[0.1in]
        \includegraphics[width=\width\textwidth, trim=0cm 0cm 0cm 0cm,clip]{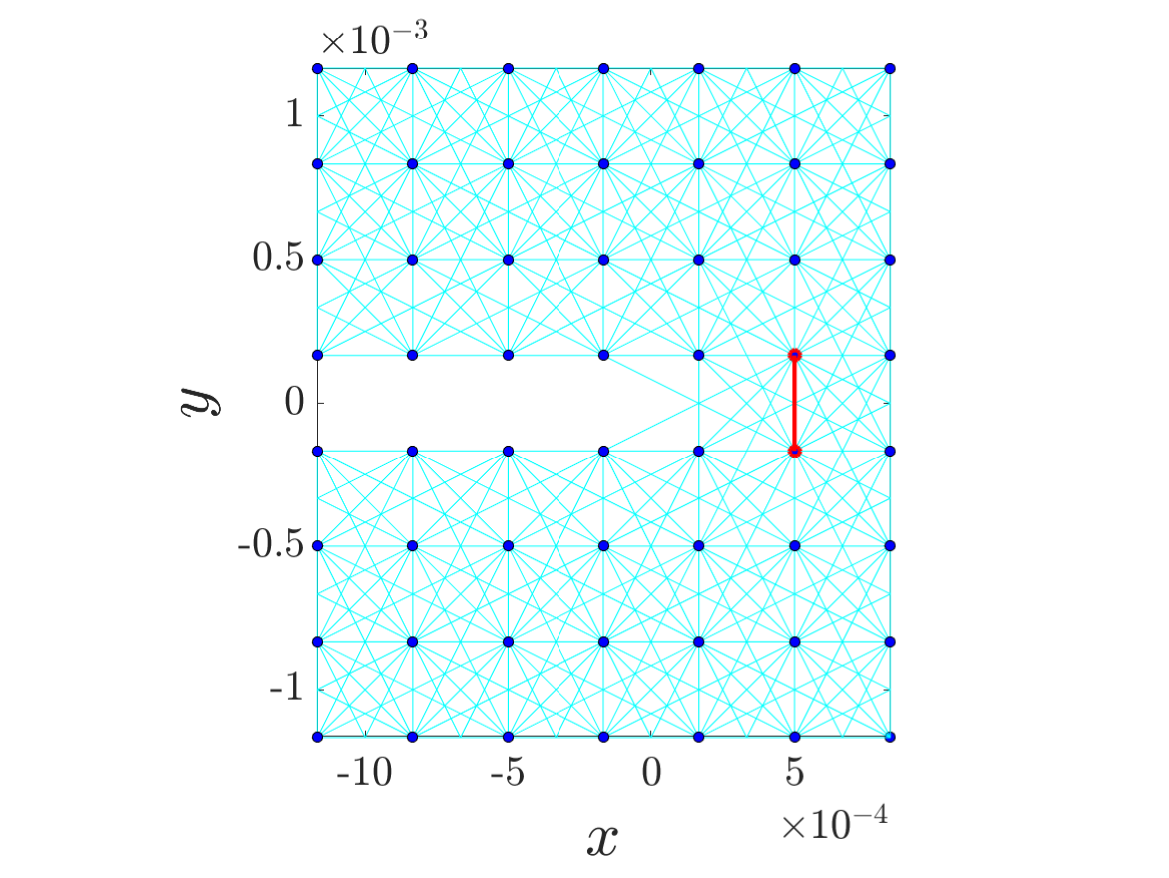}
        \caption{step $= 1012$}
    \label{Fig: Example 2d results (b)}
    \end{subfigure}
        \begin{subfigure}[t]{0.49\textwidth}
        \centering
        \includegraphics[width=\width\textwidth, trim=0cm 0cm 0cm 0cm,clip]{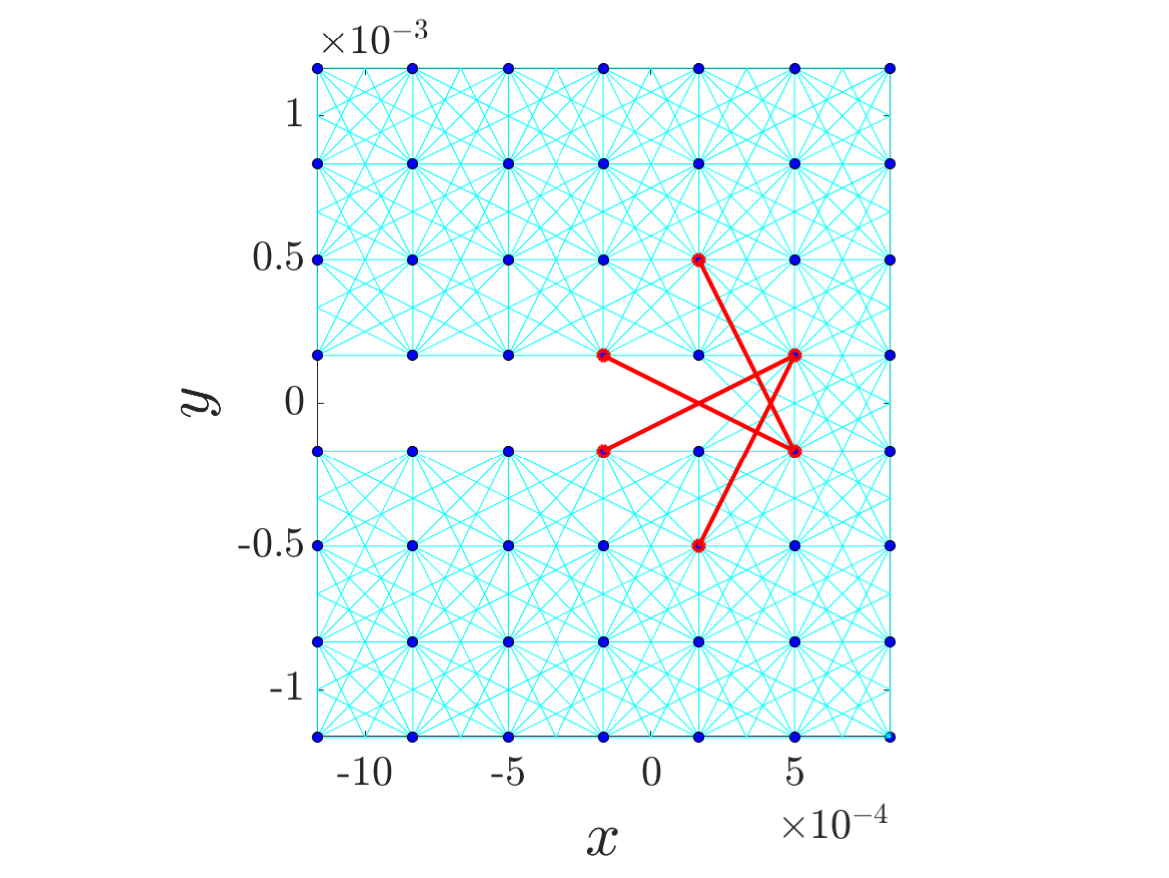}
        \caption{step $= 1031$}
    \label{Fig: Example 2d results (c)}
    \end{subfigure}%
     \vspace{0.5pc}
    \begin{subfigure}[t]{0.49\textwidth}
        \centering
      \includegraphics[width=\width\textwidth, trim=0cm 0cm 0cm 0cm,clip]{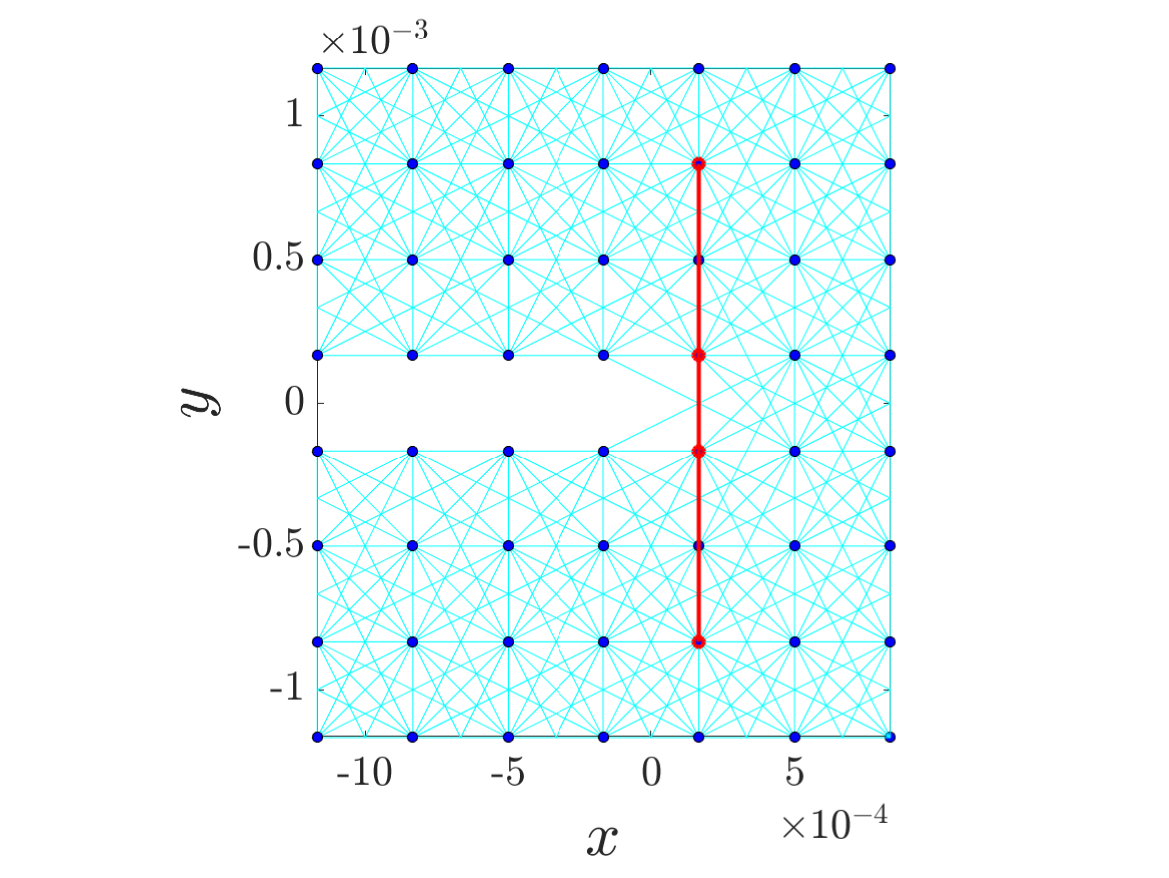}
        \caption{step $= 1013$}
    \label{Fig: Example 2d results (d)}
    \end{subfigure}
            \begin{subfigure}[t]{0.49\textwidth}
        \centering
        \includegraphics[width=\width\textwidth, trim=0cm 0cm 0cm 0cm,clip]{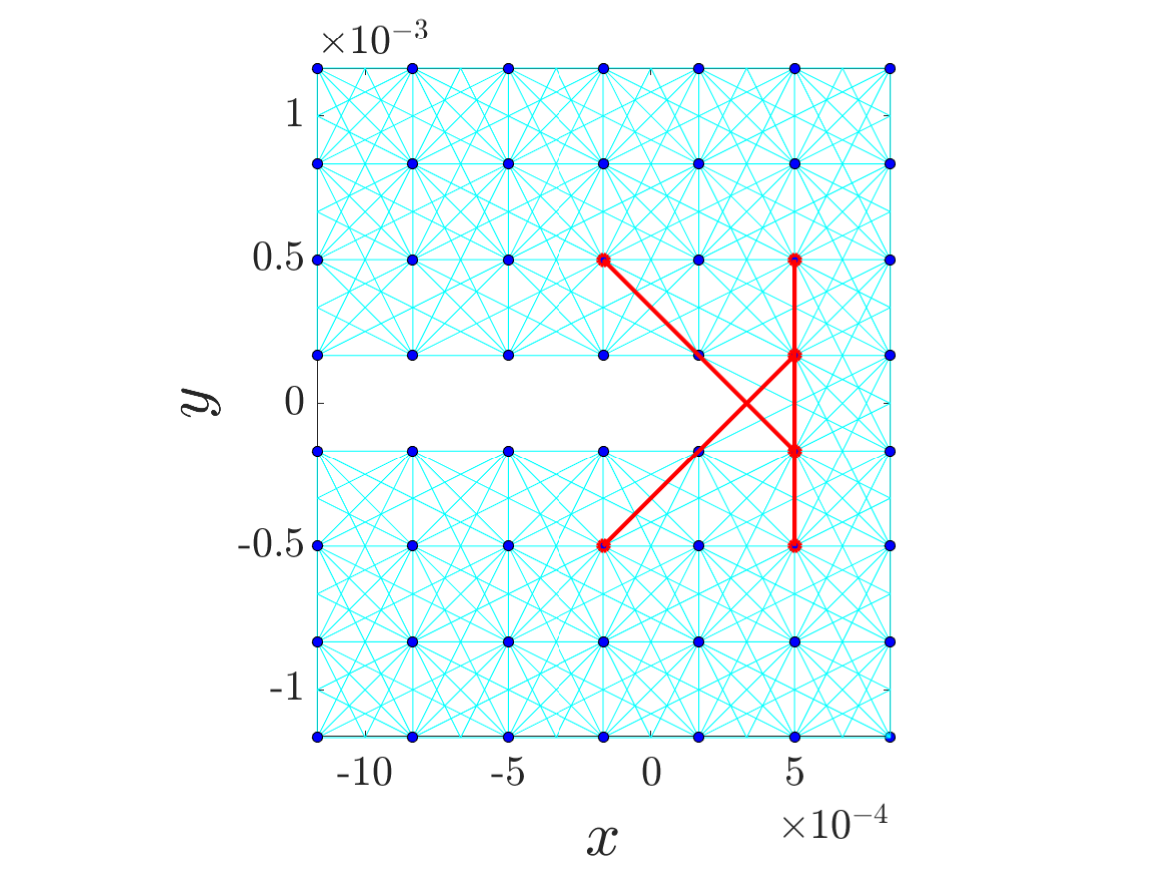}
        \caption{step $= 1032$}
    \label{Fig: Example 2d results (e)}
    \end{subfigure}%
     \vspace{0.5pc}
    \begin{subfigure}[t]{0.49\textwidth}
        \centering
        \includegraphics[width=\width\textwidth, trim=0cm 0cm 0cm 0cm,clip]{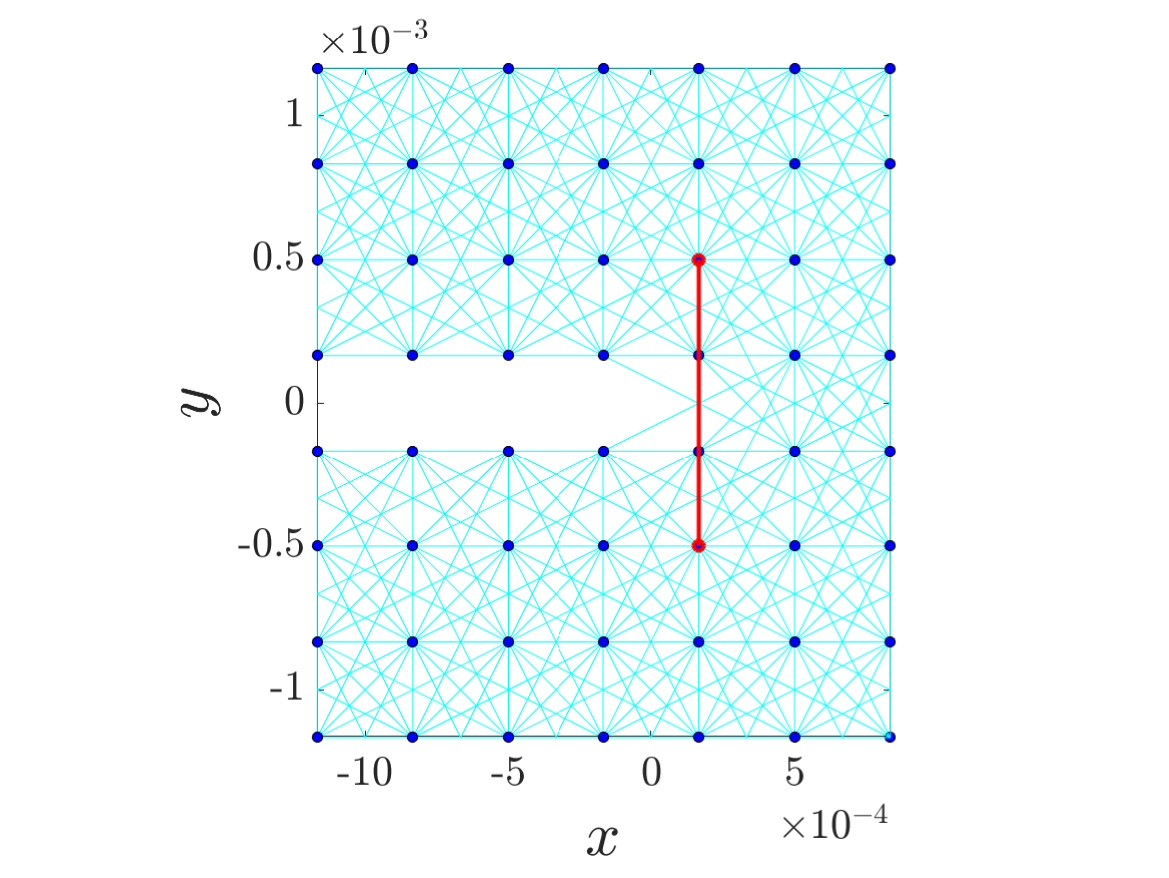}
        \caption{step $= 1014$}
    \label{Fig: Example 2d results (f)}
    \end{subfigure}
                \begin{subfigure}[t]{0.49\textwidth}
        \centering
      \includegraphics[width=\width\textwidth, trim=0cm 0cm 0cm 0cm,clip]{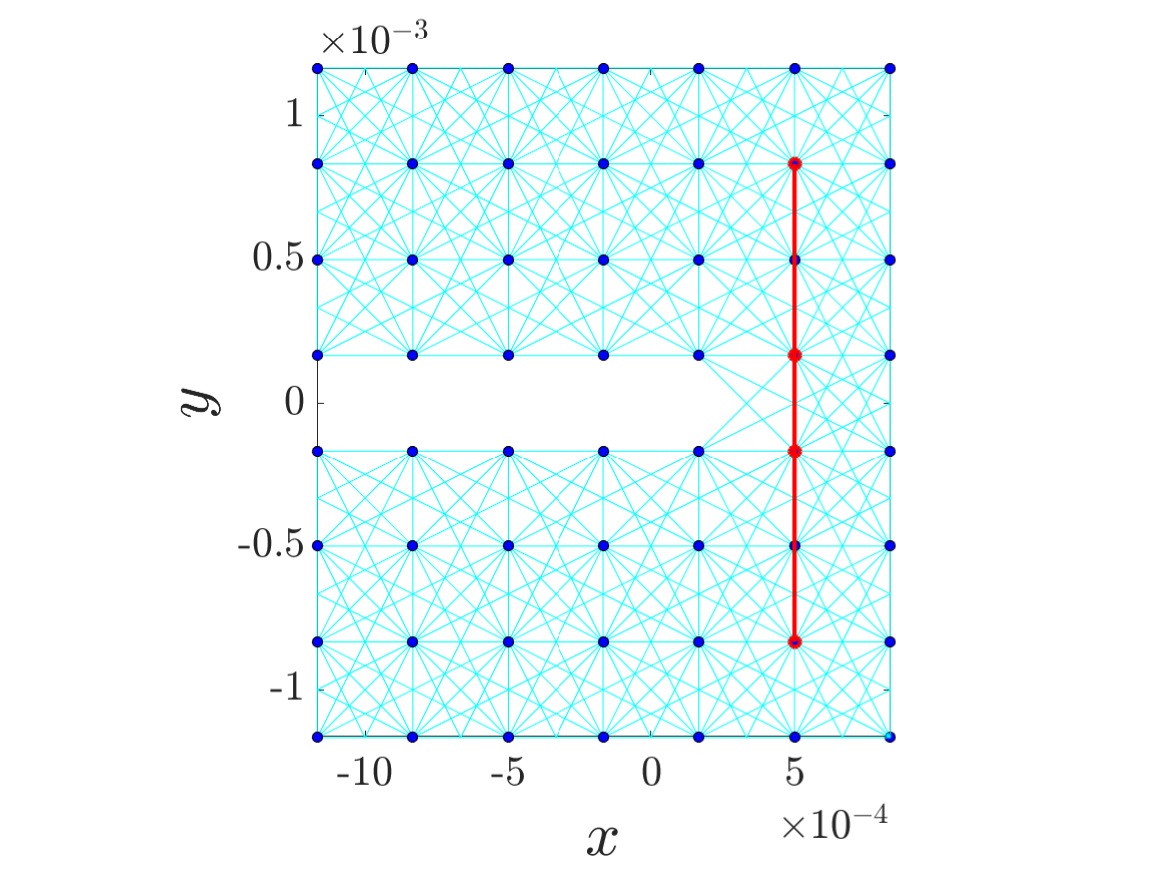}
        \caption{step $= 1033$}
    \label{Fig: Example 2d results (g)}
    \end{subfigure}%
     \vspace{0.5pc}
    \begin{subfigure}[t]{0.49\textwidth}
        \centering
        \includegraphics[width=\width\textwidth, trim=0cm 0cm 0cm 0cm,clip]{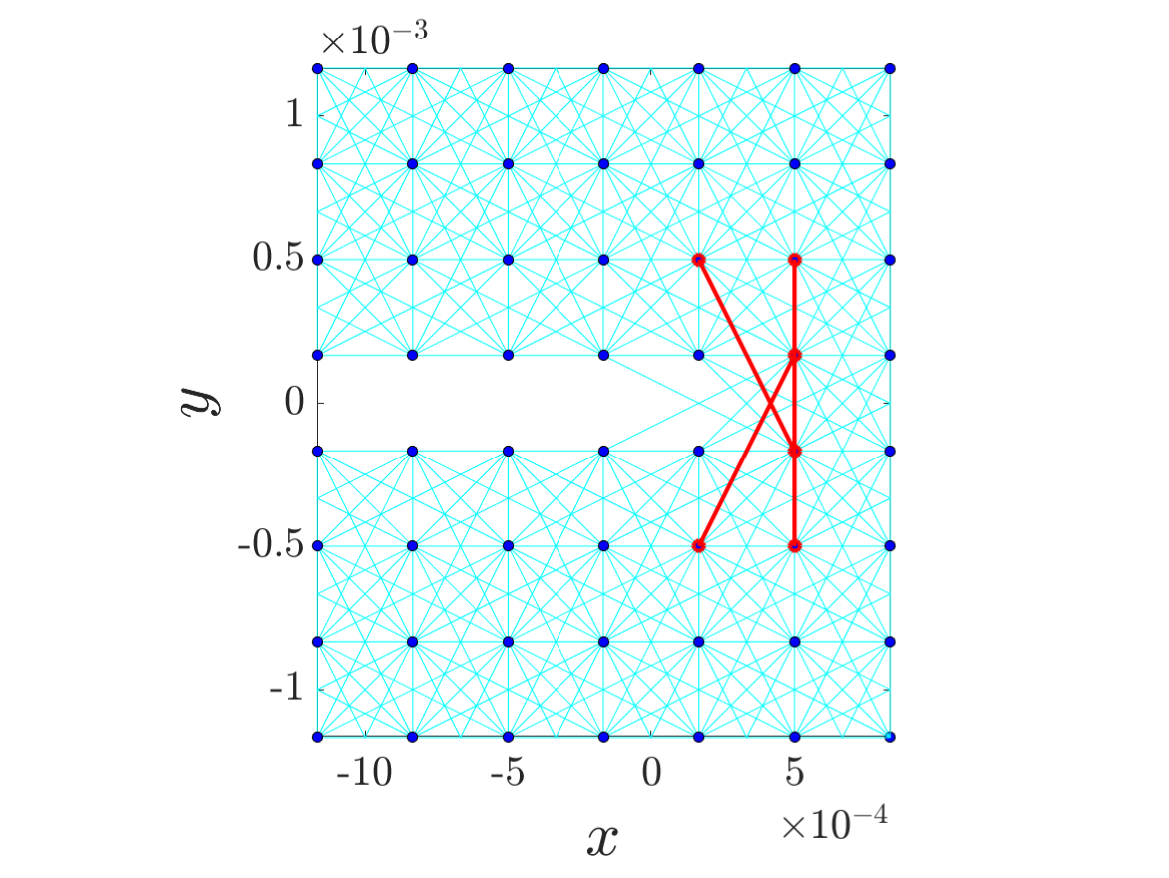}
        \caption{step $= 1015$}
    \label{Fig: Example 2d results (h)}
    \end{subfigure}
                    \begin{subfigure}[t]{0.49\textwidth}
        \centering
        \includegraphics[width=\width\textwidth, trim=0cm 0cm 0cm 0cm,clip]{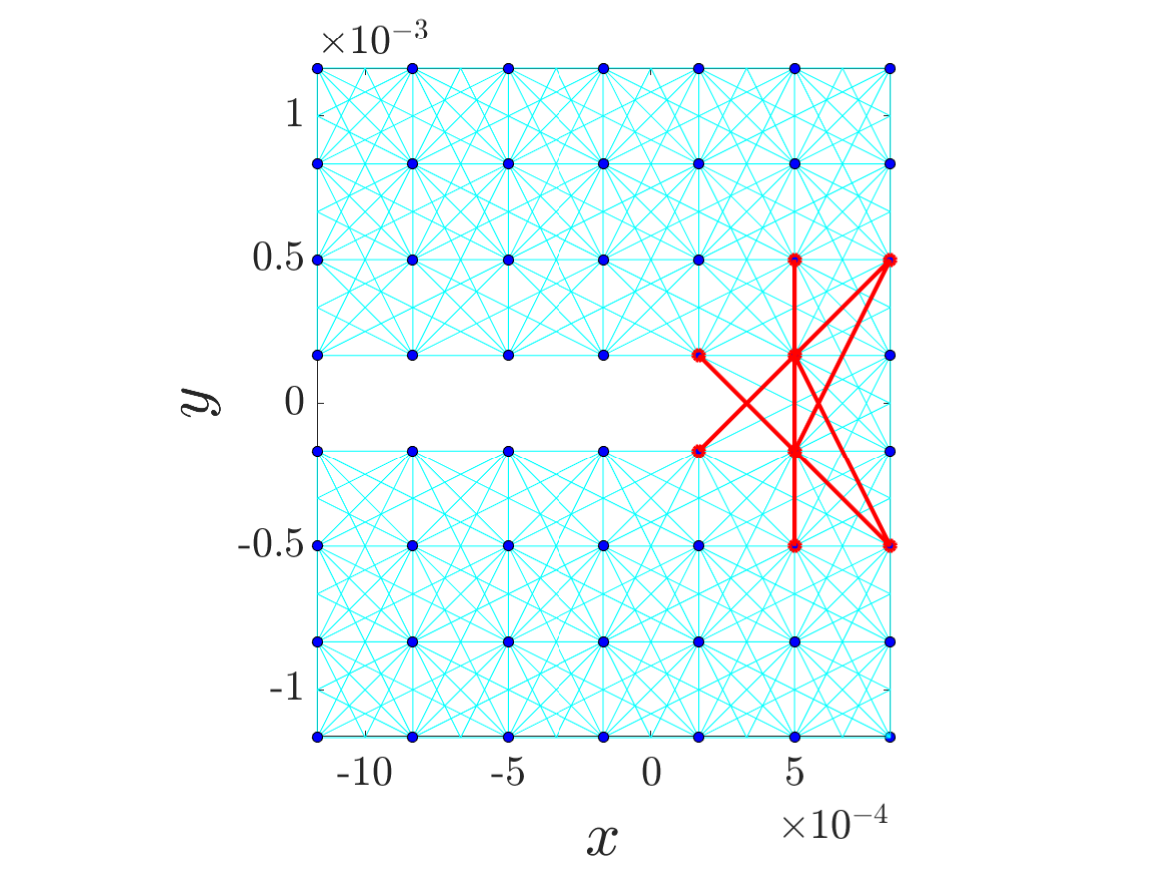}
        \caption{step $= 1034$}
    \label{Fig: Example 2d results (i)}
    \end{subfigure}%
     \vspace{0.5pc}
    \begin{subfigure}[t]{0.49\textwidth}
        \centering
        \includegraphics[width=\width\textwidth, trim=0cm 0cm 0cm 0cm,clip]{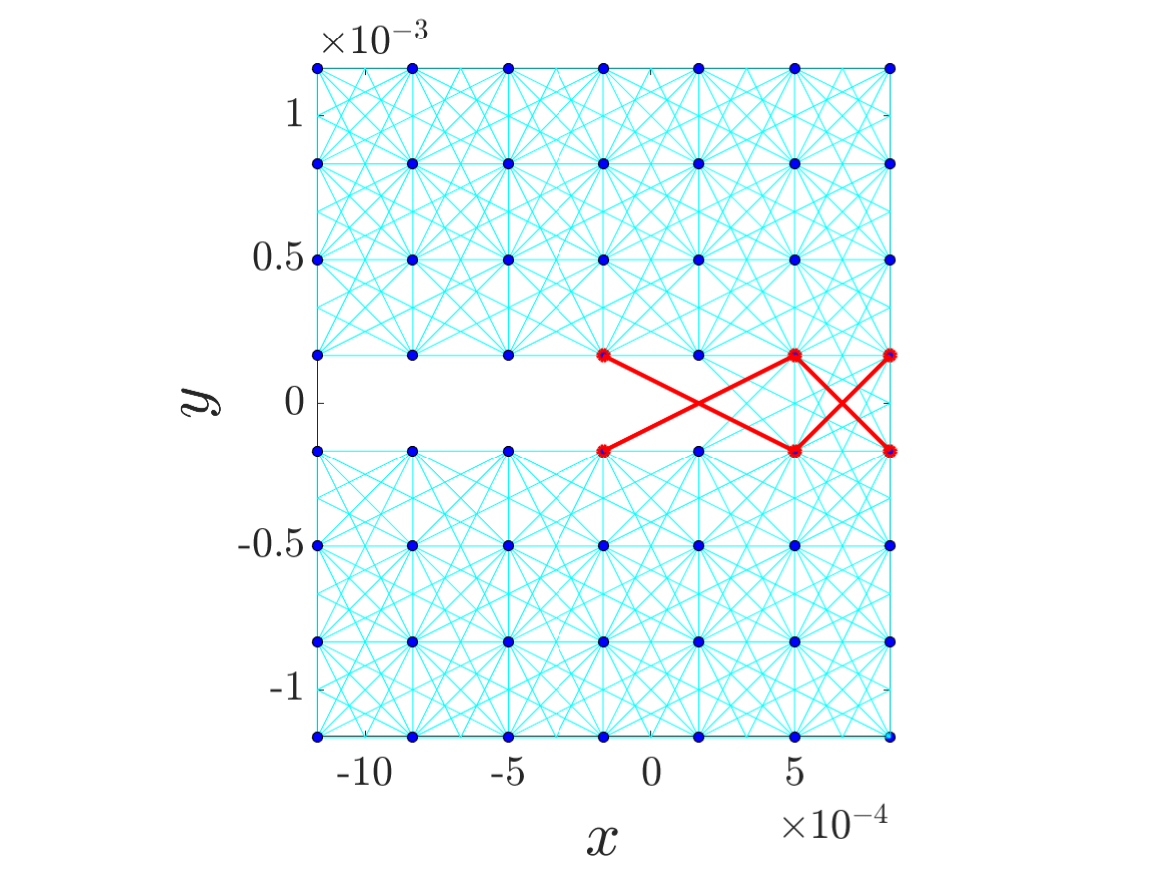}
        \caption{step $= 1018$}
    \label{Fig: Example 2d results (j)}
    \end{subfigure}
        \caption{Comparison of the crack tip evolution between the two bond-failure criteria for $\alpha = 2$ in Example 2 (cont.).} 
    \label{Fig: Example 2d results}
\end{figure*} 

\begin{table}[htbp!]
\centering
{\tabulinesep=1.2mm
\begin{tabu}{||c | c | c c |  c| c c |} 
 \hline
    & Step   & \multicolumn{2}{c|}{\bf Critical stretch criterion} & Step    & \multicolumn{2}{c|}{\bf Critical energy density criterion}  \\ [0.5ex] 
    \cline{3-4} \cline{6-7} 
   &                   &          Bond & Bond length               &                    & Bond & Bond length  \\ [0.5ex] 
 \hline\hline
 $1$ & $1022$ & $1$:~$[17{,}550 \; ; \; 18{,}151]$ & $\sqrt{5}\Delta x$ & $932$ & $1$:~$[17{,}851 \; ; \; 18{,}151]$ & $\Delta x$ \\
        &              & $2$:~$[17{,}851 \; ; \; 18{,}450]$ & $\sqrt{5}\Delta x$ & &  & \\
\hline
$2$ & $1024$ & $1$:~$[17{,}851 \; ; \; 18{,}151]$ & $\Delta x$ & $934$ & $1$:~$[17{,}550 \; ; \; 18{,}151]$ & $\sqrt{5}\Delta x$ \\
       &              &                                                      &                  &              & $2$:~$[17{,}851 \; ; \; 18{,}450]$ & $\sqrt{5}\Delta x$ \\
\hline
$3$ & $1025$ & $1$:~$[17{,}551 \; ; \; 18{,}151]$ & $2\Delta x$ & $935$ & $1$:~$[17{,}551 \; ; \; 18{,}151]$ & $2\Delta x$ \\
        &              & $2$:~$[17{,}851 \; ; \; 18{,}451]$ & $2\Delta x$ &              & $2$:~$[17{,}851 \; ; \; 18{,}451]$ & $2\Delta x$ \\
\hline
$4$ & $1027$ & $1$:~$[17{,}251 \; ; \; 18{,}151]$ & $3\Delta x$ & $1010$ & $1$:~$[17{,}851 \; ; \; 18{,}152]$ & $\sqrt{2}\Delta x$ \\
        &              & $2$:~$[17{,}551 \; ; \; 18{,}451]$ & $3\Delta x$ &              & $2$:~$[17{,}852 \; ; \; 18{,}151]$ & $\sqrt{2}\Delta x$ \\
        &              & $3$:~$[17{,}552 \; ; \; 18{,}151]$ & $\sqrt{5}\Delta x$ & &  & \\
        &              & $4$:~$[17{,}851 \; ; \; 18{,}452]$ & $\sqrt{5}\Delta x$ & &  & \\
        &              & $5$:~$[17{,}851 \; ; \; 18{,}751]$ & $3\Delta x$ & &  & \\
\hline
$5$ & $1028$ & $1$:~$[17{,}851 \; ; \; 18{,}152]$ & $\sqrt{2}\Delta x$ & $1011$ & $1$:~$[17{,}552 \; ; \; 18{,}151]$ & $\sqrt{5}\Delta x$ \\
        &              & $2$:~$[17{,}852 \; ; \; 18{,}151]$ & $\sqrt{2}\Delta x$ &              & $2$:~$[17{,}851 \; ; \; 18{,}452]$ & $\sqrt{5}\Delta x$ \\
\hline
$6$ & $1030$ & $1$:~$[17{,}852 \; ; \; 18{,}152]$ & $\Delta x$ & $1012$ & $1$:~$[17{,}852 \; ; \; 18{,}152]$ & $\Delta x$\\
\hline
$7$ & $1031$ & $1$:~$[17{,}551 \; ; \; 18{,}152]$ & $\sqrt{5}\Delta x$ & $1013$ & $1$:~$[17{,}251 \; ; \; 18{,}151]$ & $3\Delta x$ \\
        &              & $2$:~$[17{,}850 \; ; \; 18{,}152]$ & $\sqrt{5}\Delta x$ &              & $2$:~$[17{,}851 \; ; \; 18{,}751]$ & $3\Delta x$\\
        &              & $3$:~$[17{,}852 \; ; \; 18{,}150]$ & $\sqrt{5}\Delta x$ & &  & \\
        &              & $4$:~$[17{,}852 \; ; \; 18{,}451]$ & $\sqrt{5}\Delta x$ & &  & \\
\hline
$8$ & $1032$ & $1$:~$[17{,}550 \; ; \; 18{,}152]$ & $2\sqrt{2}\Delta x$ & $1014$ & $1$:~$[17{,}551 \; ; \; 18{,}451]$ & $3\Delta x$\\
        &              & $2$:~$[17{,}552 \; ; \; 18{,}152]$ & $2\Delta x$ & &  & \\
        &              & $3$:~$[17{,}852 \; ; \; 18{,}450]$ & $2\sqrt{2}\Delta x$ & &  & \\
        &              & $4$:~$[17{,}852 \; ; \; 18{,}452]$ & $2\Delta x$ & &  & \\
\hline
$9$ & $1033$ & $1$:~$[17{,}252 \; ; \; 18{,}152]$ & $3\Delta x$ & $1015$ & $1$:~$[17{,}551 \; ; \; 18{,}152]$ & $\sqrt{5}\Delta x$\\
        & $1033$ & $2$:~$[17{,}852 \; ; \; 18{,}752]$ & $3\Delta x$ &              & $2$:~$[17{,}552 \; ; \; 18{,}152]$ & $2\Delta x$\\
       &              &                                                      &                  &              & $3$:~$[17{,}852 \; ; \; 18{,}451]$ & $\sqrt{5}\Delta x$ \\
       &              &                                                      &                  &              & $4$:~$[17{,}852 \; ; \; 18{,}452]$ & $2\Delta x$ \\      
\hline
$10$ & $1034$ & $1$:~$[17{,}552 \; ; \; 18{,}452]$ & $3\Delta x$ & $1018$ & $1$:~$[17{,}850 \; ; \; 18{,}152]$ & $\sqrt{5}\Delta x$ \\
         &              & $2$:~$[17{,}553 \; ; \; 18{,}151]$ & $2\sqrt{2}\Delta x$ &              & $2$:~$[17{,}852 \; ; \; 18{,}150]$ & $\sqrt{5}\Delta x$ \\
        &              & $3$:~$[17{,}553 \; ; \; 18{,}152]$ & $\sqrt{5}\Delta x$ &              & $3$:~$[17{,}852 \; ; \; 18{,}153]$ & $\sqrt{2}\Delta x$ \\
        &              & $4$:~$[17{,}851 \; ; \; 18{,}453]$ & $2\sqrt{2}\Delta x$ &              & $4$:~$[17{,}853 \; ; \; 18{,}152]$ & $\sqrt{2}\Delta x$\\
        &              & $5$:~$[17{,}852 \; ; \; 18{,}453]$ & $\sqrt{5}\Delta x$ & &  & \\
 \hline
\end{tabu}
}
\caption{List of broken bonds and corresponding bond lengths for the first $10$ stages of bond breaking, for the case of $\alpha = 2$, for the two bond-failure criteria in Example 2 ({\em cf.}~Figures~\ref{Fig: Example 2c results} and ~\ref{Fig: Example 2d results}). Bonds are listed with the numbers of the pair of computational nodes they connect.}
\label{table: Example 2 broken bonds alpha 2.}
\end{table}

\subsubsection{The case $\mathit{\alpha=0}$}
In the case of $\alpha = 0$, 
the bond energy density required for bond breaking is larger for the critical energy density criterion than for the critical stretch criterion (i.e., $s_c^2(\|\bm{\xi}\|) > s_0^2$) for bonds of length $\sqrt{5}\Delta x$ or shorter ({\em cf.}~Figure~\ref{table: Example 1 critical stretch from energy}). 
This is reflected in Figure~\ref{Fig: Example 2a results}, where although the first bonds to break (bonds of length $\sqrt{5}\Delta x$; see Table~\ref{table: Example 2 broken bonds alpha 0.}) are the same for both criteria, those bonds break 
earlier for the critical stretch criterion (step $928$; see Figure~\ref{Fig: Example 2a results (a)}) than for the critical energy density criterion (step $930$; see Figure~\ref{Fig: Example 2a results (b)}). 
A similar situation occurs during the next stage of bond breaking: although the same bond breaks next (bond of length $\Delta x$; see Table~\ref{table: Example 2 broken bonds alpha 0.}), it breaks earlier for the critical stretch criterion (step $929$; see Figure~\ref{Fig: Example 2a results (c)}) than for the critical energy density criterion (step~$933$; see Figure~\ref{Fig: Example 2a results (d)}). 
Note that the energy density required to break bonds of length $\sqrt{5}\Delta x$ or $\Delta x$ for the critical energy density criterion is larger than that required to break either of these bonds for the critical stretch criterion ({\em cf.}~Figure~\ref{table: Example 1 critical stretch from energy}). This is reflected in that no bonds break until step~930 for the critical energy density criterion, whereas bonds of length  $\sqrt{5}\Delta x$ and $\Delta x$ already break by step~929. 

More precisely, Theorem~\ref{thm:2Dbehavior} implies that, for $\alpha = 0$ and $\delta = 3 \Delta x$, shorter bonds satisfying $ \|\bm{\xi}\| < 9\Delta x/4$ break earlier for the critical stretch criterion than for the critical energy density criterion, whereas the opposite holds for  longer bonds satisfying $9\Delta x/4 < \|\bm{\xi}\| \leq \delta$. 
%
%
This is consistent with the results observed in Figures~\ref{Fig: Example 2a results} and \ref{Fig: Example 2b results}. For instance, in Figure~\ref{Fig: Example 2b results}, we observe that, for the critical stretch criterion, shorter bonds (of length $\sqrt{2}\Delta x$; see Table~\ref{table: Example 2 broken bonds alpha 0.}) break at step $935$ (see Figure~\ref{Fig: Example 2b results (a)}), followed by longer bonds (of length $2\sqrt{2}\Delta x$; see Table~\ref{table: Example 2 broken bonds alpha 0.}) that break at step $1036$ (see Figure~\ref{Fig: Example 2b results (c)}); in contrast, the reversed scenario occurs for the critical energy density criterion (see Figures~\ref{Fig: Example 2b results (b)} and~\ref{Fig: Example 2b results (d)}).

A comparison of the strain energy density field at the time of initial bond breaking is provided in Figure~\ref{Fig: Example 2a2 results}. Since  there is only a two time-step difference between the result for the critical stretch criterion (step $928$) and the result for the critical energy density criterion (step $930$), the two figures look alike.  

\begin{figure*}[htbp!]
    \centering
   {\bf The case}~\boldsymbol{$\alpha = 0$}\\[0.1in]
    \begin{subfigure}[t]{0.49\textwidth}
        \centering
      {\bf Critical stretch}\\[0.1in]
       \includegraphics[width=\textwidth, trim=0cm 0cm 0cm 0cm,clip]{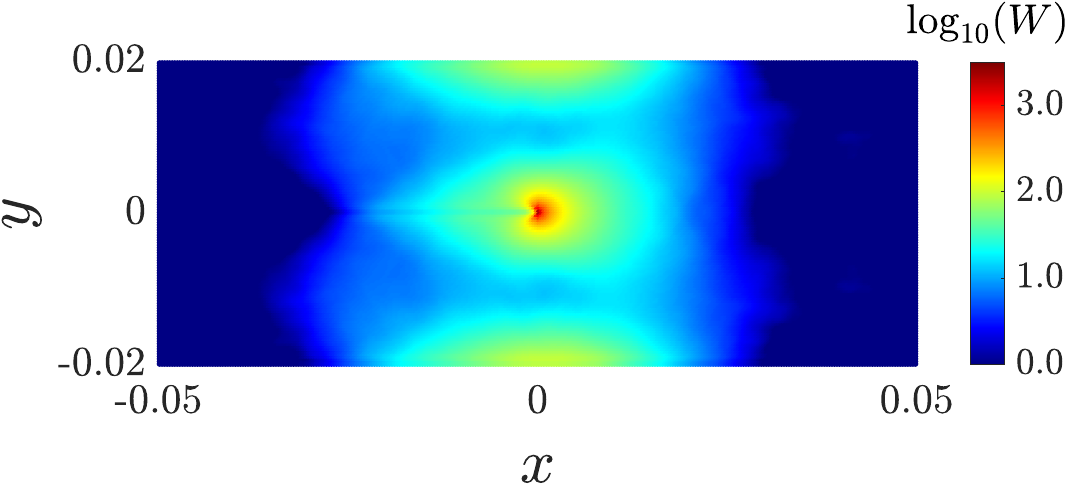}
        \caption{step $= 928$}
    \label{Fig: Example 2a2 results (a)}
    \end{subfigure}%
      ~ 
    \begin{subfigure}[t]{0.49\textwidth}
        \centering
    {\bf Critical energy density}\\[0.1in]
        \includegraphics[width=\textwidth, trim=0cm 0cm 0cm 0cm,clip]{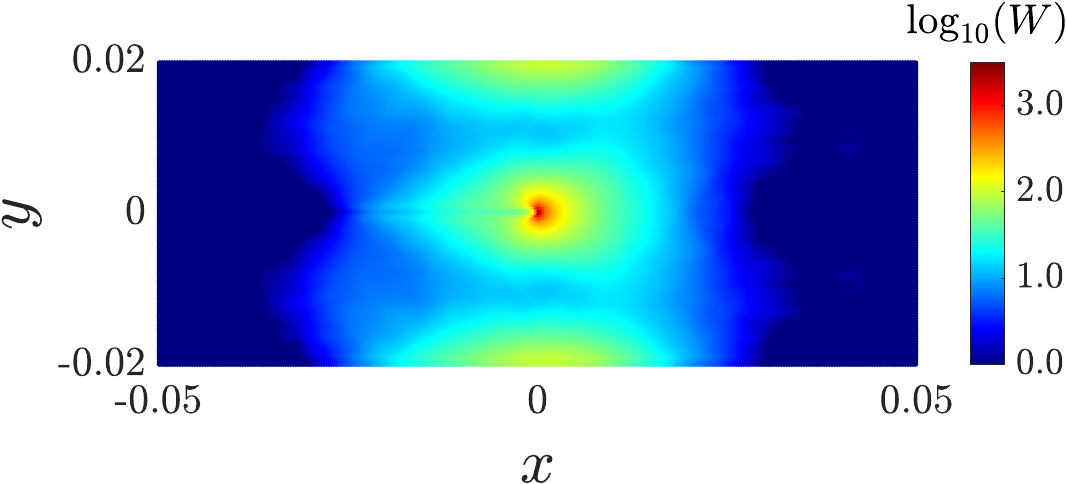}
        \caption{step $= 930$}
    \label{Fig: Example 2a2 results (b)}
    \end{subfigure}
        \caption{Comparison of the strain energy density field between the two bond-failure criteria at the time of initial bond breaking for $\alpha = 0$ in Example~2.} 
    \label{Fig: Example 2a2 results}
\end{figure*}

\subsubsection{The case $\mathit{\alpha=2}$}
In the case of $\alpha = 2$, the bond energy density required for bond breaking is smaller for the critical energy density criterion than for the critical stretch criterion (i.e., $s_c^2(\|\bm{\xi}\|) < s_0^2$)  for bonds of length $\Delta x$ or $\sqrt{2}\Delta x$ ({\em cf.}~Figure~\ref{table: Example 1 critical stretch from energy}). 
More precisely, Theorem~\ref{thm:2Dbehavior} implies that, for $\alpha = 2$ and $\delta = 3 \Delta x$, shorter bonds satisfying $\|\bm{\xi}\| < 2\Delta x$ break earlier for the critical energy density criterion than for the critical stretch criterion, whereas the opposite holds for longer bonds satisfying $2\Delta x < \|\bm{\xi}\| \leq \delta$.
This is reflected 
 in Figure~\ref{Fig: Example 2c results}, where the first bond to break for the critical energy density criterion (step $932$; see Figure~\ref{Fig: Example 2c results (b)}) is a shortest bond of length $\Delta x$ (see Table~\ref{table: Example 2 broken bonds alpha 2.}). This bond also breaks much earlier than the same one for the critical stretch criterion (step $1024$; see Figure~\ref{Fig: Example 2c results (c)}), which only breaks after longer bonds (of length $\sqrt{5}\Delta x$; see Table~\ref{table: Example 2 broken bonds alpha 2.}) break.  
These longer bonds require more energy density to break for the critical energy density criterion, compared to the bond of length $\Delta x$, so they break second. 
We note that for both criteria, the same bonds break third, which are of length $2\Delta x$ (see Table~\ref{table: Example 2 broken bonds alpha 2.} and Figures~\ref{Fig: Example 2c results (e)} and~\ref{Fig: Example 2c results (f)}). The fact that these bonds break in the same stage of bond breaking for both criteria is consistent with the fact that the same energy density is require for them to break in both criteria ({\em cf.}~Figure~\ref{table: Example 1 critical stretch from energy}), even though they break at different time steps. 
Another example of shorter bonds breaking first for the critical energy density criterion is observed in the fourth stage of bond breaking (step~$1010$; see Figure~\ref{Fig: Example 2c results (h)}), where bonds of length $\sqrt{2}\Delta x$ (see Table~\ref{table: Example 2 broken bonds alpha 2.}) break at an early stage compared to the critical stretch criterion (step $1028$; see Figure~\ref{Fig: Example 2c results (i)}), which only break after longer bonds (of length $\sqrt{5}\Delta x$ and $3\Delta x$; see Table~\ref{table: Example 2 broken bonds alpha 2.}) break.

A comparison of the strain energy density field at the time of initial bond breaking is provided in Figure~\ref{Fig: Example 2c2 results}. We observe that a larger amount of energy is required to initiate bond breaking under the critical stretch criterion (step $1022$) compared to the critical energy density criterion (step $932$).

\begin{figure*}[htbp!]
    \centering
     {\bf The case}~\boldsymbol{$\alpha = 2$}\\[0.1in]
    \begin{subfigure}[t]{0.49\textwidth}
        \centering
      {\bf Critical stretch}\\[0.1in]
       \includegraphics[width=\textwidth, trim=0cm 0cm 0cm 0cm,clip]{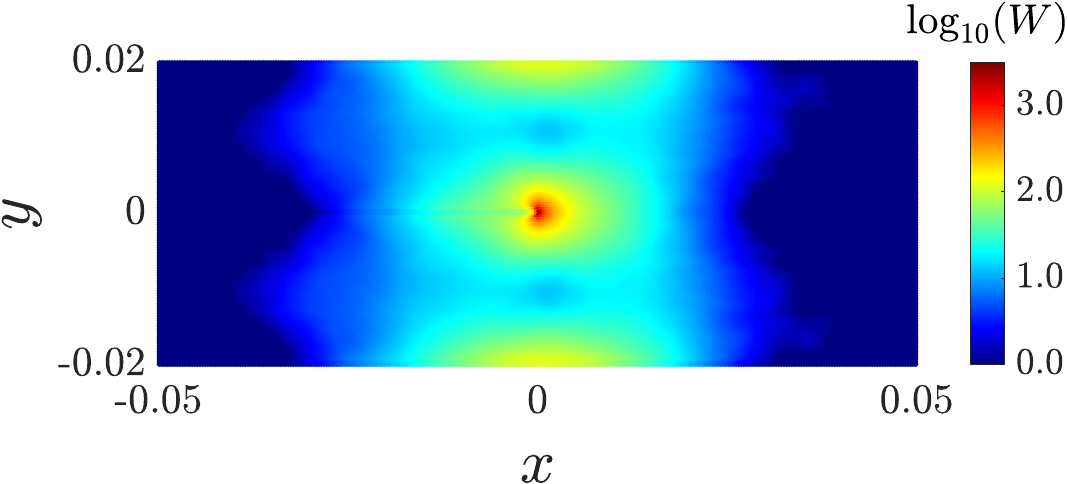}
        \caption{step $= 1022$}
    \label{Fig: Example 2c2 results (a)}
    \end{subfigure}%
      ~ 
    \begin{subfigure}[t]{0.49\textwidth}
        \centering
    {\bf Critical energy density}\\[0.1in]
        \includegraphics[width=\textwidth, trim=0cm 0cm 0cm 0cm,clip]{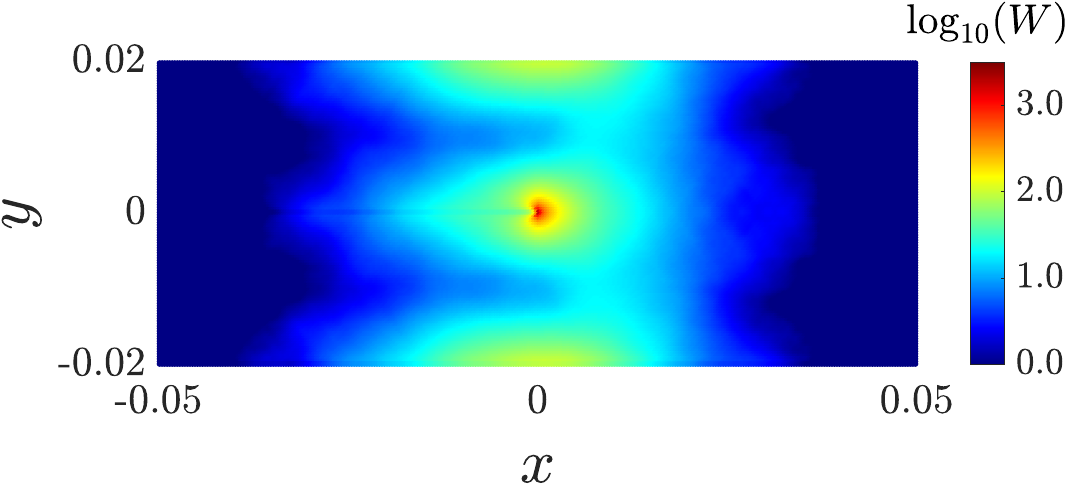}
        \caption{step $= 932$}
    \label{Fig: Example 2c2 results (b)}
    \end{subfigure}
        \caption{Comparison of the strain energy density field between the two bond-failure criteria at the time of initial bond breaking for $\alpha = 2$ in Example~2.} 
    \label{Fig: Example 2c2 results}
\end{figure*} 

\subsection{Example 3: Crack propagation and branching}\label{sec:example3}

As in Section~\ref{sec: Crack tip evolution}, we consider the problem of crack propagation in a pre-notched soda-lime glass thin plate subjected to traction loading~\cite{Bobaru-Zhang-2015}. In Section~\ref{sec: Crack tip evolution}, we focused on the crack tip evolution by examining the initial stages of bond breaking that lead to crack propagation. Here, in contrast, we perform a full simulation and study the actual propagation of the crack.

The simulation employs a two-dimensional pre-notched rectangular plate under a plane stress assumption, with dimensions $0.1$~m by $0.04$~m, as illustrated in Figure~\ref{fig: Crack propagation example}.  
 The material parameters are:  
mass density $\rho = 2{,}440$ kg/m$^3$, Young's modulus $E=72$~GPa, and fracture energy $G_0 = 3.8$~J/m$^2$; note that the thickness $h$ is not used in the simulations because it does not appear in the equations in practice (see Remarks~\ref{rem: h independence of model} and~\ref{rem: h independence of critical stretch}). The plate is initially at rest with zero initial displacements and velocities. We first consider in Section~\ref{sec: Example 3 lower traction} the case of a lower traction, with a magnitude of $\sigma = 0.2$~MPa (as in Section~\ref{sec: Crack tip evolution}), applied on the top and bottom of the plate. Then, in Section~\ref{sec: Example 3 higher traction}, we consider the case of a higher traction, with a  magnitude of $\sigma = 2$~MPa.

\begin{figure}[htbp!] 
\vspace*{1.5in}
\begin{center}
\setlength{\unitlength}{0.5cm}
\begin{picture}(0,1.2)
   \put(-5,7){\line(1,0){10}}    	
   \put(-5,3){\line(0,1){4}}     	
   \put(-5,3){\line(1,0){10}}   	
   \put(5,3){\line(0,1){4}}       
  {\color{red} \put(-5,5){\line(1,0){5}}}
  \put(-3.5,5.4){\vector(-1,0){1.4}} 
  \put(-1.5,5.4){\vector(1,0){1.4}}
  \put(-3.4,5.3){\tiny $0.05\,$m}
  \put(-3.6,4.4){\tiny pre-notch}
  \put(-6.2,6.4){\vector(0,1){0.6}} 
  \put(-6.2,5.6){\vector(0,-1){0.6}}
  \put(-7.0,5.8){\tiny $0.02\,$m}
   {\color{blue}\multiput(-5,7)(1,0){11}{\vector(0,1){1}}
   {\multiput(-5,3)(1,0){11}{\vector(0,-1){1}}}
   \put(-0.15,8.3){$\sigma$}
    \put(-0.15,1.4){$\sigma$} 
   \put(-5,8){\line(1,0){10}} 
   \put(-5,2){\line(1,0){10}}  
    }
  \put(-0.75,0.8){\vector(-1,0){4.25}} 
  \put(0.9,0.8){\vector(1,0){4.1}}
  \put(-0.7,0.7){\tiny $0.1\,$m}
  \put(5.7,5.4){\vector(0,1){1.6}} 
  \put(5.7,4.6){\vector(0,-1){1.6}}
  \put(5.2,4.8){\tiny $0.04\,$m}
  %
\end{picture}
\end{center}
\caption{Illustration of the pre-notched thin plate under traction loading in Example 3.}
\label{fig: Crack propagation example}
\end{figure}
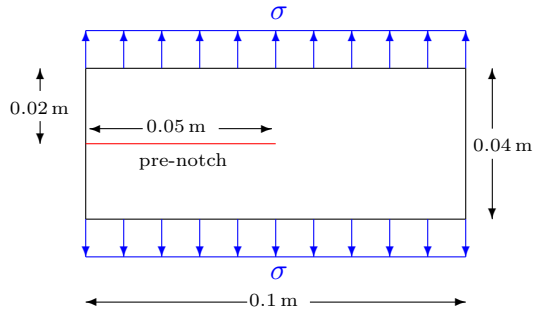

The discretization employs the meshfree approach from~\cite{SillingAskari} with a uniform grid of $300 \times 120$ computational nodes (total of $36{,}000$  nodes) and a horizon of $\delta = 1\times 10^{-3}$~m, which results in an $m$-ratio of $m=3$ ({\em cf.}~Figure~\ref{fig:Example1 neighborhood}). The total simulation time is $T = 150$~$\mu$s, and the time step is $\Delta t = 67\times 10^{-3}$~$\mu$s. Time integration is performed via the velocity Verlet algorithm.
In addition to the strain energy density field, the damage field introduced in~\cite{SillingAskari} is computed. 
The computations employ the PDMATLAB2D code; see~\cite{SelesonEtAl} for further details.

\subsubsection{Lower traction: $\mathit{\sigma = 0{.}2}$ MPa}\label{sec: Example 3 lower traction}

In Figure~\ref{Fig: Example 3a results}, we compare the  damage and strain energy density fields obtained using the critical stretch criterion (top plots) with those obtained using the critical energy density criterion (bottom plots) at the end of the simulation for the case of $\alpha = 0$. 
Although minor differences can be observed, especially in the strain energy density, overall the results are similar. To further assess the difference between the two criteria, we plot the  evolution of the horizontal position of the crack tip over time. Note that, 
in Figure~\ref{Fig: Example 3a results}, the crack  propagates horizontally for both criteria, so the vertical component of the crack tip position remains constant. To track the crack tip position over time, we select the right-most node with damage $\varphi >0.35$ as the crack tip, see~\cite{Bobaru-Ha-2010}. Figure~\ref{Fig: Example 3a2 results} shows the results for the  horizontal position of the crack tip over time for both criteria. The results suggest that crack propagation is initiated at the same time ($t \approx 74$ $\mu$s) for both criteria. 
However, after the initial propagation, the crack tip of the critical stretch criterion simulation lags behind the crack tip of the critical energy density criterion simulation.

\begin{figure*}[htbp!]
    \centering
      {\bf The case}~\boldsymbol{$\sigma = 0.2$} {\bf MPa and} \boldsymbol{$\alpha = 0$}\\[0.1in]
     {\bf Critical stretch}\\[0.1in]
    \begin{subfigure}[t]{0.49\textwidth}
        \centering
       \includegraphics[width=.95\textwidth, trim=0cm 0cm 0cm 0cm,clip]{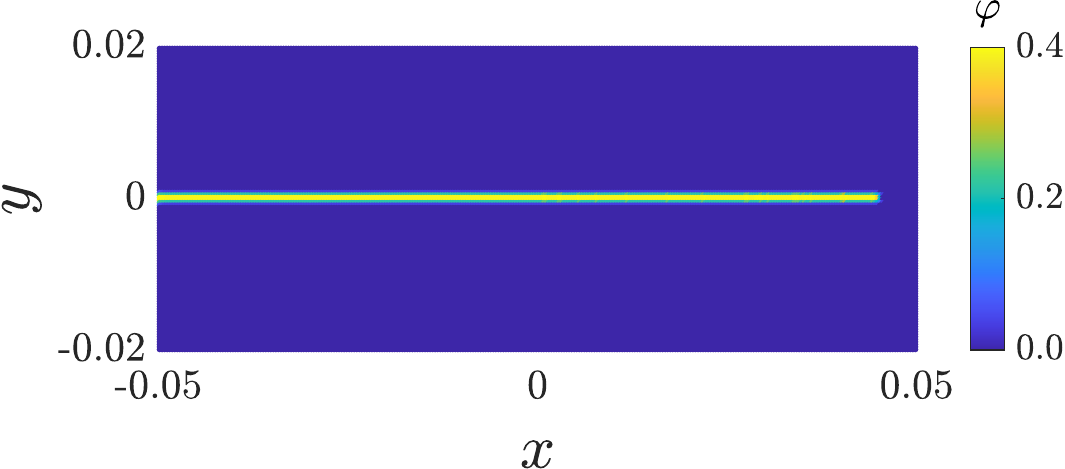}
        \caption{Damage}
    \label{Fig: Example 3a results (a)}
    \end{subfigure}%
      ~ 
    \begin{subfigure}[t]{0.49\textwidth}
        \centering
        \includegraphics[width=.95\textwidth, trim=0cm 0cm 0cm 0cm,clip]{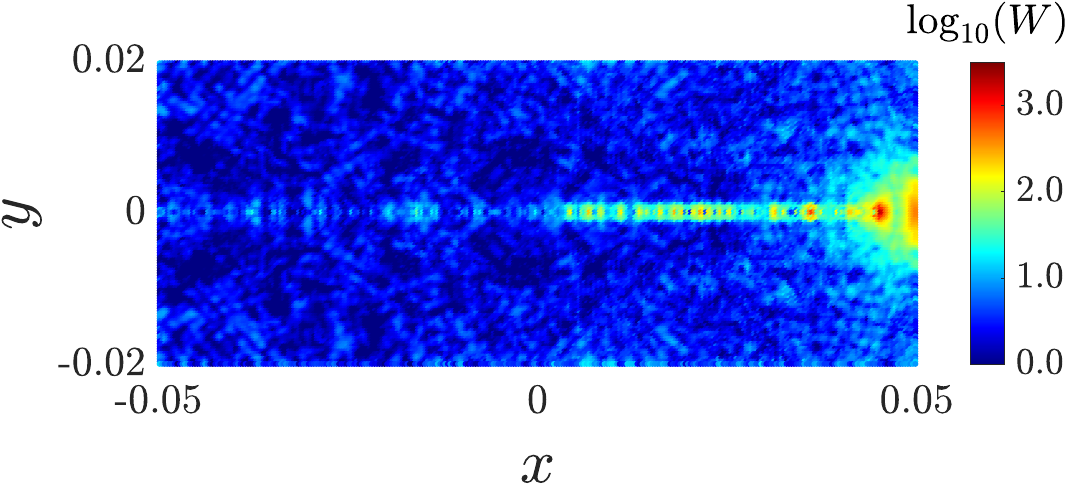}
        \caption{Strain energy density}
    \label{Fig: Example 3a results (b)}
    \end{subfigure}
        {\bf Critical energy density}\\[0.1in]
    \begin{subfigure}[t]{0.49\textwidth}
        \centering
       \includegraphics[width=.95\textwidth, trim=0cm 0cm 0cm 0cm,clip]{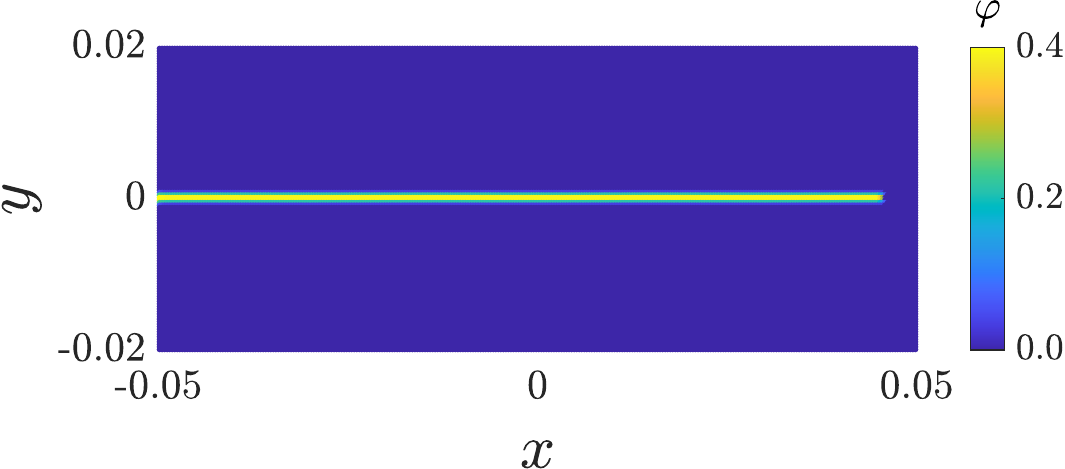}
        \caption{Damage}
    \label{Fig: Example 3a results (a)}
    \end{subfigure}%
      ~ 
    \begin{subfigure}[t]{0.49\textwidth}
        \centering
        \includegraphics[width=.95\textwidth, trim=0cm 0cm 0cm 0cm,clip]{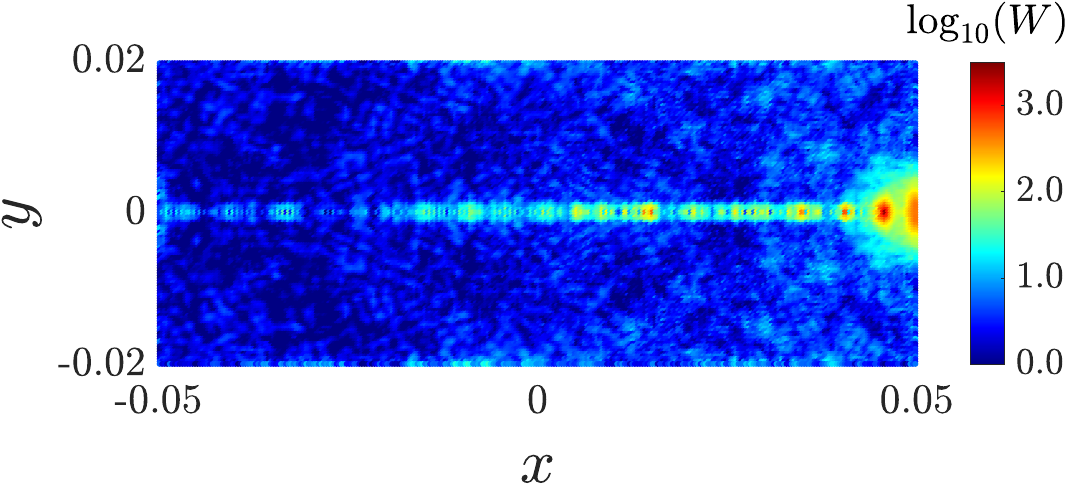}
        \caption{Strain energy density}
    \label{Fig: Example 3a results (b)}
    \end{subfigure}
        \caption{Comparison of the damage and strain energy density fields between the two bond-failure criteria for the case of $\sigma = 0.2$~MPa and $\alpha = 0$ at the final time in Example~3.} 
    \label{Fig: Example 3a results}
\end{figure*}

\begin{figure*}[htbp!]
     \begin{subfigure}[t]{0.49\textwidth}
        \centering
                \includegraphics[width=\textwidth, trim=0cm 0cm 0cm 0cm,clip]{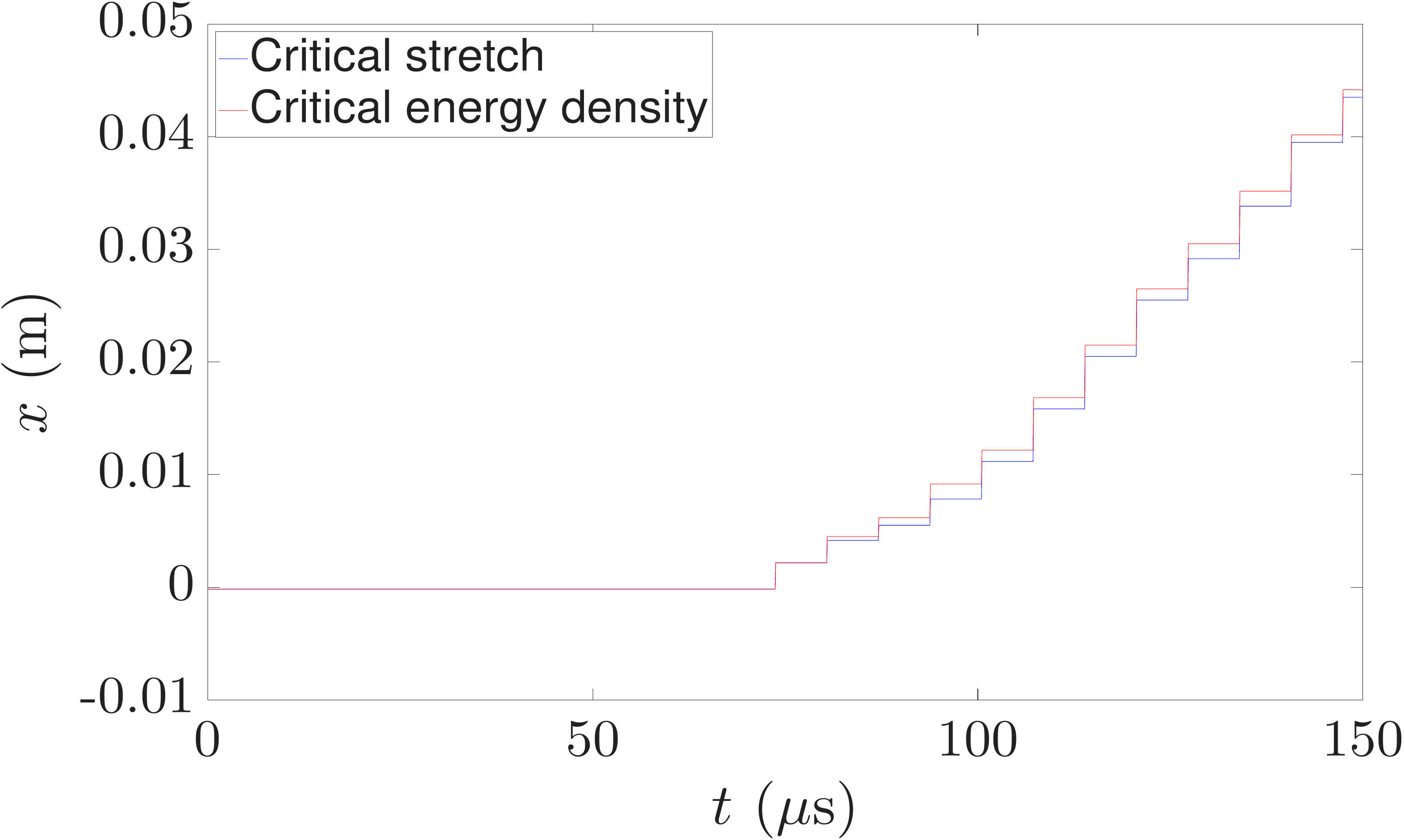}
        \caption{$\alpha = 0$} 
    \label{Fig: Example 3a2 results}
    \end{subfigure}
    \begin{subfigure}[t]{0.49\textwidth}
    \centering
          \includegraphics[width=\textwidth, trim=0cm 0cm 0cm 0cm,clip]{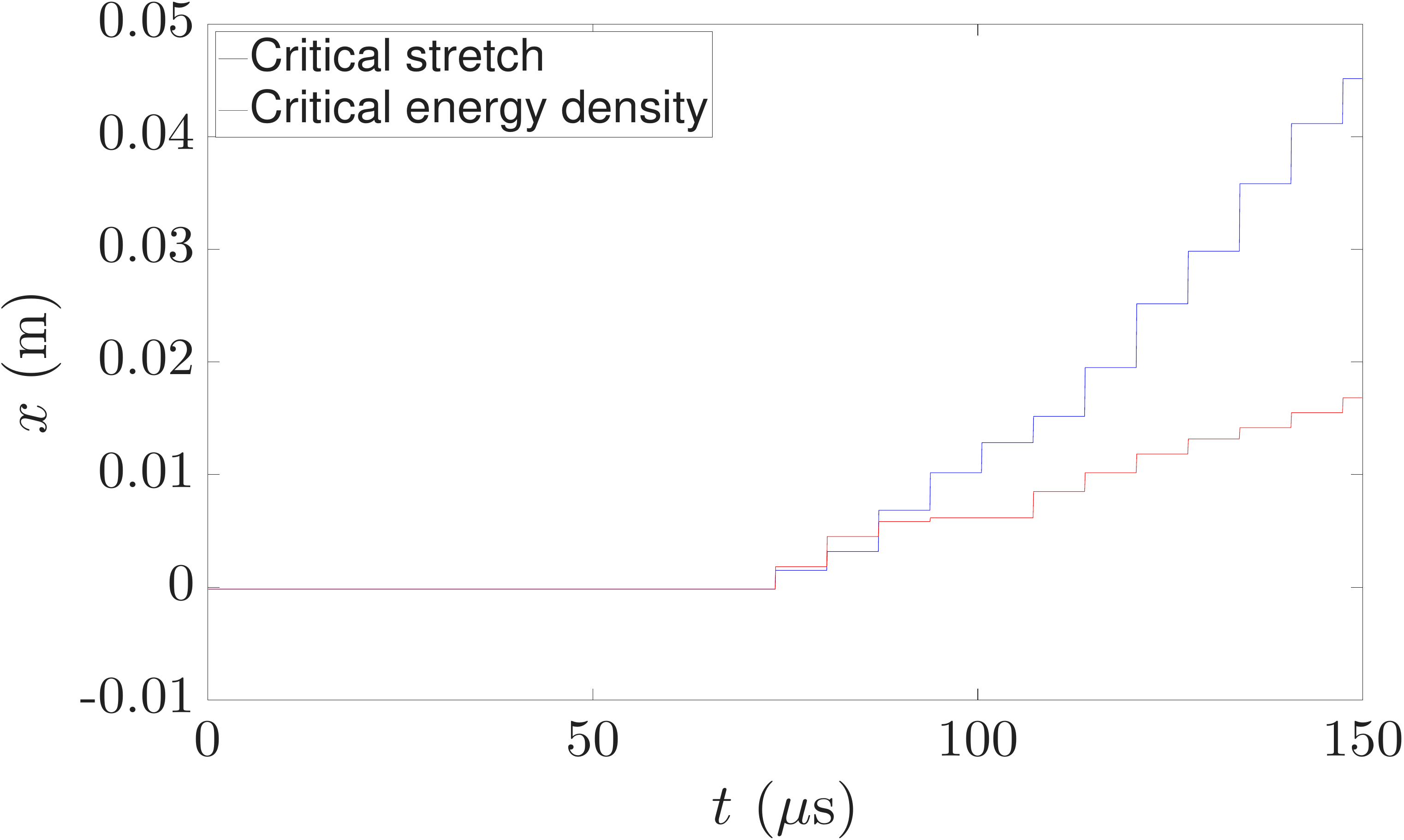}
        \caption{$\alpha = 2$} 
    \label{Fig: Example 3b2 results}
\end{subfigure} 
    \caption{Comparison of the horizontal position of the crack tip  over time between the two bond-failure criteria in Example~3. }
\end{figure*}

\begin{figure*}[htbp!]
    \centering
       {\bf The case}~\boldsymbol{$\sigma = 0.2$}  {\bf MPa and} \boldsymbol{$\alpha = 2$}\\[0.1in]
     {\bf Critical stretch}\\[0.1in]
    \begin{subfigure}[t]{0.49\textwidth}
        \centering
       \includegraphics[width=.95\textwidth, trim=0cm 0cm 0cm 0cm,clip]{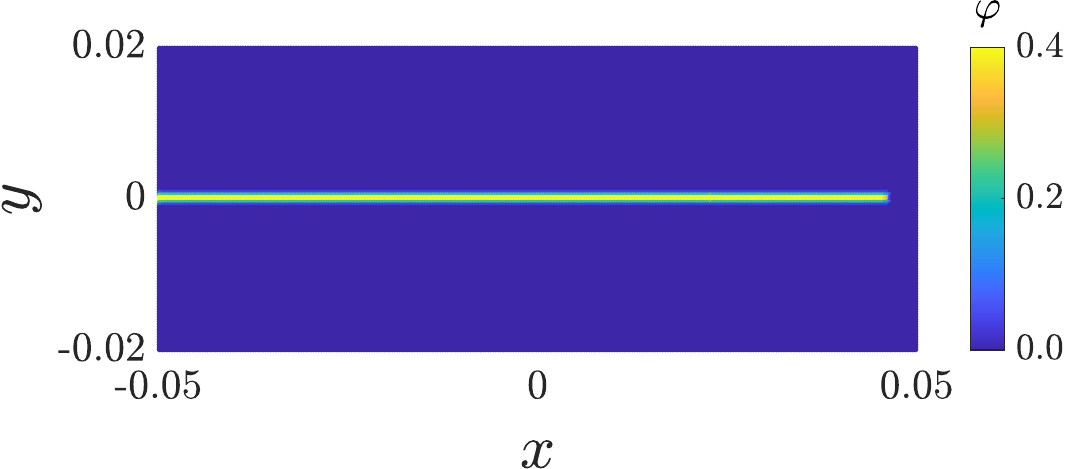}
        \caption{Damage}
    \label{Fig: Example 3b2 results (a)}
    \end{subfigure}%
      ~ 
    \begin{subfigure}[t]{0.49\textwidth}
        \centering
        \includegraphics[width=.95\textwidth, trim=0cm 0cm 0cm 0cm,clip]{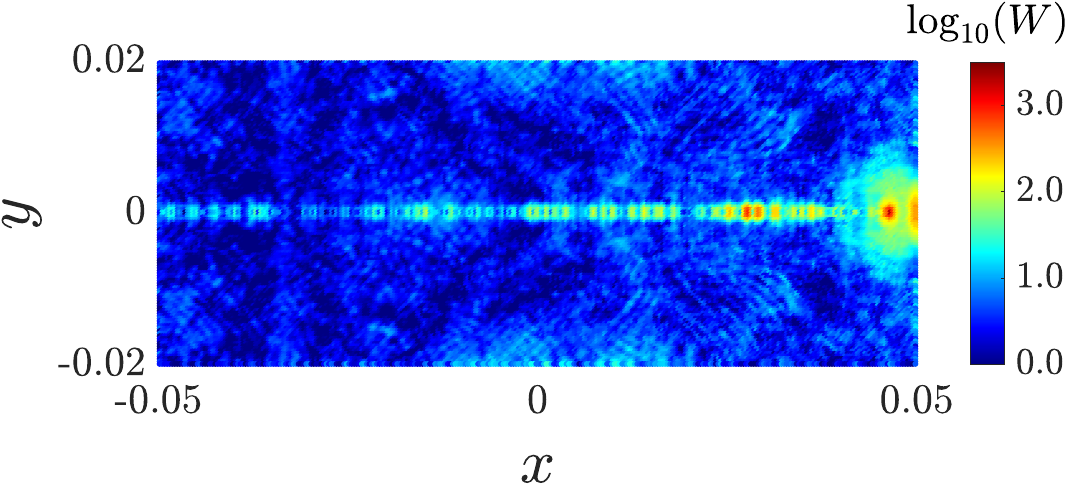}
        \caption{Strain energy density}
    \label{Fig: Example 3b2 results (b)}
    \end{subfigure}
        {\bf Critical energy density}\\[0.1in]
    \begin{subfigure}[t]{0.49\textwidth}
        \centering
       \includegraphics[width=.95\textwidth, trim=0cm 0cm 0cm 0cm,clip]{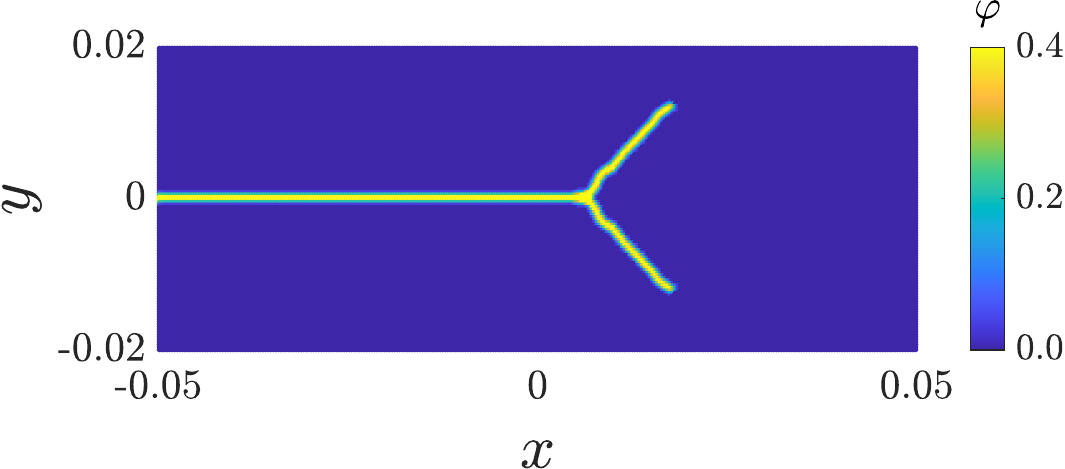}
        \caption{Damage}
    \label{Fig: Example 3b2 results (a)}
    \end{subfigure}%
      ~ 
    \begin{subfigure}[t]{0.49\textwidth}
        \centering
        \includegraphics[width=.95\textwidth, trim=0cm 0cm 0cm 0cm,clip]{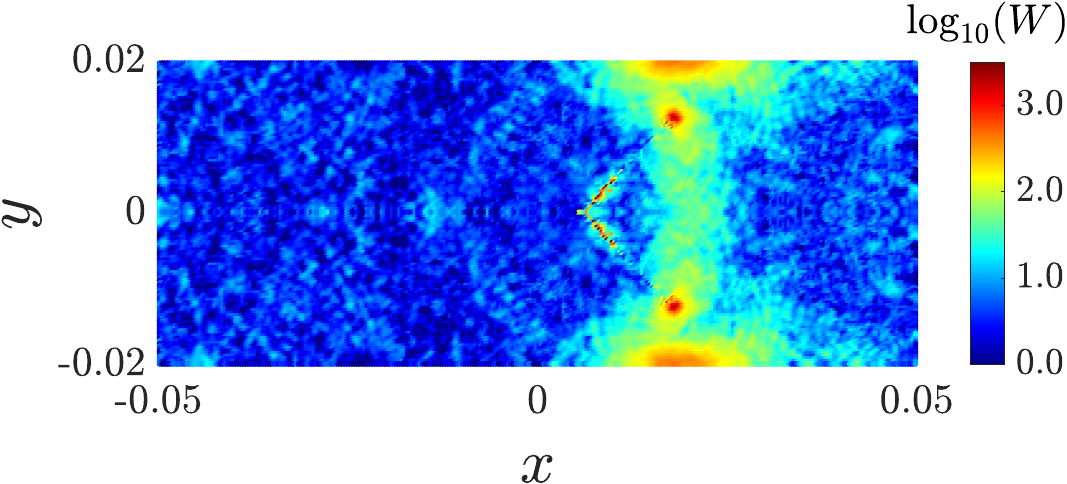}
        \caption{Strain energy density}
    \label{Fig: Example 3b2 results (b)}
    \end{subfigure}
        \caption{Comparison of the damage and strain energy density fields between the two bond-failure criteria for the case of $\sigma = 0.2$~MPa and $\alpha = 2$ at the final time in Example 3.} 
    \label{Fig: Example 3b results}
\end{figure*} 

In Figure~\ref{Fig: Example 3b results}, we compare the  damage and strain energy density fields obtained using the critical stretch criterion (top plots) with those obtained using the critical energy density criterion (bottom plots) at the end of the simulation for the case of $\alpha = 2$. In this case, the difference between the two criteria is significant: for the critical stretch criterion, the crack evolves as a single horizontal crack, whereas for the critical energy density criterion, the crack branches. 
We plot the  horizontal position of the crack tip over time for both criteria in Figure~\ref{Fig: Example 3b2 results}. Note that crack propagation is initiated at the same time for both criteria, which also
coincides with the case of $\alpha = 0$ ($t \approx 74$ $\mu$s).

\subsubsection{Higher traction: $\mathit{\sigma = 2}$ MPa}\label{sec: Example 3 higher traction}


We now consider the case of traction with a tenfold magnitude, i.e., $\sigma = 2$~MPa. This case has been shown to lead to crack branching for the critical stretch criterion ~\cite{Bobaru-Zhang-2015}.  
We run the same simulation as above, but with the new traction magnitude and a shorter simulation time of $T = 43$~$\mu$s. 
The results are shown in Figure~\ref{Fig: Example 4a results} for $\alpha = 0$ and in Figure~\ref{Fig: Example 4b results} for $\alpha = 2$. 
For $\alpha=0$, small differences between the two bond-failure criteria are observed, especially in the strain energy density. 
For $\alpha = 2$, a more significant difference is observed in the resulting crack paths.  

\begin{figure*}[htbp!]
    \centering
      {\bf The case}~\boldsymbol{$\sigma = 2$} {\bf MPa and} \boldsymbol{$\alpha = 0$}\\[0.1in]
     {\bf Critical stretch}\\[0.1in]
    \begin{subfigure}[t]{0.49\textwidth}
        \centering
       \includegraphics[width=.95\textwidth, trim=0cm 0cm 0cm 0cm,clip]{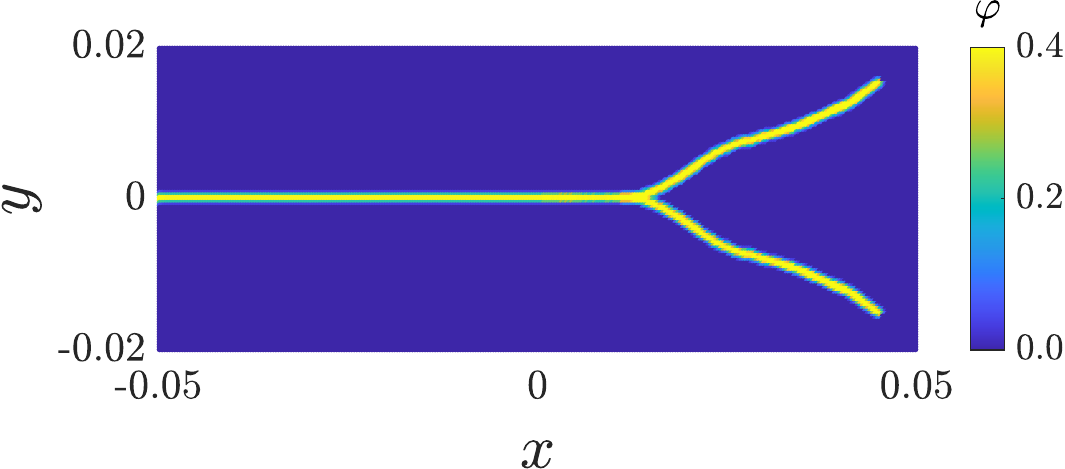}
        \caption{Damage}
    \label{Fig: Example 4a results (a)}
    \end{subfigure}%
      ~ 
    \begin{subfigure}[t]{0.49\textwidth}
        \centering
        \includegraphics[width=.95\textwidth, trim=0cm 0cm 0cm 0cm,clip]{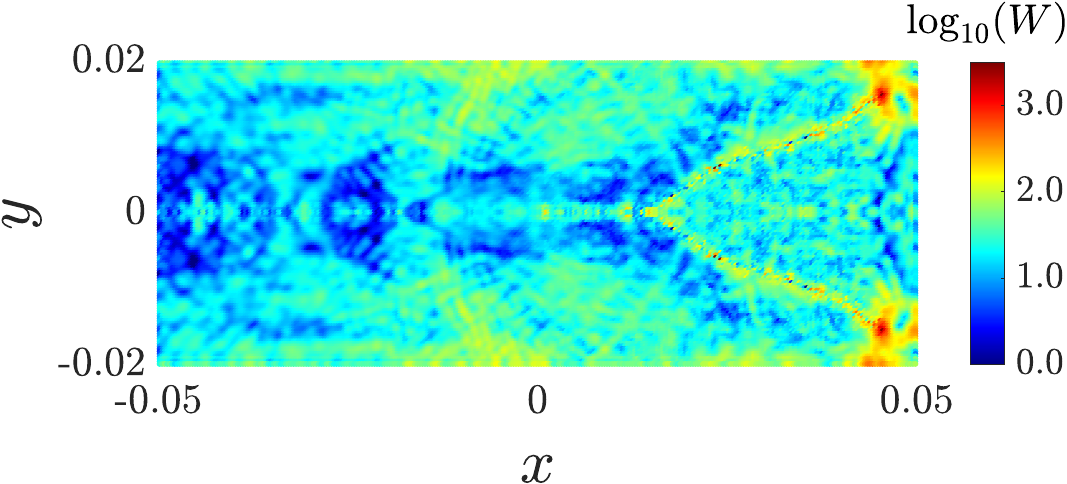}
        \caption{Strain energy density}
    \label{Fig: Example 4a results (b)}
    \end{subfigure}
        {\bf Critical energy density}\\[0.1in]
    \begin{subfigure}[t]{0.49\textwidth}
        \centering
       \includegraphics[width=.95\textwidth, trim=0cm 0cm 0cm 0cm,clip]{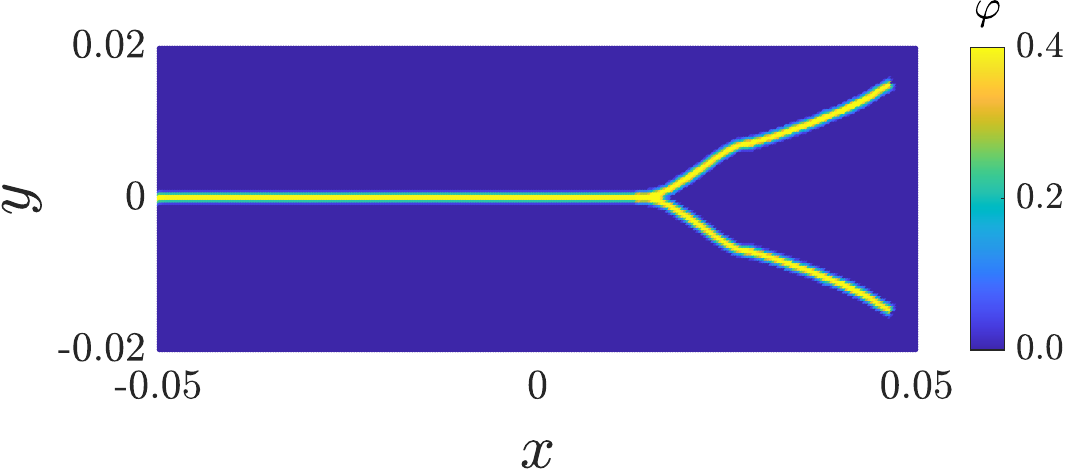}
        \caption{Damage}
    \label{Fig: Example 4a results (a)}
    \end{subfigure}%
      ~ 
    \begin{subfigure}[t]{0.49\textwidth}
        \centering
        \includegraphics[width=.95\textwidth, trim=0cm 0cm 0cm 0cm,clip]{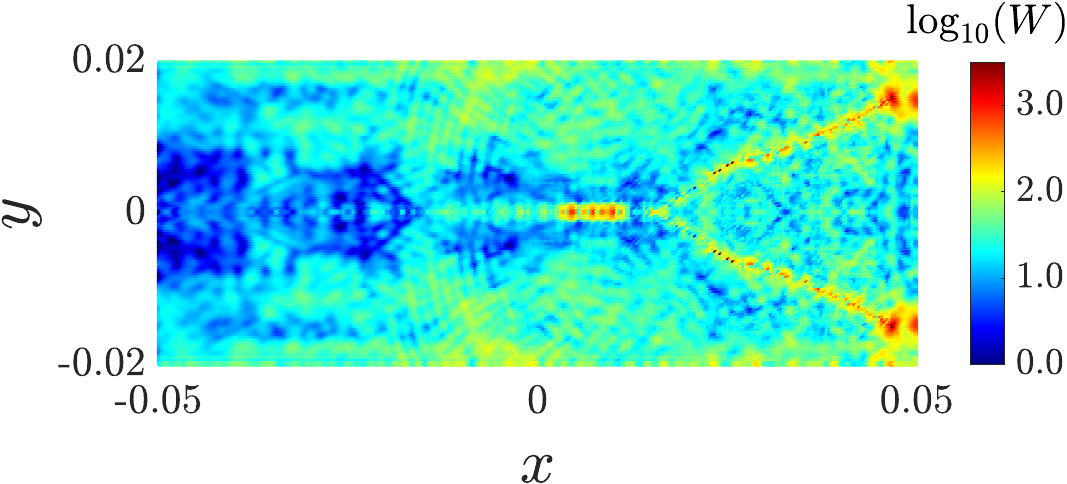}
        \caption{Strain energy density}
    \label{Fig: Example 4a results (b)}
    \end{subfigure}
        \caption{Comparison of the damage  and strain energy density fields between 
       the two bond-failure criteria for the case of $\sigma = 2$~MPa and $\alpha = 0$ at the final time in Example~3.} 
    \label{Fig: Example 4a results}
\end{figure*} 

\begin{figure*}[htbp!]
    \centering
         {\bf The case}~\boldsymbol{$\sigma = 2$} {\bf MPa and} \boldsymbol{$\alpha = 2$}\\[0.1in]
     {\bf Critical stretch}\\[0.1in]
    \begin{subfigure}[t]{0.49\textwidth}
        \centering
       \includegraphics[width=.95\textwidth, trim=0cm 0cm 0cm 0cm,clip]{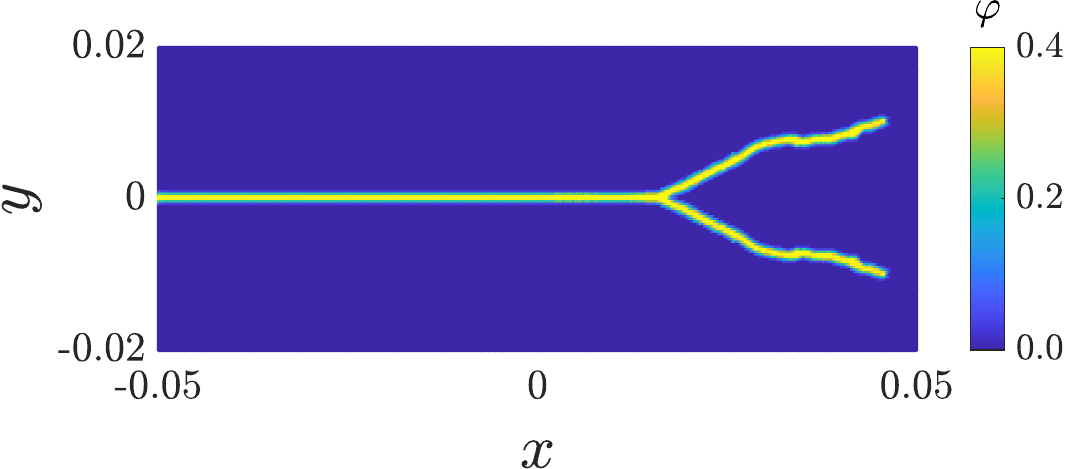}
        \caption{Damage}
    \label{Fig: Example 4b results (a)}
    \end{subfigure}%
      ~ 
    \begin{subfigure}[t]{0.49\textwidth}
        \centering
        \includegraphics[width=.95\textwidth, trim=0cm 0cm 0cm 0cm,clip]{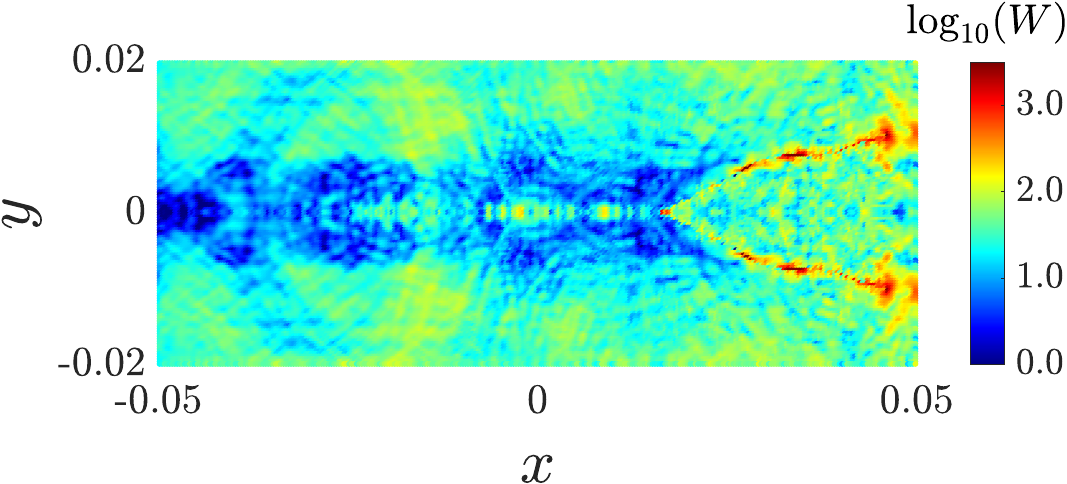}
        \caption{Strain energy density}
    \label{Fig: Example 4b results (b)}
    \end{subfigure}
        {\bf Critical energy density}\\[0.1in]
    \begin{subfigure}[t]{0.49\textwidth}
        \centering
       \includegraphics[width=.95\textwidth, trim=0cm 0cm 0cm 0cm,clip]{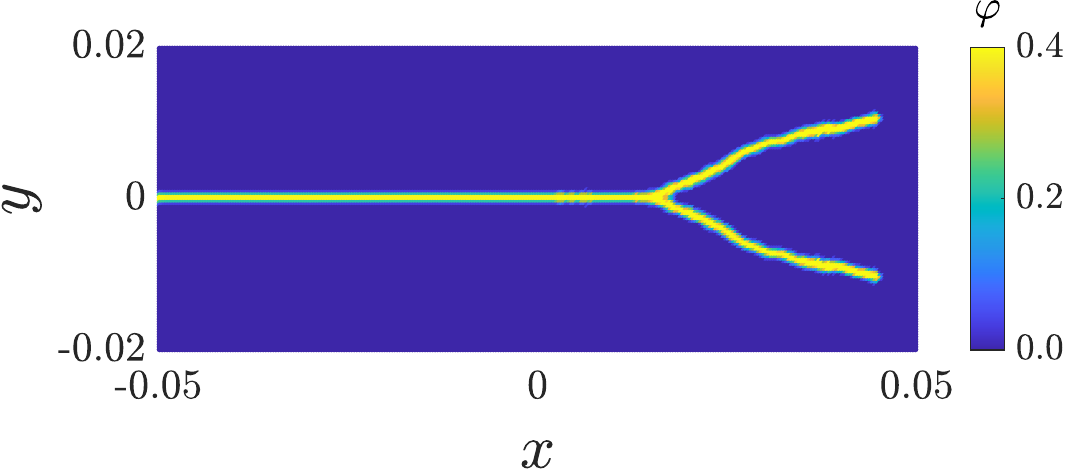}
        \caption{Damage}
    \label{Fig: Example 4b results (a)}
    \end{subfigure}%
      ~ 
    \begin{subfigure}[t]{0.49\textwidth}
        \centering
        \includegraphics[width=.95\textwidth, trim=0cm 0cm 0cm 0cm,clip]{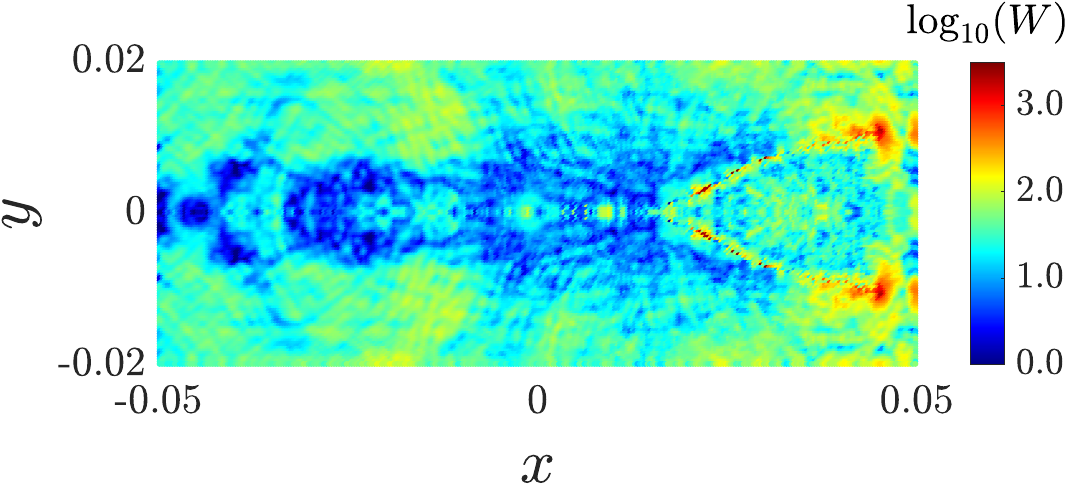}
        \caption{Strain energy density}
    \label{Fig: Example 4b results (b)}
    \end{subfigure}
        \caption{Comparison of the damage and strain energy density fields between the two bond-failure criteria for the case of $\sigma = 2$~MPa and $\alpha = 2$ at the final time in Example~3.} 
    \label{Fig: Example 4b results}
\end{figure*}

\section{Summary}\label{sec: discussion}

This paper presented a mathematical analysis of the critical stretch  and critical energy density bond-failure criteria for bond-based peridynamic fracture models. The analysis employed a model that generalizes the commonly used prototype microelastic brittle (PMB) constitutive model by incorporating an influence function.  
The critical energy density criterion was recast as a bond-dependent critical stretch criterion to rigorously prove that the two criteria are not equivalent in general and, consequently, result in different bond-breaking and fracture behaviors. 
In particular, depending on the choice of the influence function, certain bonds at critical stretch break under the critical stretch criterion but do not break under the critical energy density criterion, and vice versa. 
Moreover, for the critical energy density criterion, the choice of influence function may dictate whether shorter or longer bonds break first; this behavior is absent in the critical stretch criterion. 
Additionally, we prove that the two bond-failure criteria are equivalent if and only if the influence function is $\omega(\|\bm{\xi}\|) = \beta \|\bm{\xi}\|^{-1}$ for some $\beta>0$. 
Our findings were confirmed numerically for two-dimensional problems, with examples  presented for isotropic extension, crack tip evolution, and crack branching.  
It was revealed that the speed at which the crack tip evolves and the branching effect depend heavily on both the bond-failure criterion and the influence function. 

\section{Acknowledgements}

P. Seleson was supported by the Laboratory Directed Research and Development Program of Oak Ridge National Laboratory, managed by UT-Battelle, LLC, for the U. S. Department of Energy. 
P.~R.~Stinga was supported by Simons Foundation grant MP-TSM-00002709. 




\begin{thebibliography}{10}

\bibitem{Aguiar-Patriota-2023} 
A. R. Aguiar, T. V. B. Patriota, 
{Brittle fracture modeling using ordinary state‑based peridynamics with continuous bond‑breakage damage}, 
\textit{J. Peridyn. Nonlocal Model.}  
\textbf{5} 
(2023), 81--120.
\filbreak

\bibitem{Bobaru-Zhang-2015} F.~Bobaru, G.~Zhang, 
{Why do cracks branch? {A} peridynamic investigation of dynamic brittle fracture}, 
\textit{Int. J. Fract.}
\textbf{196} (2015), 59--98. 
\filbreak

\bibitem{Diana-Casolo-2019} V.~Diana and S.~Casolo, 
{A bond-based micropolar peridynamic model with shear deformability: Elasticity, failure properties and initial yield domains},
\textit{Int. J. Solids Struct.} 
\textbf{160} (2019), 201--231. 
\filbreak

\bibitem{Dipasquale-et-al-2022} D.~Dipasquale, G.~Sarego, P.~Prapamonthon, S.~Yooyen, and A.~Shojaei,
{A stress tensor-based failure criterion for ordinary state-based peridynamic models},
\textit{J. Appl. Comput. Mech.} 
\textbf{8} (2022), 617--628. 
\filbreak

\bibitem{Dipasqualeetal2017} D.~Dipasquale, G.~Sarego, M.~Zaccariotto, and U.~Galvanetto, 
{A discussion on failure criteria for ordinary state-based peridynamics}, 
\textit{Eng. Fract. Mech.}
\textbf{186} (2017), 378--398.  
\filbreak


\bibitem{Du-et-al-2018} 
Q.~Du, Y.~Tao, and X.~Tian, 
{A peridynamic model of fracture mechanics with bond-breaking}, 
\textit{J. Elasticity}
\textbf{132} 
(2018), 197--218.
\filbreak

\bibitem{Emmrich-Puhst-2016} 
E. Emmrich and D. Puhst,  
{A short note on modeling damage in peridynamics}, 
\textit{J. Elasticity}
\textbf{123} 
(2016), 245--252.
\filbreak


\bibitem{FosterSillingChen} J.~T.~Foster, S.~A.~Silling, and W.~Chen,
{An energy based criterion for use with peridynamic states},
\textit{Int. J. Multiscale Comput. Eng.}
\textbf{9} (2011), 675--687.  
\filbreak

\bibitem{Gerstle-et-al-2007} W.~Gerstle, N.~Sau, and S.~Silling, 
{Peridynamic modeling of concrete structures},
\textit{Nucl. Eng. Des.} 
\textbf{237} (2007), 1250--1258. 
\filbreak

 \bibitem{Gerstle} W.~Gerstle, N.~Sau, and S.~Silling,
{Peridynamic modeling of plain and reinforced concrete structures},
\textit{In: 18th International Conference on Structural Mechanics in Reactor
Technology (SMiRT 18)}, 
(2005),  54--68. 
\filbreak

\bibitem{Ghajari-et-al-2014} 
M.~Ghajari, L.~Iannucci, and P.~Curtis,   
{A peridynamic material model for the analysis of dynamic crack propagation in orthotropic media},
\textit{Comput. Methods Appl. Mech. Engrg.} 
\textbf{276} 
(2014), 431--452.
\filbreak

\bibitem{Bobaru-Ha-2010} Y.~D.~Ha, F.~Bobaru,
{Studies of dynamic crack propagation and crack branching with peridynamics.}, 
\textit{Int. J. Fract.}
\textbf{162} (2010), 229--244. 


\bibitem{Hattori-et-al-2018} 
G.~Hattori, J.~Trevelyan, and W.~M.~Coombs,  
{A non-ordinary state-based peridynamics framework for anisotropic materials}, 
\textit{Comput. Methods Appl. Mech. Engrg.},
\textbf{339} 
(2018), 416--442.
\filbreak

\bibitem{Ignatiev-et-al-2020} 
M.~Ignatiev, N.~Kazarinov, and Y.~Petrov, 
{Peridynamic modelling of the dynamic crack initiation}, 
\textit{Procedia Struct. Integr.},
\textbf{28}  
(2020), 1650--1654.
\filbreak


\bibitem{Ignatev-Oterkus-2025} 
M.~Ignatev and E.~Oterkus, 
{Remote stress fracture criterion in peridynamics}, 
\textit{Engineering with Computers},
\textbf{41} 
(2025), 3169--3192.
\filbreak


\bibitem{Ignatiev-et-al-2021} 
M.~O.~Ignatiev, Y.~V.~Petrov, and N.~A.~Kazarinov, 
{Simulation of dynamic crack initiation based on the peridynamic numerical model and the incubation time criterion}, 
\textit{Tech. Phys.},
\textbf{66}(3)  
(2021), 422--425.
\filbreak

\bibitem{Ignatiev-et-al-2023} 
M.~O.~Ignatiev, Y.~V.~Petrov, N.~A.~Kazarinov, and E.~Oterkus, 
{Peridynamic formulation of the mean stress and incubation time fracture criteria and its correspondence to the classical Griffith's approach}, 
\textit{Continuum Mech. Thermodyn.},
\textbf{35} 
(2023), 1523--1534.
\filbreak

\bibitem{Karpenko-et-al-2020} O.~Karpenko, S.~Oterkus, and E.~Oterkus,  
{An in-depth investigation of critical stretch based failure criterion in ordinary state-based peridynamics},
\textit{Int. J. Fract.} 
\textbf{226} (2020), 97--119. 
\filbreak

\bibitem{Li-et-al-2025} 
W.-J.~Li, Q.-Z.~Zhu, Y.-L.~Du, and J.-F.~Shao, 
{An extended bond-based peridynamic model with bond transverse deformation effects for quasi-brittle rocks}, 
\textit{Int. J. Rock Mech. Min. Sci.},
\textbf{190} 
(2025), 106099.
\filbreak

\bibitem{Lipton-2016} 
R.~Lipton,  
{Cohesive dynamics and brittle fracture}, 
\textit{J. Elasticity}
\textbf{124} 
(2016), 143--191.
\filbreak

\bibitem{Lipton-2014} R.~Lipton,
{Dynamic brittle fracture as a small horizon limit of peridynamics},
\textit{J. Elasticity},
\textbf{117} (2014), 21--50. 
\filbreak

\bibitem{Lipton-et-al-2018} 
R.~Lipton, E.~Said, and P.~Jha,  
{Free damage propagation with memory}, 
\textit{J. Elasticity}
\textbf{133} 
(2018), 129--153.
\filbreak

\bibitem{Liu-et-al-2022} W.~K.~Liu, S.~Li, and H.~S.~Park,   
{Eighty years of the finite element method: birth, evolution, and future},
\textit{Arch. Comput. Methods Eng.} 
\textbf{29} 
(2022), 4431--4453.
\filbreak

\bibitem{Madenci-et-al-2021} 
E.~Madenci, A.~Barut, and N.~Phan, 
{Bond‑based peridynamics with stretch and rotation kinematics for opening and shearing modes of fracture},
\textit{J. Peridyn. Nonlocal Model.},
\textbf{3} 
(2021), 211--254.
\filbreak

\bibitem{Madenci-Oterkus-2016} E.~Madenci  and S.~Oterkus,
{Ordinary state-based peridynamics for plastic deformation according to von Mises yield criteria with isotropic hardening},
\textit{J. Mech. Phys. Solids} 
\textbf{86} (2016), 192--219. 
\filbreak

\bibitem{Madenci-Oterkus-2014} E.~Madenci and E.~Oterkus,  
{Peridynamic Theory and Its Applications},
Springer, New York, 2014.
\filbreak

 
 \bibitem{O'Grady-Foster-2014} 
 J. O’Grady and J.~Foster, 
{Peridynamic beams: A non-ordinary, state-based model}, 
\textit{Int. J. Solids Struct.}  
 \textbf{51} (2014): 3177–3183.
 \filbreak

\bibitem{Oterkus-Madenci-2012} E.~Oterkus and E.~Madenci, 
{Peridynamic analysis of fiber-reinforced composite materials},
\textit{J. Mech. Mater. Struct.} 
\textbf{7} (2012), 45--84. 
\filbreak

\bibitem{Oterkus-Madenci-2015} S.~Oterkus and E.~Madenci, 
{Peridynamics for antiplane shear and torsional deformations},
\textit{J. Mech. Mater. Struct.} 
\textbf{10}(2) (2015), 167--193. 
\filbreak

\bibitem{Ravi-ChandarKnauss} K.~Ravi-Chandar and W.~G.~Knauss, 
{An experimental investigation into dynamic fracture: {IV.} On the interaction of stress waves with propagating cracks},
\textit{Int. J. Fract.}
\textbf{26} (1984), 189--200. 
\filbreak


\bibitem{Ren-et-al-2018} 
B.~Ren, C.~T.~Wu, P.~Seleson, D.~Zeng, and D.~Lyu, 
{A peridynamic failure analysis of fiber-reinforced composite laminates using finite element discontinuous Galerkin
approximations},
\textit{Int. J. Fract.} 
\textbf{214} 
(2018), 49--68.
\filbreak

\bibitem{Ren-et-al-2022} 
B.~Ren, C.~T.~Wu, P.~Seleson, D.~Zeng, M.~Nishi, and M.~Pasetto, 
{An FEM‑based peridynamic model for failure analysis of unidirectional fiber‑reinforced laminates},
\textit{J. Peridyn. Nonlocal Model.} 
\textbf{4} 
(2022), 139--158.
\filbreak

\bibitem{Ren-et-al-shear}
H.~Ren, X.~Zhuang, and T.~Rabczuk,
{A new peridynamic formulation with shear deformation for elastic solid},
\textit{J. Micromech. Mol. Phys.}
\textbf{1} (2016), 1650016.
\filbreak

\bibitem{Seleson} P.~D.~Seleson,
{Peridynamic multiscale models for the mechanics of materials: constitutive relations, upscaling from atomistic systems, and interface problems},
Thesis (Ph.D.)-The Florida State University (2010), 143pp.
\filbreak

\bibitem{SelesonParks} P.~Seleson and M.~Parks,
{On the role of the influence function in the peridynamic theory},
\textit{Int. J. Multiscale Comput. Eng.}
\textbf{9} (2011), 689--706.
\filbreak

\bibitem{SelesonEtAl} P.~Seleson, M.~Pasetto, Y.~John, J.~Trageser, and S.~Temple Reeve,
{PDMATLAB2D: a peridynamics MATLAB two-dimensional code},
\textit{J. Peridyn. Nonlocal Model.}
\textbf{6} (2024), 149--205.
\filbreak


\bibitem{Shou-et-al-2019} 
Y.~Shou, X.~Zhou, and F.~Berto, 
{3D numerical simulation of initiation, propagation and coalescence of cracks using the extended non-ordinary state-based peridynamics}, 
\textit{Theoretical and Applied Fracture Mechanics} 
 \textbf{101} (2019) ,254–268
 \filbreak

\bibitem{Silling}  S.~A.~Silling,
{Reformulation of elasticity theory for discontinuities and long-range forces},
\textit{J. Mech. Phys. Solids}
\textbf{48} (2000), 175--209.
\filbreak

\bibitem{SillingAskari} S.~A.~Silling and E.~Askari,
{A meshfree method based on the peridynamic model of solid mechanics},
\textit{Comput. Struct.}
\textbf{83} (2005), 1526--1535.
\filbreak

\bibitem{Sillingetal2007} S.~A.~Silling, M.~Epton, O.~Weckner, J.~Xu, and E.~Askari,  
{Peridynamic states and constitutive modeling},
\textit{J. Elasticity}
\textbf{88} (2007), 151--184.
\filbreak

\bibitem{Silling-Lehoucq-2008} S.~A.~Silling and R.~B.~Lehoucq,
{Convergence of peridynamics to classical elasticity theory},
\textit{J. Elasticity},
\textbf{93} (2008), 13--37. 
\filbreak

\bibitem{Song-et-al-2025} 
Z.~Song, G.~Wang, D.~Lu, X.~Zhou, T.~Rabczuk, and X.~Du, 
{A modeling method of failure for concrete considering the stress state in peridynamics},
\textit{Int. J. Solids Struct.},
\textbf{320} 
(2025), 113536.
\filbreak


\bibitem{Trageser-Seleson-2020} J.~Trageser and P.~Seleson,
{Bond-based peridynamics: a tale of two Poisson's ratios},
\textit{J. Peridyn. Nonlocal Model.},
\textbf{2} (2020), 278--288. 
\filbreak

\bibitem{Tupek-etal-2013} M.~R.~Tupek, J.~J.~Rimoli, and R.~Radovitzky, 
{An approach for incorporating classical continuum damage models in state-based peridynamics},
\textit{Comput. Methods Appl. Mech. Engrg.} 
\textbf{263} (2013), 20--26. 
\filbreak

\bibitem{Wang-et-al-2018}
Y.~Wang, X.~Zhou,~Y.~Wang, and Y.~Shou,
{A 3-D conjugated bond-pair-based peridynamic formulation for initiation and propagation of cracks in brittle solids},
\textit{Int. J. Solids Struct.},
\textbf{134} (2018), 89--115.
\filbreak

\bibitem{Wang-et-al-2023} 
W.~Wang, Q.-Z.~Zhu, T.~Ni, B.~Vazic, P.~Newell, and S.~P.~A.~Bordas, 
{An extended peridynamic model equipped with a new bond-breakage criterion for mixed-mode fracture in rock-like materials},
\textit{Comput. Methods Appl. Mech. Engrg.} 
\textbf{411} 
(2023), 116016.
\filbreak

\bibitem{Warren-et-al-2009} T.~L.~Warren, S.~A.~Silling, A.~Askari, O.~Weckner, M.~A.~Epton, and J.~Xu, 
{A non-ordinary state-based peridynamic method to model solid material deformation and fracture},
\textit{Int. J. Solids Struct.},
\textbf{46} (2009), 1186--1195. 
\filbreak

\bibitem{Yang-et-al-2024} 
Z.~Yang, H.~Wang, and M.~Sharma,  
{A peridynamic compensated critical energy density criterion for mixed-mode fracturing in quasi-brittle materials}, 
\textit{Theor. Appl. Fract. Mech.},
\textbf{134} 
(2024), 104736.
\filbreak

\bibitem{Zhang-Qiao-2019} Y.~Zhang and P.~Qiao, 
{A new bond failure criterion for ordinary state-based peridynamic mode II fracture analysis},
\textit{Int. J. Fract.} 
\textbf{215} (2019), 105--128. 
\filbreak

\bibitem{Zhang-Qiao-2018} 
H.~Zhang and P.~Qiao, 
{A state-based peridynamic model for quantitative fracture analysis},
\textit{Int. J. Fract.} 
\textbf{211} 
(2018), 217--235.
\filbreak


\bibitem{Zhou-Wang-2016} 
X.-P.~Zhou and Y.-T.~Wang, 
{Numerical simulation of crack propagation and coalescence in pre-cracked rock-like Brazilian disks using the non-ordinary state-based peridynamics}, 
\textit{Int. J. Rock Mech. Min. Sci.},
\textbf{89} 
(2016), 235--249.
\filbreak



\end{thebibliography}
\end{document}